\def\Rank{\mathop{\rm Rank}\limits}
\def\GF{\mathop{\sf GF}\limits}
\def\tran{\mathrm{T}}
\def\linsp{{\mathsf{span}}}
\def\X{{\mathbf{X}}}
\def\Z{{\mathbf{Z}}}
\def\vv{{\mathbf{v}}}
\def\vv{{\mathbf{v}}}
\def\w{{\mathbf{w}}}
\def\0{{\mathbf{0}}}
\def\1{{\mathbf{1}}}
\def\prop{{\mathsf{Prob}}}
\def\EE{{\mathsf{E}}}
\def\lmat{\left(\begin{matrix}}
\def\rmat{\end{matrix}\right)}
\def\eqref#1{(\ref{#1})}
\newenvironment{thmproof}[1]
{\noindent\hspace{2em}{\it #1 }}
{\hspace*{\fill}~\QED\par\endtrivlist\unskip}
\newtheorem{theorem}{Theorem}
\newtheorem{definition}{Definition}
\newtheorem{lemma}{Lemma}
\newtheorem{corollary}{Corollary}
\newtheorem{proposition}{Proposition}
\def\Co{{\mathsf{Co}}}
\def\BibTeX{{\rm B\kern-.05em{\sc i\kern-.025em b}\kern-.08em
    T\kern-.1667em\lower.7ex\hbox{E}\kern-.125emX}}
\def\algtop#1{\vspace{.4cm}\hrule\vspace{.035cm}\hrule\vspace{.1cm}\noindent{\S~\sc #1}\par}
\def\algbot{\hrule\vspace{.035cm}\hrule\vspace{.4cm}}
\begin{document}

\title{Capacity of 1-to-$K$ Broadcast Packet Erasure Channels with Channel Output Feedback}

\author{\IEEEauthorblockN{Chih-Chun Wang}
\IEEEauthorblockA{
Center of Wireless Systems and Applications (CWSA)\\
School of Electrical and Computer Engineering, Purdue University, USA}
}

%

\maketitle

\begin{abstract}
This paper focuses on the 1-to-$K$ broadcast packet erasure channel (PEC), which is a generalization of the 
broadcast binary erasure channel from the binary symbol to that of arbitrary finite fields $\GF(q)$ with sufficiently large $q$. We consider the setting in which the source node has instant feedback of the channel outputs of the $K$ receivers after each transmission. The capacity region of the 1-to-$K$ PEC with COF was previously known only for the case $K=2$.  Such a setting directly models {network coded packet transmission in the downlink direction} with integrated feedback mechanisms (such as Automatic Repeat reQuest (ARQ)).



 The main results of this paper are:  (i) The capacity region for general 1-to-3 broadcast PECs, and (ii) The capacity region for two types of 1-to-$K$ broadcast PECs:
the {symmetric} PECs, and the {spatially independent} PECs with {one-sided fairness constraints}. This paper also develops
(iii) A pair of outer and inner bounds of the capacity region for arbitrary 1-to-$K$ broadcast PECs, which can be easily evaluated by any linear programming solver. The proposed inner bound is proven by a new class of intersession network coding schemes, termed the {\em packet evolution} schemes, which is based on the concept of {\em code alignment} in $\GF(q)$ that is in parallel with the interference alignment techniques for the Euclidean space. Extensive numerical experiments show that the outer and inner bounds meet for almost all broadcast PECs encountered in practical scenarios and thus effectively bracket the  capacity of general 1-to-$K$ broadcast PECs with COF.


%
%
\end{abstract}

\begin{keywords} Network coding, packet erasure channels, broadcast capacity, channel output feedback, network code alignment.
\end{keywords}

\section{Introduction}

Broadcast channels have been actively studied since the inception of network information theory. Although the broadcast capacity region remains unknown for general channel models, significant progress has been made in various sub-directions  (see \cite{Cover98} for a tutorial paper), including but not limited to the degraded broadcast channel models \cite{Bergmans73}, the 2-user capacity with degraded message sets \cite{KornerMarton77} or with message side information \cite{Wu07}. Motivated by wireless broadcast communications, 
the Gaussian broadcast channel (GBC) \cite{WeingartenSteinbergShamai06} is among the most widely studied broadcast channel models. 

In the last decade, the new network coding concept has emerged \cite{LiYeungCai03}, which focuses on achieving the capacity of a communication network. More explicitly, the network-coding-based approaches generally model each hop of a packet-based communication network by a {\em packet erasure channel} (PEC) instead of the classic Gaussian channel. Such simple abstraction allows us to explore the information-theoretic capacity of a much larger network with mathematical rigor and also sheds new insights on the network effects of a communication system. One such example is that when all destinations are interested in the same set of packets, the capacity of any arbitrarily large, multi-hop PEC network can be characterized by the corresponding min-cut/max-flow values \cite{LiYeungCai03,DanaGowaikarPalankiHassibiEffros06}. 
Another example is the broadcast channel capacity with message side information. Unlike the existing GBC-based results that are limited to the simplest 2-user scenario \cite{Wu07}, the capacity region for 1-to-$K$ broadcast PECs with message side information has been derived for $K=3$ and tightly bounded for general $K$ values \cite{Wang10a,WangKhreishahShroff09}.\footnote{The results of 1-to-$K$ broadcast PECs with message side information \cite{Wang10a,WangKhreishahShroff09} is related to the capacity of the ``XOR-in-the-air" scheme \cite{KattiRahulHuKatabiMedardCrowcroft06} in a wireless network.} 
In addition to providing new insights on network communications, this simple PEC-based abstraction in network coding also accelerates the transition from theory to practice.
Many of the capacity-achieving {\em network codes} \cite{HoMedardKoetterKargerEffrosShiLeong06} have since been implemented for either the wireline \cite{ChouWuJain03} or the wireless multi-hop networks \cite{KattiRahulHuKatabiMedardCrowcroft06,KoutsonikolasWangHu10}.

Motivated by the state-of-the-art wireless network coding protocols and the corresponding applications, this paper studies the memoryless 1-to-$K$ broadcast PEC with Channel Output Feedback (COF). Namely, a single source node sends out a stream of packets wirelessly, which carries information of $K$ independent downlink data sessions, one for each receiver $d_k$, $k=1,\cdots, K$, respectively. Due to the randomness of the underlying wireless channel condition, which varies independently for each time slot, each transmitted packet may or may not be heard by a receiver $d_k$. After packet transmission, each $d_k$ then informs the source its own channel output by sending back the ACKnowledgement (ACK) packets periodically (batch feedback) or after each time slot (per-packet instant feedback) \cite{XueYang08}.
\cite{GeorgiadisTassiulas09} derives the capacity region of the memoryless 1-to-2 broadcast PEC with COF. The results show that COF strictly improves the capacity of the memoryless 1-to-2 broadcast PEC, which is in sharp contrast with the classic result that feedback does not increase the capacity for any memoryless 1-to-1 channel. \cite{GeorgiadisTassiulas09} can also be viewed as a mirroring result to the achievability results of GBCs with COF \cite{OzarowCheong84}.  It is worth noting that other than increasing the achievable throughput, COF can also be used for queue and delay management \cite{LiWangLin10,SundararajanShahMedard07} and for rate-control in a wireless network coded system \cite{KoutsonikolasWangHu10}.

The main contribution of this work includes: (i) The capacity region for general 1-to-3 broadcast PECs with COF; 
(ii) The capacity region for two types of 1-to-$K$ broadcast PECs with COF:
the {\em symmetric} PECs, and the {\em spatially independent} PECs with {\em one-sided fairness constraints}; and
(iii) A pair of outer and inner bounds of the capacity region for general 1-to-$K$ broadcast PECs with COF, which can be easily evaluated by any linear programming solver. Extensive numerical experiments show that the outer and inner bounds meet for almost all broadcast PECs encountered in practical scenarios and thus effectively bracket the exact capacity region.

The capacity outer bound in this paper is derived by generalizing the degraded channel argument first proposed in \cite{OzarowCheong84}. For the achievability part of (i), (ii), and (iii), we devise a new class of inter-session network coded schemes, termed the {\em packet evolution method}. The packet evolution method is based on a novel concept of {\em network code alignment}, which is the PEC-counterpart of the interference alignment method originally proposed for Gaussian interference channels \cite{CadambeJafar08,DasVishwanathJafarMarkopoulou10}. It is worth noting that in addition to the random PEC model in this paper, there are other promising channel models that also greatly facilitate capacity analysis for larger networks. One such example is the deterministic wireless channel model proposed in \cite{AvestimehrDiggaviTse07}, which can also be viewed as a deterministic degraded binary erasure channel.

The rest of this paper is organized as follows. Section~\ref{sec:setting} contains the basic setting as well as the detailed comparison to the existing results in \cite{GeorgiadisTassiulas09,LarssonJohansson06,RoznerIyerMehtaQiuJafry07} via an illustrating example. Section~\ref{sec:main} describes the main theorems of this paper and the proof of the converse theorem. In particular, Section~\ref{subsec:general} focuses on the capacity results for arbitrary broadcast PEC parameters while Section~\ref{subsec:special} considers two special types of broadcast PECs: the symmetric and the spatially independent PECs, respectively.
 Section~\ref{sec:achievability} introduces
a new class of network coding schemes, termed the packet evolution (PE) method. Based on the PE method, Section~\ref{sec:pf-of-achievability}
outlines the proofs of the achievability results in Section~\ref{sec:main}. Some theoretic implications and discussions are included in Section~\ref{sec:discussion}. Section~\ref{sec:conclusion} concludes this paper.


\section{Problem Setting \& Existing Results\label{sec:setting}}

\subsection{The Memoryless 1-to-$K$ Broadcast Packet Erasure Channel} 
For any positive integer $K$, we use $[K]\stackrel{\Delta}{=}\{1,2,\cdots,K\}$ to denote the set of integers from 1 to $K$, and use $2^{[K]}$ to denote the collection of all subsets of $[K]$.

Consider a 1-to-$K$ broadcast PEC from a single source $s$ to $K$ destinations $d_k$, $k\in [K]$. For each channel usage, the 1-to-$K$ broadcast PEC takes an input symbol $Y\in\GF(q)$ from $s$ and outputs a $K$-dimensional vector $\Z\stackrel{\Delta}{=}(Z_1,\cdots, Z_K)\in (\{Y\}\cup \{*\})^K$,  where the $k$-th coordinate $Z_k$ being ``$*$" denotes that the transmitted symbol $Y$ does not reach the $k$-th receiver $d_k$ (thus being erased). We also assume that there is no other type of noise, 
i.e.,
the individual output is either equal to the input $Y$ or an erasure ``$*$." The {\em success probabilities} of a 1-to-$K$ PEC are described by $2^K$ non-negative parameters: $p_{S\overline{[K]\backslash S}}$ for all $S\in 2^{[K]}$ such that $\sum_{S\in 2^{[K]}}p_{S\overline{[K]\backslash S}}=1$ and for all $y\in \GF(q)$, 
\begin{align}
\prop\left(\left.\{k\in[K]:Z_k=y\}=S\right|Y=y\right)=p_{S\overline{[K]\backslash S}}.\nonumber
\end{align}

That is, $p_{S\overline{[K]\backslash S}}$ denotes the probability that the transmitted symbol $Y$ is received {\em by and only by} the receivers $\{d_k:k\in S\}$. In addition to the joint probability mass function $p_{S\overline{[K]\backslash S}}$ of the success events, the following notation will be used frequently in this work. For all $S\in 2^{[K]}$, we define
\begin{align}
p_{\cup S}=\sum_{\forall S'\in 2^{[K]}: S'\cap S\neq \emptyset} p_{S'\overline{[K]\backslash S'}}.\label{eq:pS-def}
\end{align}
That is, $p_{\cup S}$ is the probability that {\em at least one of the receiver $d_k$ in $S$} successfully receives the transmitted symbol $Y$. For example, when $K=2$,
\begin{align}
p_{\cup\{1,2\}}=p_{\{1\}\overline{\{2\}}}+p_{\{2\}\overline{\{1\}}}+p_{\{1,2\}\overline{\emptyset}}\nonumber
\end{align} is the probability that at least one of $d_1$ and $d_2$ receives the transmitted symbol $Y$. We sometimes use $p_{k}$ as shorthand for $p_{\cup \{k\}}$, which is the marginal probability that the $k$-th receiver $d_k$ receives $Y$ successfully.

We can repeatedly use the channel for $n$ time slots and let $Y(t)$ and $\Z(t)$ denote the input and output for the $t$-th time slot. We assume that the 1-to-$K$ broadcast PEC is memoryless and time-invariant, i.e., for any given function $y(\cdot):[n]\mapsto \GF(q)$,
\begin{align}
\prop\left(\forall t\in [n], \{k:Z_k(t)=y(t)\}=S(t)\right.\hspace{2.5cm}\nonumber\\
\left|\forall t\in [n],Y(t)=y(t)\right)=\prod_{t=1}^np_{S(t)\overline{[K]\backslash S(t)}}.\nonumber
\end{align}
Note that this setting allows the success events among different receivers to be dependent, also defined as {\em spatial dependence}. For example, when two logical receivers $d_{k_1}$ and $d_{k_2}$ are situated in the same physical node, we simply set the $p_{S\overline{[K]\backslash S}}$ parameters to allow perfect correlation between the success events of $d_{k_1}$ and $d_{k_2}$. Throughout this paper, we consider memoryless 1-to-$K$ broadcast PECs that may or may not be spatially dependent.


\subsection{Broadcast PEC Capacity with Channel Output Feedback}

We consider the following broadcast scenario from $s$ to $\{d_k:k\in[K]\}$. Assume slotted transmission. Source $s$ is allowed to use the 1-to-$K$ PEC exactly $n$ times and would like to carry information for $K$ independent downlink data sessions, one for each $d_k$, respectively. For each $k\in[K]$, the $k$-th session (from $s$ to $d_k$) contains $nR_k$ information symbols $\X_k\stackrel{\Delta}{=}\{X_{k,j}\in\GF(q), j\in [nR_k]\}$, where $R_k$ is the data rate for the $(s,d_k)$ session. All the information symbols $X_{k,j}$ for all $k\in[K]$ and $j\in[nR_k]$ are independently and uniformly distributed in $\GF(q)$.

We consider the setting with instant channel output feedback (COF). That is, for the $t$-th time slot, source $s$ sends out a symbol
\begin{align}
Y(t)=
f_t\left(\{\X_k:\forall k\in [K]\}, \{\Z(\tau):\tau\in[t-1]\}\right),\nonumber
\end{align}
which is a function $f_t(\cdot)$ based on the information symbols $\{X_{k,j}\}$ and the COF $\{\Z(\tau):\tau\in[t-1]\}$ of the previous transmissions. In the end of the $n$-th time slot, each $d_k$ outputs the decoded symbols
\begin{align}
\hat{\X}_k\stackrel{\Delta}{=}\{\hat{X}_{k,j}:j\in[nR_k]\}=g_k(\{Z_{k}(t):\forall t\in[n]\}),\nonumber
\end{align}
where $g_k(\cdot)$ is the decoding function of $d_k$ based on the corresponding observation $Z_k(t)$ for $t\in[n]$. Note that we assume that the PEC channel parameters $\left\{p_{S\overline{[K]\backslash S}}:\forall S\in 2^{[K]}\right\}$ are available at $s$ before transmission. See Fig.~\ref{fig:B-PEC} for illustration.

\begin{figure}
\centering
\includegraphics[width=7cm]{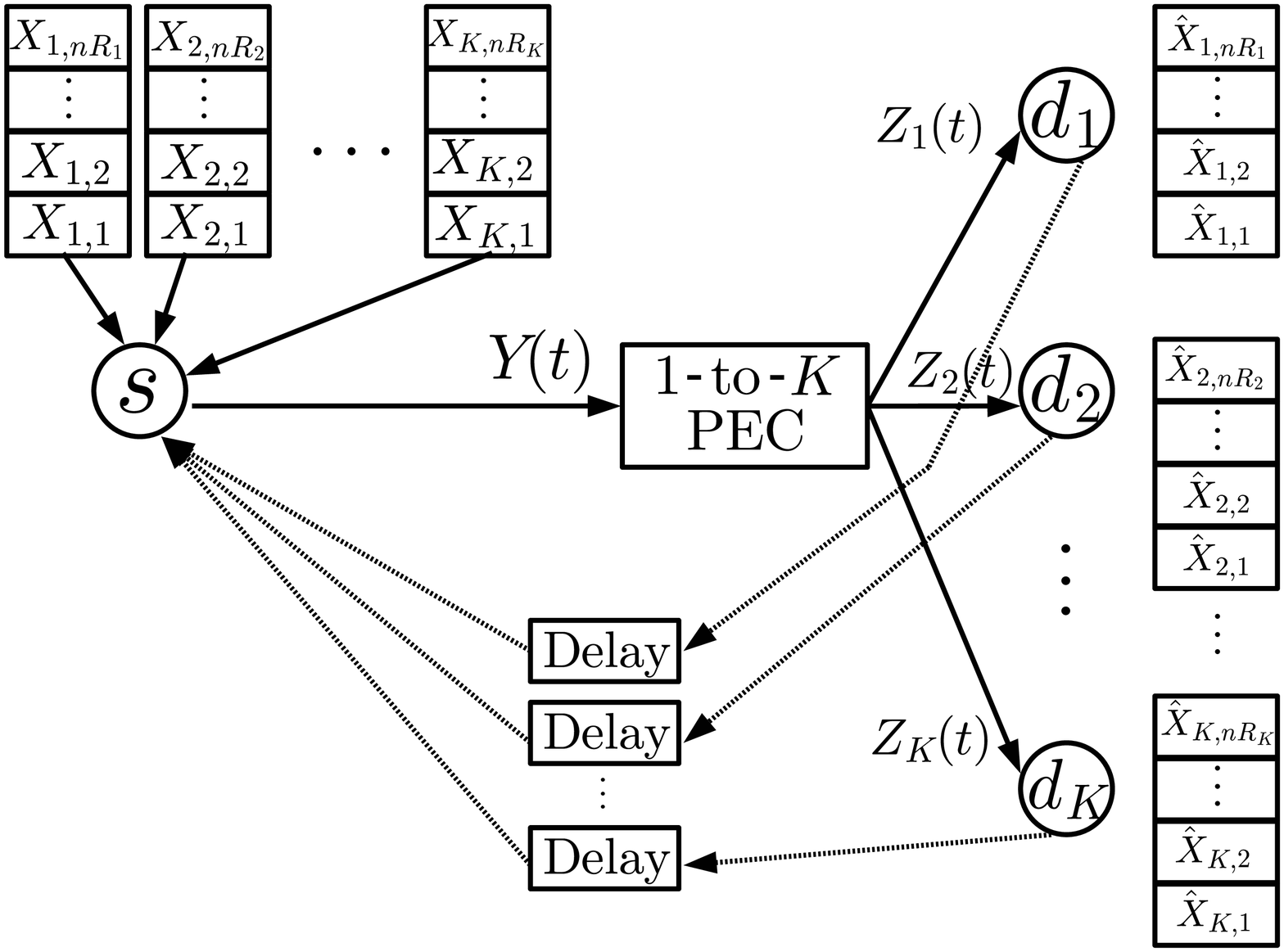}
\caption{Illustration of a 1-to-$K$ broadcast PEC with COF. \label{fig:B-PEC}}
\end{figure}

We now define the achievable rate of a 1-to-$K$ broadcast PEC with COF.

\begin{definition}A rate vector $(R_1,\cdots, R_K)$ is achievable if for any $\epsilon>0$, there exist sufficiently large $n$ and sufficiently large underlying finite field $\GF(q)$ such that
\begin{align}
\forall k\in[K],~\prop\left( \hat{\X}_k\neq \X_k\right)<\epsilon.\nonumber
\end{align}
\end{definition}
\begin{definition}The capacity region of a 1-to-$K$ broadcast PEC with COF is defined as the closure of all achievable rate vectors $(R_1,\cdots, R_K)$.
\end{definition}

\subsection{Existing Results\label{subsec:existing}}
The capacity of  1-to-2 broadcast PECs with COF has been characterized in \cite{GeorgiadisTassiulas09}:
\begin{theorem}[Theorem~3 in \cite{GeorgiadisTassiulas09}] The capacity region $(R_1,R_2)$ of a 1-to-2 broadcast PEC with COF is described by
\begin{align}
\begin{cases}\frac{R_1}{p_{1}}+\frac{R_2}{p_{\cup\{1,2\}}}\leq 1\\
\frac{R_1}{p_{\cup\{1,2\}}}+\frac{R_2}{p_{2}}\leq 1
\end{cases}.\label{eq:2cap}\end{align}
\end{theorem}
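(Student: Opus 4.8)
The plan is to prove the two inequalities as a capacity outer bound (converse) and then, for every point on the boundary of their intersection, to exhibit a channel-output-feedback network code that attains it (achievability); the capacity region is closed and, by time sharing, convex, so it suffices to reach the corners and faces.

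For the converse I would run a genie-aided degradation argument in the spirit of \cite{OzarowCheong84}. For the first inequality, enhance the system by handing receiver $d_2$, before it decodes, the entire received sequence $Z_1(1),\dots,Z_1(n)$ of $d_1$; this cannot shrink the capacity region. In the enhanced system $d_2$'s effective per-slot observation equals $Y(t)$ exactly when $d_1$ \emph{or} $d_2$ receives it --- equivalently, an erasure channel with success probability $p_{\cup\{1,2\}}$ --- and $d_1$'s reception event is always a subset of $d_2$'s, so the enhanced broadcast PEC is \emph{physically degraded} with $d_1$ the weaker user. Two standard facts then close the argument: (a) channel-output feedback does not enlarge the capacity region of a physically degraded broadcast channel, so the enhanced region with COF coincides with the no-feedback one; and (b) superposition coding achieves the capacity region of a degraded broadcast channel. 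It remains to evaluate the single-letter superposition region for this degraded broadcast \emph{erasure} channel, which is elementary: splitting the $\GF(q)$ input into a ``cloud'' of entropy $(1-\theta)\log q$ carrying $d_1$'s message and a ``satellite'' of entropy $\theta\log q$ carrying $d_2$'s message, and using that for an erasure channel a residual mutual information equals (success probability)$\times$(residual input entropy), gives $R_1\le p_1(1-\theta)$ and $R_2\le p_{\cup\{1,2\}}\,\theta$ (after dividing by $\log q$ and sending $q\to\infty$); eliminating $\theta\in[0,1]$ yields exactly $\frac{R_1}{p_1}+\frac{R_2}{p_{\cup\{1,2\}}}\le 1$. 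Interchanging the roles of $d_1$ and $d_2$ gives the twin bound $\frac{R_1}{p_{\cup\{1,2\}}}+\frac{R_2}{p_2}\le 1$, and the capacity region lies in the intersection.

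For achievability, fix a point on the face $\frac{R_1}{p_1}+\frac{R_2}{p_{\cup\{1,2\}}}=1$ (the other face is symmetric, the interior follows by time sharing). I would use a multi-phase coded scheme driven by the COF. An initial phase transmits uniformly random $\GF(q)$-linear combinations of the source symbols $\X_1$; by the law of large numbers $d_1$ gathers about (phase length)$\cdot p_1$ of them, which for large $q$ are full rank, so a phase of length $\approx nR_1/p_1$ lets $d_1$ recover $\X_1$, while the feedback tells the source precisely which directions of $\X_1$ each receiver has, and in particular $d_2$ has overheard a (predictable) subspace of $\X_1$; a symmetric phase serves $d_2$ from combinations of $\X_2$, during which $d_1$ overhears a subspace of $\X_2$. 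A final \emph{alignment} phase transmits cross-session coded symbols --- each one a combination that a receiver still needs and the other can cancel using what it overheard (in the simplest picture, XORs of a still-missing $\X_1$ symbol with an already-overheard $\X_2$ symbol, re-paired after each observed erasure pattern using the instant feedback) --- so that a single transmission advances both decoders at once. Choosing the residual deficits after each phase to balance and computing the expected phase lengths via the LLN shows the phases sum to $n$ exactly on the stated face, and a routine rank/typicality argument over $\GF(q)$ with $q,n\to\infty$ drives the error probabilities to zero. (This $K=2$ achievability can alternatively be cited from \cite{GeorgiadisTassiulas09} or obtained as the specialization of the packet-evolution scheme developed later in this paper.)

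The step I expect to be the main obstacle is twofold. In the converse, the subtle point is fact (a) --- that feedback is useless once the channel has been degraded; invoking the known feedback-capacity result for degraded broadcast channels disposes of it, but a self-contained proof requires redoing the Bergmans-type single-letterization in the presence of feedback, i.e.\ choosing the auxiliary $U_i$ from the past outputs and messages and controlling the Csisz\'ar-sum-type cross terms. On the achievability side, the real work is the design and analysis of the alignment phase: the cross-session code and its re-pairing rule must be arranged so that neither receiver spends any received dimension disentangling the other session's symbols --- exactly the code-alignment effect the paper isolates --- and one must check that the phase-length accounting closes precisely at the capacity boundary rather than at the looser $\frac{R_1}{p_1}+\frac{R_2}{p_2}\le 1$ that a naive two-phase retransmission scheme yields.
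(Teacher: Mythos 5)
Your proposal is correct and follows essentially the same route as the paper: the converse is the auxiliary-pipe/physically-degraded-channel argument (Proposition~\ref{prop:outer} specialized to $K=2$, invoking \cite{Gamal78} for the uselessness of feedback on degraded broadcast channels), and the achievability is the feedback-driven two-phase overhear-then-mix scheme of \cite{GeorgiadisTassiulas09} described in Section~\ref{subsec:existing}. The only caution is in your phase-length bookkeeping: the uncoded phases should be truncated at $\approx nR_k/p_{\cup\{1,2\}}$ (each packet sent only until \emph{some} receiver gets it), with the residual deficits cleared in the coded phase, which is exactly the balancing you allude to and which yields the tight face $\frac{R_1}{p_1}+\frac{R_2}{p_{\cup\{1,2\}}}=1$ rather than the looser $\frac{R_1}{p_1}+\frac{R_2}{p_2}\le 1$.
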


One scheme that achieves the above capacity region in \eqref{eq:2cap} is the 2-phase approach in \cite{GeorgiadisTassiulas09}. That is, for any $(R_1,R_2)$ in the interior of \eqref{eq:2cap}, perform the following coding operations.

In Phase~1, the source $s$ sends out uncoded information packets $X_{1,j_1}$ and $X_{2,j_2}$ for all $j_1\in[nR_1]$ and $j_2\in[nR_2]$ until each packet is received by at least one receiver. Those $X_{1,j_1}$ packets that are received by $d_1$ have already reached their intended receiver and thus will not be retransmitted in the second phase. Those $X_{1,j_1}$ packets that are received by $d_2$ but not by $d_1$ need to be retransmitted in the second phase, and are thus stored in a separate queue $Q_{1;2\overline{1}}$. Symmetrically, the $X_{2,j_2}$ packets that are received by $d_1$ but not by $d_2$ need to be retransmitted, and are stored in another queue $Q_{2;1\overline{2}}$.  Since those ``overheard" packets in queues $Q_{1;2\overline{1}}$ and $Q_{2;1\overline{2}}$ are perfect candidates for intersession network coding \cite{KattiRahulHuKatabiMedardCrowcroft06}, they can be linearly mixed together in Phase~2. Each single coded packet in Phase~2 can now serve both $d_1$ and $d_2$ simultaneously. The intersession network coding gain in Phase~2 allows us to achieve the capacity region in \eqref{eq:2cap}.

Based on the same logic, \cite{LarssonJohansson06} derives an achievability region for 1-to-$K$ broadcast PECs with COF under a {\em perfectly symmetric setting.}  The main idea can be viewed as an extension of the above 2-phase approach.  That is, for Phase~1, the source $s$ sends out all $X_{k,j}$, $\forall k\in[K], j\in[nR_k]$, until each of them is received by at least one of the receivers $\{d_k:k\in[K]\}$. Those $X_{k,j}$ packets that are received by $d_k$ have already reached their intended destination and will not be transmitted in Phase~2. Those $X_{k,j}$ packets that are received by some other $d_i$ but not by $d_k$ are the ``overheard packets," and could potentially be mixed with packets of the $i$-th session. In Phase~2, source $s$ takes advantage of all the coding opportunities created in Phase~1 and mixes the packets of different sessions to capitalize the network coding gain. \cite{RoznerIyerMehtaQiuJafry07} implements such 2-phase approach while taking into account of various practical considerations, such as time-out and network synchronization.

\subsection{The Suboptimality of The 2-Phase Approach\label{subsec:example}}

Although being throughput optimal for the simplest $K=2$ case, the above 2-phase approach does not achieve the capacity for the cases in which $K>2$. To illustrate this point, consider the example in Fig.~\ref{fig:example}.

\begin{figure}

\hspace{.5cm}\subfigure[Sending the first Phase-2 packet {$[X_1+X_2]$}.\label{subfig:example1}]{\includegraphics[height=2.75cm]{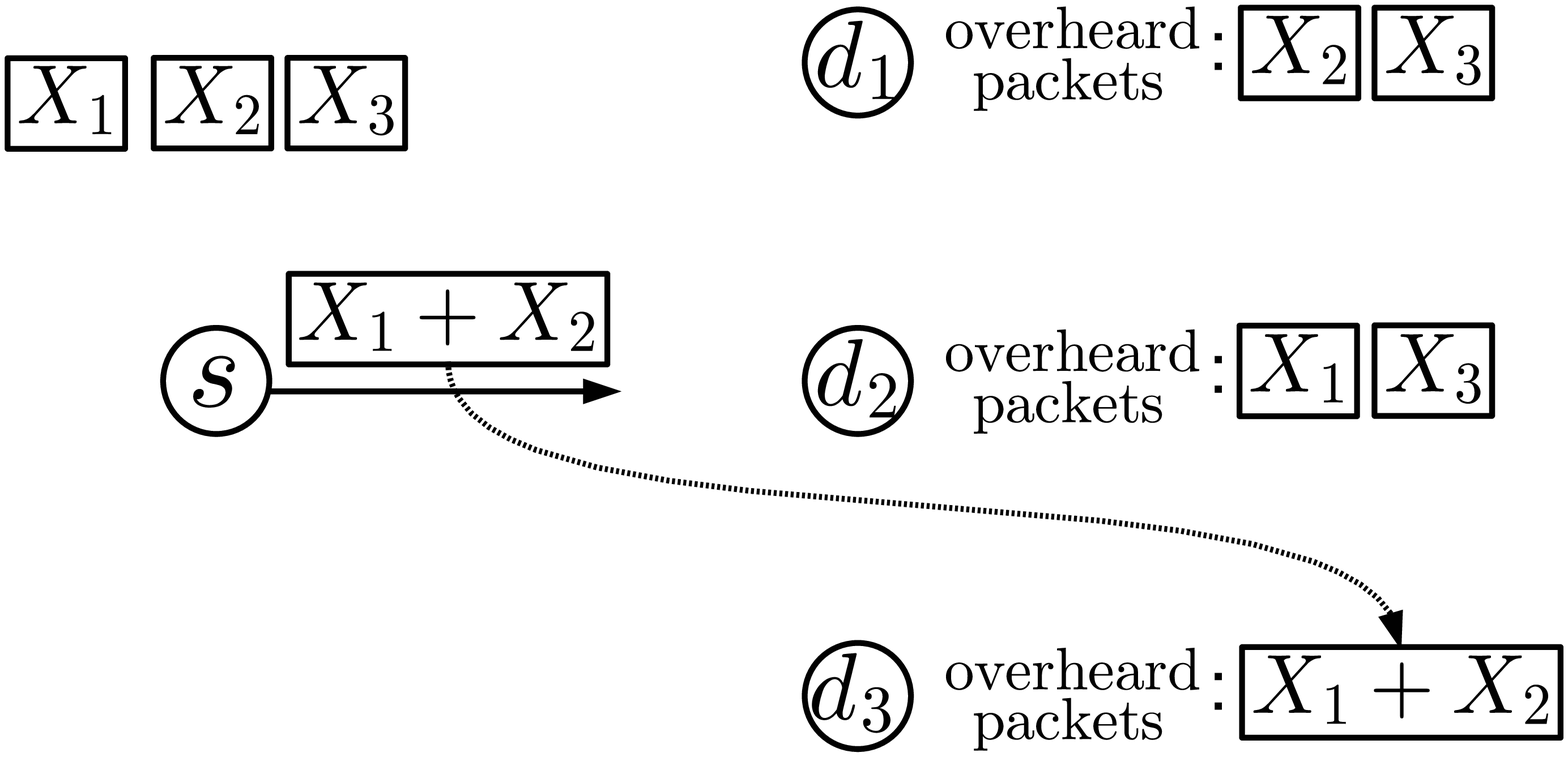}}

\vspace{.2cm}
\hspace{.5cm}\subfigure[The optimal coding operation after sending the {$[X_1+X_2]$}.\label{subfig:example2}]{\includegraphics[height=2.75cm]{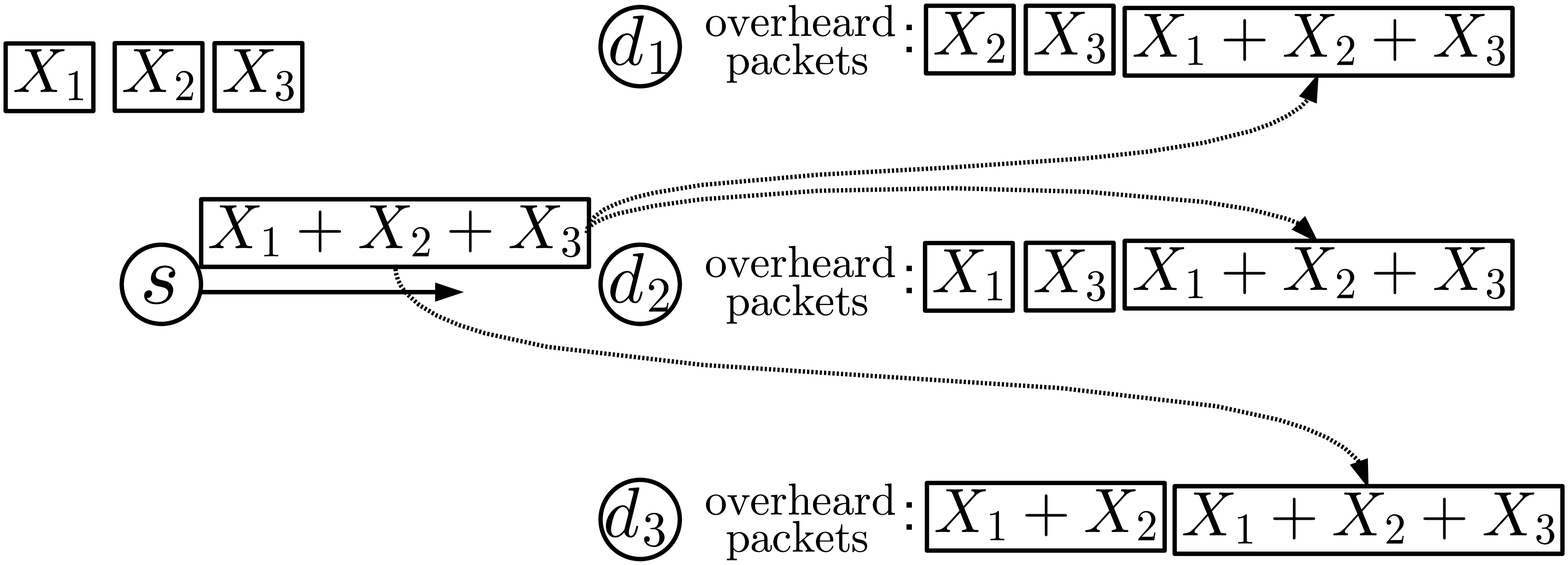}}

\caption{Example of the suboptimality of the 2-phase approach. \label{fig:example}}
\end{figure}

In Fig.~\ref{subfig:example1}, source $s$ would like to serve three receivers $d_1$ to $d_3$. Each $(s,d_k)$ session contains a single information packet $X_k$, and the goal is to convey each $X_k$ to the intended receiver $d_k$ for all $k=1,2,3$. Suppose the 2-phase approach in Section~\ref{subsec:existing} is used. During Phase~1, each packet is sent repeatedly until it is received by at least one receiver, which either conveys the packet to the intended receiver or creates an overheard packet that can be used in Phase~2. Suppose after Phase~1, $d_1$ has received $X_2$ and $X_3$, $d_2$ has received $X_1$ and $X_3$, and $d_3$ has not received any packet (Fig.~\ref{subfig:example1}). Since each packet has reached at least one receiver, source $s$ moves to Phase~2.

One can easily check that if $s$ sends out a coded packet $[X_1+X_2]$ in Phase~2, such packet can serve both $d_1$ and $d_2$. That is, $d_1$ (resp.\ $d_2$) can decode $X_1$ (resp.\ $X_2$)  by subtracting $X_2$ (resp.\ $X_1$) from $[X_1+X_2]$. Nonetheless, since the broadcast PEC is random, the coded packet $[X_1+X_2]$ may or may not reach $d_1$ or $d_2$. Suppose that due to random channel realization, $[X_1+X_2]$ reaches only $d_3$, see Fig.~\ref{subfig:example1}. The remaining question is what $s$ should send for the next time slot. For the following, we compare the existing 2-phase approach and a new optimal decision.

{\bf The existing 2-phase approach:} We first note that since $d_3$ received neither $X_1$ nor $X_2$ in the past, the newly received $[X_1+X_2]$ cannot be used by $d_3$ to decode any information packet. In the existing results \cite{LarssonJohansson06,RoznerIyerMehtaQiuJafry07,GeorgiadisTassiulas09}, $d_3$ thus discards the overheard $[X_1+X_2]$, and $s$ would continue sending $[X_1+X_2]$ for the next time slot in order to capitalize this coding opportunity created in Phase~1.

{\bf The optimal decision:} It turns out that the broadcast system can actually benefit from the fact that $d_3$ overhears the coded packet $[X_1+X_2]$ even though neither $X_1$ nor $X_2$ can be decoded by $d_3$. More explicitly, instead of sending $[X_1+X_2]$, $s$ should send a new packet $[X_1+X_2+X_3]$ that mixes all three sessions together. With the new $[X_1+X_2+X_3]$ (see Fig.~\ref{subfig:example2} for illustration), $d_1$ can decode the desired $X_1$ by subtracting both $X_2$ and $X_3$ from $[X_1+X_2+X_3]$. $d_2$ can decode the desired $X_2$ by subtracting both $X_1$ and $X_3$ from $[X_1+X_2+X_3]$. For $d_3$, even though $d_3$ does not know the values of  $X_1$ and $X_2$, $d_3$ can still use the previously overheard $[X_1+X_2]$ packet to subtract the interference $(X_1+X_2)$ from $[X_1+X_2+X_3]$ and decode its desired packet $X_3$. As a result, the new coded packet $[X_1+X_2+X_3]$ serves all destinations $d_1$, $d_2$, and $d_3$, simultaneously. This new coding decision thus strictly outperforms the existing 2-phase approach.

Two critical observations can be made for this example. First of all, when $d_3$ overhears a coded $[X_1+X_2]$ packet, even though $d_3$ can decode neither $X_1$ nor $X_2$, such new side information can still be used for future decoding. More explicitly, as long as $s$ sends packets that are of the form $\alpha(X_1+X_2)+\beta X_3$, the ``aligned interference" $\alpha(X_1+X_2)$ can be completely removed by $d_3$ without decoding individual $X_1$ and $X_2$. This technique is thus termed  ``{\em code alignment}," which is in parallel with the interference alignment method used in Gaussian interference channels \cite{CadambeJafar08}. Second of all, in the existing 2-phase approach, Phase~1 has the dual roles of sending uncoded packets to their intended receivers, and, at the same time,  creating new coding opportunities (the overheard packets) for Phase~2. It turns out that this dual-purpose Phase-1 operation is indeed optimal (as will be seen in Sections~\ref{sec:achievability} and~\ref{sec:pf-of-achievability}). The suboptimality of the 2-phase approach for $K>2$ is actually caused by the Phase-2 operation, in which source $s$ only capitalizes the coding opportunities created in Phase~1 but does not create any new coding opportunities for subsequent packet mixing. One can thus envision that for the cases $K>2$, an optimal policy should be a multi-phase policy, say an $M$-phase policy, such that for all $i\in[M-1]$ (not only for the first phase) the packets sent in the $i$-th phase have dual roles of  sending the information packets to their intended receivers and simultaneously creating new coding opportunities for the subsequent Phases $(i+1)$ to $M$. These two observations will be the building blocks of our achievability results. 


\section{The Main Results\label{sec:main}}

We have two groups of results: one is for general 1-to-$K$ broadcast PECs with arbitrary values of the PEC parameters, and the other is for 1-to-$K$ broadcast PECs with some restrictive conditions on the values of the PEC parameters.

\subsection{Capacity Results For General 1-to-$K$ Broadcast PECs\label{subsec:general}}

We define any bijective function $\pi:[K]\mapsto[K]$ as a $K$-permutation and we sometimes just say that $\pi$ is a permutation whenever it is clear from the context that we are focusing on $[K]$.  There are totally $K!$ distinct $K$-permutations. Given any $K$-permutation $\pi$, for all $j\in [K]$ we define $S^\pi_j\stackrel{\Delta}{=}\{\pi(l):\forall l\in[j]\}$ as the set of the first $j$ elements according to the permutation $\pi$. We then have the following capacity outer bound for any 1-to-$K$ broadcast PEC with COF.

\begin{proposition}\label{prop:outer}
Recall the definition of $p_{\cup S}$ in \eqref{eq:pS-def}. Any achievable rates  $(R_1,\cdots, R_K)$ must satisfy the following $K!$ inequalities:
\begin{align}
\forall \pi,~\sum_{j=1}^K\frac{R_{\pi(j)}}{p_{\cup S^\pi_j}}\leq 1.\label{eq:pi-outer}
\end{align}
\end{proposition}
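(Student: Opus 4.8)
The plan is to prove \Proposition{\ref{prop:outer}} by a genie-aided degradation argument, generalizing the idea of \cite{OzarowCheong84} and \cite{GeorgiadisTassiulas09} to arbitrary $K$. Fix a permutation $\pi$. Relabel the receivers so that $\pi$ is the identity; it then suffices to show $\sum_{j=1}^K R_j/p_{\cup[j]}\le 1$, where $[j]=\{1,\dots,j\}$. The key observation is that, since $s$ has channel output feedback from \emph{all} receivers, we may hand receiver $d_j$ as side information the entire collection of channel outputs $\{Z_1(t),\dots,Z_j(t):t\in[n]\}$ of receivers $d_1,\dots,d_j$ (a ``stronger'' receiver can only decode better), \emph{and} we may hand it all the messages $\X_1,\dots,\X_{j-1}$ of the sessions preceding it in the permutation. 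Call the resulting genie-aided receiver $\tilde d_j$. The chain $\tilde d_K$ knows everything $\tilde d_{K-1}$ knows, which knows everything $\tilde d_{K-2}$ knows, etc., so this produces a \emph{physically degraded} broadcast PEC (in the Markov-chain sense, conditioned on the inputs), and crucially the encoder's feedback is unchanged, so any code for the original channel is a code for the genie-aided one with the same (or smaller) error probability. Hence if $(R_1,\dots,R_K)$ is achievable on the original channel it is achievable on the degraded one.

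Next I would set up the single-letter-free (Fano + entropy) bound on the degraded chain. For the genie-aided receiver $\tilde d_j$, the ``effective'' erasure event is that \emph{none} of $d_1,\dots,d_j$ receives the symbol, which by \eqref{eq:pS-def} happens with probability $1-p_{\cup[j]}$; equivalently, $\tilde d_j$ sees $Y(t)$ whenever at least one of the first $j$ receivers does, i.e.\ with probability $p_{\cup[j]}$ per slot. Let $N_j$ be the (random) number of slots $t\in[n]$ in which $\{k\in[j]:Z_k(t)=Y(t)\}\neq\emptyset$; then $\EE[N_j]=n\,p_{\cup[j]}$ and $N_j$ concentrates around its mean. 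The standard cut-set/degradedness computation gives, for each $j$,
\begin{align}
nR_j &= H(\X_j) = H(\X_j\mid \X_1,\dots,\X_{j-1}) \nonumber\\
&\le I(\X_j;\, \{Z_k(t):k\in[j],t\in[n]\}\mid \X_1,\dots,\X_{j-1}) + n\epsilon_n,\nonumber
\end{align}
and the conditional mutual information is upper bounded by the number of slots in which the genie-aided receiver actually learns a fresh symbol of $\GF(q)$ beyond what the previous messages and the ``past'' already determine — which, after accounting for the degradedness (so that slots already ``used up'' by $\X_1,\dots,\X_{j-1}$ do not count again), is at most $(N_j-N_{j-1})\log q$ in expectation-normalized form, with $N_0:=0$. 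Summing the telescoping bound $\sum_j R_j/p_{\cup[j]}$ against $\sum_j (N_j-N_{j-1}) = N_K \le n$ and dividing by $n\log q$, then letting $n\to\infty$, $q\to\infty$, $\epsilon\to 0$, yields $\sum_{j=1}^K R_j/p_{\cup[j]}\le 1$. Repeating over all $K!$ permutations gives \eqref{eq:pi-outer}.

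The main obstacle is making the ``fresh symbols'' accounting rigorous: one must argue that the information about $\X_j$ delivered to $\tilde d_j$ over the $N_j$ non-erased slots, \emph{after conditioning on $\X_1,\dots,\X_{j-1}$ and on the degraded receiver's own observations}, occupies at most $N_j - N_{j-1}$ ``new'' symbol slots — intuitively because the $N_{j-1}$ slots already observed by the stronger-in-the-chain genie receiver $\tilde d_{j-1}$ carry information only about $\X_1,\dots,\X_{j-1}$ once we also reveal those messages, so they are useless for $\X_j$ and should be subtracted. Formally this is cleanest via an entropy/counting argument: condition on the erasure pattern, note $H(\{Z_k(t):k\in[j]\}_{t}\mid \X_1,\dots,\X_{j-1})\le N_j\log q$ while $H(\{Z_k(t):k\in[j-1]\}_{t}\mid \X_1,\dots,\X_{j-1})$ can be as small as needed isn't quite the right split — instead one uses that the output vector seen by $\tilde d_{j-1}$ is a deterministic function of that seen by $\tilde d_j$ together with the fresh slots, and applies the chain rule for entropy so the contributions telescope. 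I would also need a short lemma that the concentration of each $N_j$ lets us replace $N_j$ by $n\,p_{\cup[j]}$ with vanishing loss (a Chernoff bound suffices since the slots are i.i.d.), and a remark that taking $q\to\infty$ is what absorbs the Fano term $n\epsilon_n$ relative to $n\log q$. The degradedness setup and the permutation bookkeeping are routine; the entropy-telescoping step is where the real care is required.
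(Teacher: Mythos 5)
Your first step --- creating a genie-aided chain in which $\tilde d_j$ observes all of $Z_1(t),\dots,Z_j(t)$, so that the channel becomes physically degraded with per-slot success probability $p_{\cup[j]}$ for the $j$-th receiver --- is exactly the construction in the paper. Where the two arguments part ways is in how the converse for the degraded channel with feedback is closed. The paper does not attempt any entropy accounting: it invokes the result of \cite{Gamal78} that channel output feedback does not increase the capacity of a physically degraded broadcast channel, and then reads off \eqref{eq:pi-outer} as the known no-feedback capacity of the degraded erasure broadcast channel. You instead try to prove the feedback converse directly by a Fano-plus-counting argument, and that is where your proposal has a genuine gap.

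Concretely, your central claim --- that $I(\X_j;\{Z_k(t):k\in[j],t\in[n]\}\mid \X_1,\dots,\X_{j-1})$ is at most (roughly) $(N_j-N_{j-1})\log q$ because ``the $N_{j-1}$ slots already observed by $\tilde d_{j-1}$ carry information only about $\X_1,\dots,\X_{j-1}$ once we also reveal those messages'' --- is false for a feedback code, since $Y(t)$ is a function of \emph{all} messages. Take $K=2$, $j=2$: if the source transmits an uncoded symbol of session $2$ and only $d_1$ receives it, the genie-aided $\tilde d_2$ sees $Z_1(t)$ and learns a fresh symbol of $\X_2$; such a slot is counted in $N_1$, not in $N_2-N_1$. (Indeed the capacity-achieving schemes in Sections~\ref{sec:achievability}--\ref{sec:pf-of-achievability} rely on exactly this effect.) Your bound would force $R_j\lesssim p_{\cup[j]}-p_{\cup[j-1]}$, which is strictly tighter than, and inconsistent with, \eqref{eq:pi-outer} (setting all other rates to zero, the true constraint is only $R_j\le p_{\cup[j]}$); moreover, even granting the bound, dividing by $p_{\cup[j]}$ and summing does not telescope to $N_K/n\le 1$ as you assert. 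A correct direct converse requires a different telescoping --- relating $I(\X_j,\dots,\X_K;\tilde Z_j^n\mid\X_{<j})$ to $I(\X_{j+1},\dots,\X_K;\tilde Z_{j+1}^n\mid\X_{\le j})$ through the ratio of reception probabilities --- or, as the paper does, simply citing \cite{Gamal78} to remove the feedback and then using the standard degraded-BC capacity formula. As written, the heart of your converse does not go through.
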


\begin{figure}
\centering
\includegraphics[width=6cm]{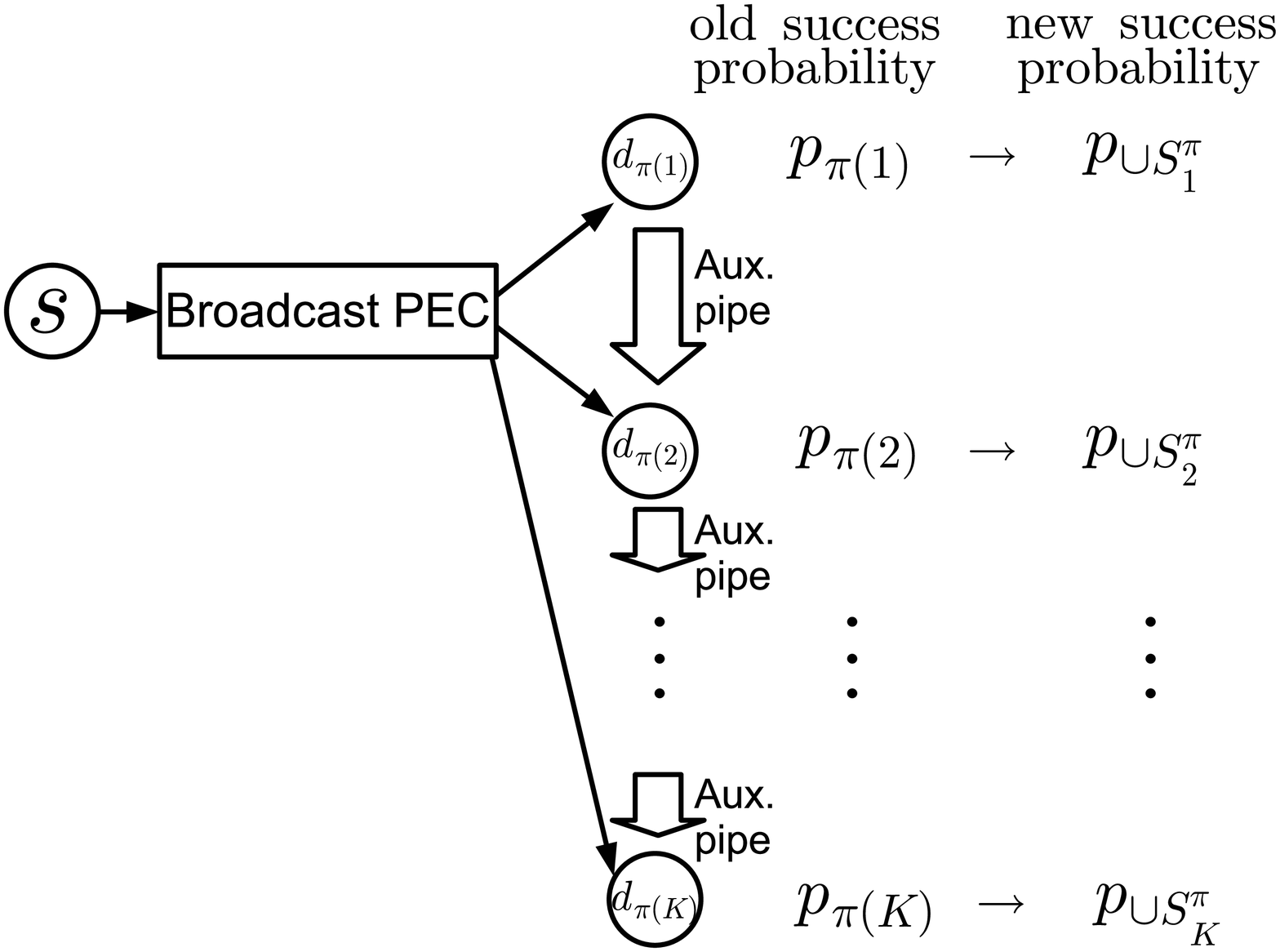}
\caption{Illustration of the proof of Proposition~\ref{prop:outer}. \label{fig:pipes}}
\end{figure}

\begin{thmproof}{Proof:} Proposition~\ref{prop:outer} can be proven by a simple extension of the outer bound arguments used in \cite{OzarowCheong84,GeorgiadisTassiulas09}. 
(Note that when $K=2$, Proposition~\ref{prop:outer} collapses to Theorem~3 of \cite{GeorgiadisTassiulas09}.)

For any given permutation $\pi$, consider a new broadcast channel with $(K-1)$ artificially created information pipes connecting all the receivers $d_1$ to $d_K$. More explicitly, for all $j\in[K-1]$, create an auxiliary pipe from $d_{\pi(j)}$ to $d_{\pi(j+1)}$. See Fig.~\ref{fig:pipes} for illustration. With the auxiliary pipes, any destination $d_{\pi(j)}$, $j\in[K]$, not only observes the corresponding output $Z_{\pi(j)}$ of the broadcast PEC but also has all the information $Z_{\pi(l)}$ of its ``upstream receivers" $d_{\pi(l)}$ for all $l\in [j-1]$. Since we only create new pipes, any achievable rates of the original 1-to-$K$ broadcast PEC with COF must also be achievable in the new 1-to-$K$ broadcast PEC with COF in Fig.~\ref{fig:pipes}. The capacity of the new 1-to-$K$ broadcast PEC with COF is thus an outer bound on the capacity of the original 1-to-$K$ broadcast PEC with COF.

 On the other hand, the new 1-to-$K$ broadcast PEC in  Fig.~\ref{fig:pipes} is a physically degraded broadcast channel with the new success probability of $d_k$ being $p_{\cup S_k^\pi}$ instead of $p_{\pi(k)}$ (see Fig.~\ref{fig:pipes}). \cite{Gamal78} shows that COF does not increase the capacity of any physically degraded broadcast channel. Therefore the capacity of the new 1-to-$K$ broadcast PEC with COF is identical to the capacity of the new 1-to-$K$ broadcast PEC without COF. Since \eqref{eq:pi-outer} is the capacity of the new 1-to-$K$ broadcast PEC without COF, \eqref{eq:pi-outer} must be an outer bound of the capacity of the original 1-to-$K$ PEC with COF. By considering different permutation $\pi$, the proof of Proposition~\ref{prop:outer} is complete.
\end{thmproof}

For the following, we first provide the capacity results for general 1-to-3 broadcast PECs. We then state an achievability inner bound for general 1-to-$K$ broadcast PECs with COF for arbitrary $K$ values, which, together with the outer bound in Proposition~\ref{prop:outer} can effectively bracket the capacities for the cases in which $K\geq 4$.

\begin{proposition}\label{prop:cap3}
For any parameter values $\left\{p_{S\overline{\{1,2,3\}\backslash S}}:\forall S\in 2^{\{1,2,3\}}\right\}$ of a 1-to-3 broadcast PEC, the capacity outer bound in Proposition~\ref{prop:outer} is indeed the capacity region of a 1-to-3 broadcast PEC with COF.
\end{proposition}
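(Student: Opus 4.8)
The plan is to establish achievability of every rate vector $(R_1,R_2,R_3)$ in the interior of the region defined by the six inequalities \eqref{eq:pi-outer}, which combined with Proposition~\ref{prop:outer} yields the claim. The natural vehicle is the multi-phase packet evolution scheme advertised in the introduction: rather than the suboptimal two-phase approach, I would use a three-phase policy in which Phase~1 broadcasts the uncoded packets $X_{k,j}$ (session-by-session) until each is heard by at least one receiver, Phase~2 mixes pairs of sessions while \emph{also} creating new code-aligned side information at the third receiver, and Phase~3 mixes all three sessions at once using the alignment created in Phase~2 (this is exactly the $[X_1+X_2+X_3]$ phenomenon from the example in Section~\ref{subsec:example}). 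The first step is bookkeeping: after Phase~1, classify each undelivered packet by the subset $T\subsetneq[K]$ of receivers that overheard it, forming queues $Q_{k;T}$; standard concentration (law of large numbers in $n$, plus choosing $\GF(q)$ large enough that random linear combinations are full rank with probability $\to 1$) lets me treat all queue lengths as deterministic multiples of $n$ determined by the $p_{S\overline{[K]\backslash S}}$ parameters and the $R_k$'s.

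The core analytical step is to set up a linear program describing the multi-phase scheme: variables are the numbers of channel uses $t_\phi$ allotted to each coding ``type'' $\phi$ in each phase, constraints encode (a) that each queue is drained — the coded packets targeting a given set of receivers must, accounting for the $p_{\cup S}$ success probabilities, deliver enough innovative symbols to each receiver to cover its demand plus the aligned interference it must cancel; and (b) that the total channel uses sum to $n$. I would then show, by LP duality or by an explicit feasible assignment of the $t_\phi$, that this scheme's achievable region equals exactly the polytope \eqref{eq:pi-outer}. For $K=3$ this is finite and completely explicit: there are six permutation constraints, and I expect the vertices of the region to correspond naturally to orderings in which receivers are ``served'' — the permutation $\pi$ controls which receiver plays the role of the ``degraded'' receiver $d_3$ that benefits from overhearing aligned sums. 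A symmetry/time-sharing argument over the $K!$ permutations reduces the problem to verifying that each single ``permutation corner'' is achievable by one run of the scheme with $d_{\pi(1)},d_{\pi(2)},d_{\pi(3)}$ in the roles of first-, second-, and third-served.

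Concretely, for a fixed ordering (say $\pi=\mathrm{id}$) I would show the scheme achieves any $(R_1,R_2,R_3)$ with $\frac{R_1}{p_1}+\frac{R_2}{p_{\cup\{1,2\}}}+\frac{R_3}{p_{\cup\{1,2,3\}}}\le 1$: Phase~1 costs $\sum_k \frac{n R_k}{p_{\cup\{1,\dots\}}}$-type terms for getting each packet to \emph{someone}; the packets overheard by ``downstream'' receivers become, via code alignment, usable as interference-cancellation tokens so that Phases~2 and~3 cost only the residual amounts, and the telescoping of the $p_{\cup S_j^\pi}$ denominators is what makes the total equal the single inequality for $\pi$. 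Then achieving a point in the interior of the intersection of all six inequalities follows by writing it as a convex combination of points each achievable under one permutation and time-sharing. The main obstacle I anticipate is the exact accounting in Phases~2--3: one must verify that the side information generated when a coded packet $\alpha(X_i+X_j)$ is overheard by $d_\ell$ is \emph{simultaneously} the right ``aligned'' object for all subsequent mixing steps, i.e.\ that the dimension counts (innovation delivered vs.\ interference cancellable) close up exactly with no slack — this is where the $\GF(q)$-alignment argument must be made rigorous, e.g.\ by tracking, for each receiver, the knowledge space it has accumulated as a subspace of $\GF(q)^{\sum nR_k}$ and showing it contains the needed vectors with high probability. Establishing that this works for \emph{all} six orderings simultaneously (so that time-sharing is legitimate) and that the induced rate region is precisely the polytope, not a strict subset, is the crux; everything else is routine concentration and linear algebra.
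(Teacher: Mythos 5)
Your overall architecture --- a multi-phase packet evolution scheme, queue bookkeeping via the law of large numbers, and knowledge-space/alignment tracking over $\GF(q)$ --- matches the paper's, but two linked steps in your reduction fail. First, the claim that a single run of the scheme with ordering $\pi=\mathrm{id}$ achieves \emph{every} $(R_1,R_2,R_3)$ with $\frac{R_1}{p_1}+\frac{R_2}{p_{\cup\{1,2\}}}+\frac{R_3}{p_{\cup\{1,2,3\}}}\le 1$ cannot be right: that single half-space strictly contains the capacity region, so if it were achievable the converse would be violated. In the paper's accounting, one run of the fixed-ordering scheme closes up only when \emph{three} of the six inequalities of \eqref{eq:pi-outer} hold simultaneously (the three consistent with the chosen ordering); the remaining inequalities are what guarantee that, in each pairwise-mixing sub-phase, one queue is exhausted before the other. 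Consequently your time-sharing step --- writing an interior point of the intersection polytope as a convex combination of ``permutation corners'' --- is not a valid reduction, and the paper in fact never time-shares over permutations.

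The missing device is the \emph{dominance} relation \eqref{eq:dominance-cond} together with its transitivity (Lemma~\ref{lem:dominance}): given the target rate vector and the channel parameters, one relabels so that $d_1$ dominates $d_2$ dominates $d_3$, and this single labeling makes one run of the scheme work for the entire interior of the region. Dominance is precisely what ensures that the queue of the dominated session (e.g.\ $Q_{3;2\overline{1}}$) is drained before that of the dominating session (e.g.\ $Q_{2;3\overline{1}}$). Second, a three-phase scheme (uncoded / pairwise / triple) does not suffice: after the pairwise phase the dominating sessions still hold leftover singly-overheard packets, and the paper inserts clean-up sub-phases (Phases~3.1--3.3, cf.\ Fig.~\ref{fig:pack}) in which these leftovers are piggybacked with doubly-overheard packets of the weaker sessions before the final all-three-mixing phase. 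Without those clean-up phases the dimension count does not close and unpaired packets remain. Your alignment and concentration arguments are the right ingredients for the rest, but you need the dominance lemma and the clean-up phases to make the accounting exact.
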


To state the capacity inner bound, we need to define an additional function: $f_p(S\overline{T})$, which takes an input $S\overline{T}$ of two disjoint sets $S,T\in 2^{[K]}$. More explicitly, we define $f_p(S\overline{T})$ as the probability that a packet $Y$, transmitted through the 1-to-$K$ PEC, is received by all those $d_i$ with $i\in S$ and not received by any $d_j$ with $j\in T$. For example, $f_p(S\overline{[K]\backslash S})=p_{S\overline{[K]\backslash S}}$ for all $S\in 2^{[K]}$. For arbitrary disjoint $S$ and $T$, we thus have
\begin{align}
&f_p(S\overline{T})\stackrel{\Delta}{=}\sum_{\forall S_1: S\subseteq S_1, T\subseteq ([K]\backslash S_1)}p_{S_1\overline{[K]\backslash S_1}}.\label{eq:fp-def}
\end{align}
We also say that a {strict total ordering} ``$\prec$" on $2^{[K]}$ is {\em cardinality-compatible} if
\begin{align}
\forall S_1, S_2\in 2^{[K]},\quad |S_1|<|S_2|\Rightarrow S_1\prec S_2.\label{eq:card-ordering}
\end{align}
For example, for $K=3$, the following strict total ordering
\begin{align}
\emptyset\prec \{2\}\prec\{1\}\prec\{3\}\prec \{1,2\}\prec\{1,3\}\prec\{2,3\}\prec\{1,2,3\}\nonumber
\end{align}
is cardinality-compatible. 

\begin{proposition}\label{prop:ach2} Fix any arbitrary  cardinality-compatible, strict total ordering $\prec$.
For any general 1-to-$K$ broadcast PEC with COF, a rate vector $(R_1,\cdots, R_K)$ can be achieved by a {\em linear network code}
if
there exist $2^K$ non-negative $x$ variables, indexed by $S\in 2^{[K]}$:
\begin{align}
\left\{x_S\geq 0:\forall S\in 2^{[K]}\right\},\label{eq:xs}
\end{align}
and $K3^{K-1}$ non-negative $w$ variables, indexed by $(k;S\rightarrow T)$ satisfying $T\subseteq S\subseteq ([K]\backslash k)$:
\begin{align}&\left\{w_{k;S\rightarrow T}\geq 0: \forall k\in[K],\forall S,T\in 2^{[K]},\right.\nonumber\\
&\hspace{3cm}\left.\text{satisfying }T\subseteq S\subseteq ([K]\backslash k)\right\},\label{eq:ws}
\end{align}
such that jointly the following linear inequalities\footnote{
There are totally $(1+K2^{K-1}+K3^{K-1})$ inequalities. More explicitly, \eqref{eq:total-x} describes one inequality. There are $K2^{K-1}$ inequalities having the form of \eqref{eq:coding-len}. There are totally $K3^{K-1}$ inequalities having the form of one of \eqref{eq:ind-length-0}, \eqref{eq:ind-length-1}, and \eqref{eq:ind-length-2}. For comparison, the outer bound in Proposition~\ref{prop:outer} actually has more inequalities asymptotically ($K!$ of them) than those in Proposition~\ref{prop:ach2}.} are satisfied:
\begin{align}
&\sum_{\forall S:S\in 2^{[K]}}x_S< 1\label{eq:total-x}\\
&\forall T\in 2^{[K]},\forall k\in T,\nonumber\\
&\hspace{2.5cm} x_T\geq \sum_{\forall S:(T\backslash k)\subseteq S\subseteq([K]\backslash k)}w_{k;S\rightarrow (T\backslash k)}\label{eq:coding-len}\\
&\forall k\in[K],\quad w_{k;\emptyset\rightarrow \emptyset}\cdot p_{\cup [K]}\geq R_k\label{eq:ind-length-0}\\
&\forall k\in[K], \forall S\subseteq ([K]\backslash k), S\neq \emptyset, \nonumber\\
&\hspace{0cm}
\left(\sum_{\forall T_1: T_1\subseteq S} w_{k;S\rightarrow T_1}\right) p_{\cup ([K]\backslash S)}\geq \nonumber\\
&\hspace{.7cm}\sum_{\scriptsize \begin{array}{c}\forall S_1,T_1:\text{such that}\\
T_1\subseteq S_1\subseteq ([K]\backslash k),\\
T_1\subseteq S,S\nsubseteq S_1 \end{array}}w_{k;S_1\rightarrow T_1}\cdot f_p\left((S\backslash T_1)\overline{([K]\backslash S)}\right)\label{eq:ind-length-1}\\
&\forall k\in[K], S,T\in 2^{[K]} \text{ satisfying } T\subseteq S\subseteq ([K]\backslash k), T\neq S,\nonumber\\
&\hspace{0cm}\left(w_{k;S\rightarrow T}+\sum_{\scriptsize\begin{array}{c}\forall T_1\subseteq S:\\
(T_1\cup\{k\})\prec (T\cup\{k\})\end{array}}w_{k; S\rightarrow T_1}\right)p_{\cup ([K]\backslash S)} \leq\nonumber\\
&\hspace{.5cm}
\sum_{\scriptsize\begin{array}{c}\forall S_1: S_1\prec S,\\
T\subseteq S_1\subseteq ([K]\backslash k)\end{array}}w_{k;S_1\rightarrow T} \cdot f_p\left((S\backslash T)\overline{([K]\backslash S)}\right)+\nonumber\\
&\hspace{.7cm}\sum_{\scriptsize \begin{array}{c}\forall S_1,T_1:\text{such that}\\
T_1\subseteq S_1\subseteq ([K]\backslash k),\\
(T_1\cup\{k\})\prec (T\cup\{k\}),\\
T_1\subseteq S,S\nsubseteq S_1 \end{array}}w_{k;S_1\rightarrow T_1}\cdot f_p\left((S\backslash T_1)\overline{([K]\backslash S)}\right).\label{eq:ind-length-2}
\end{align}
\end{proposition}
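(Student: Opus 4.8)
The plan is to realize any $(R_1,\dots,R_K)$ admitted by \eqref{eq:total-x}--\eqref{eq:ind-length-2} through an explicit linear network code --- the \emph{packet evolution} (PE) scheme of Section~\ref{sec:achievability} --- and to show that these linear inequalities are exactly the conditions under which that scheme can be run long enough for every $d_k$ to decode. Two asymptotic regimes are used throughout: $q$ large, so that generic linear combinations over $\GF(q)$ behave as continuum objects (Schwartz--Zippel), and $n\to\infty$, so that empirical reception counts concentrate about their means. \emph{Step 1 (coding stages and the alignment invariant).} For each session $k$ I would split the transmissions devoted to it into ``stages'' indexed by pairs $S\rightarrow T$ with $T\subseteq S\subseteq([K]\backslash k)$, the stage $\emptyset\rightarrow\emptyset$ being the uncoded Phase~1 transmissions. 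The structural invariant to be preserved is a \emph{code-alignment} property: a session-$k$ equation that the bookkeeping labels ``overheard by $S$'' can be written, from the vantage point of every $d_i$ with $i\in S$, as $\beta X_{k,\cdot}+(\text{interference lying in the span of }d_i\text{'s past receptions})$, while to every $d_j$ with $j\notin S$ it looks like a fresh generic symbol. A slot assigned to ``alignment set $T$'' --- and there are at most $x_T\,n$ of these in total, which is what \eqref{eq:total-x} and \eqref{eq:coding-len} enforce --- transmits one random linear combination chosen to be simultaneously consistent with the alignment requirement of every receiver in $T$; existence of such a combination over a large field, and its genericity against all the finitely many relevant subspaces, is a dimension count plus Schwartz--Zippel.

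\emph{Step 2 (the LP as rank conservation).} Because of channel output feedback, after each slot $s$ knows exactly which receivers obtained the symbol, hence exactly how the ``knowledge states'' (session $k$, overheard-by-$S$, rank) evolved. One slot of a stage-$(S\rightarrow T)$ transmission for session $k$ converts one unit of overheard-by-$S$ rank into overheard-by-$T$ rank precisely on the reception patterns whose probabilities are the $p_{\cup([K]\backslash S)}$ and $f_p\!\left((S\backslash T_1)\overline{([K]\backslash S)}\right)$ coefficients appearing in \eqref{eq:ind-length-0}--\eqref{eq:ind-length-2}; spatial dependence among receivers enters only through these $f_p$ coefficients. Summing over the slots of a stage and invoking the law of large numbers, the rank actually transferred by stage $(S\rightarrow T)$ concentrates around $w_{k;S\rightarrow T}\,n$ times that coefficient. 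Inequality \eqref{eq:ind-length-0} then states that Phase~1 produces at least $nR_k$ units of session-$k$ rank; \eqref{eq:ind-length-1} is the producer side (a stage may not hand off more rank than has arrived at its input) and \eqref{eq:ind-length-2} the consumer/ordering side, in which the cardinality-compatible ordering $\prec$ plays the role of a topological order on the partial-decoding events, forcing the producer/consumer accounting to be causal and hence realizable by an actual schedule; \eqref{eq:coding-len} reconciles the per-session stage lengths with the global per-alignment-set budget $x_T$.

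\emph{Step 3 (decoding and wrap-up).} It then remains to check that when every stage is run for its prescribed normalized length $w_{k;S\rightarrow T}$, each $d_k$ has collected at least $nR_k$ coded symbols which, after subtraction of the aligned interference using $d_k$'s own past receptions, form a square system in the $nR_k$ unknowns $X_{k,j}$; genericity from Step~1 makes this system invertible with probability $\to 1$, and a union bound over the $K$ sessions and over the finitely many ``bad field'' and ``bad concentration'' events pushes $\max_k\prop(\hat{\X}_k\neq\X_k)$ below any prescribed $\epsilon$. Reading the chain of implications backwards, any nonnegative $(x,w)$ satisfying \eqref{eq:total-x}--\eqref{eq:ind-length-2} yields such a schedule, which is Proposition~\ref{prop:ach2}.

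The step I expect to be the real obstacle is the Step~1/Step~2 interface: writing the code-alignment invariant down precisely and proving that a single random linear combination per slot realizes all the $f_p$-weighted rank transfers \emph{simultaneously and consistently} at all receivers --- in particular, that the interference any given receiver is asked to cancel always lies in the span it has already accumulated. The combinatorial core of this is the function of the cardinality-compatible ordering $\prec$ in \eqref{eq:ind-length-2}: it is what turns the family of rank-transfer inequalities into a genuine multi-stage flow, and therefore what makes the LP in Proposition~\ref{prop:ach2} an exact characterization of the PE scheme rather than merely a sufficient condition with unexplained slack.
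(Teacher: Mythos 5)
Your proposal follows essentially the same route as the paper: a sequential PE scheme whose phases are indexed by the mixing sets $T$ and visited in the cardinality-compatible order $\prec$, with $w_{k;S\rightarrow T}$ read as normalized stage lengths, the $f_p$ terms as per-slot transition probabilities of the overhearing sets, and the conclusion obtained by the law of large numbers plus a large-field genericity argument (the paper's Lemmas~\ref{lem:non-interfering} and~\ref{lem:decodability}).

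One substantive correction: you have the direction of \eqref{eq:ind-length-1} backwards. It is not a ``do not hand off more rank than has arrived at the input'' bound --- that causal, cumulative constraint is exactly what \eqref{eq:ind-length-2} encodes --- but an \emph{exhaustion} condition: the total time spent transmitting $Q_{k;S}$ packets, multiplied by the per-slot departure probability $p_{\cup([K]\backslash S)}$, must be at least the total number of $Q_{k;S}$ packets ever created, so that by the end of Phase~$(S\cup\{k\})$ no $Q_{k;S}$ packet remains. Since the ordering $\prec$ guarantees that no later phase creates new $Q_{k;S}$ packets, this is precisely what forces every $X_{k,j}$ to end with $k\in S(X_{k,j})$, which is the hypothesis of Corollary~\ref{cor:dec}; without the exhaustion reading, your Step~3 claim that each $d_k$ accumulates a full-rank system in its $nR_k$ unknowns does not follow. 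A smaller caveat on your closing remark: the paper's construction is deliberately acyclic (each $T$ is visited exactly once, in $\prec$-order) precisely to avoid cyclic packet-dependence, and the resulting LP is only a sufficient condition --- the capacity-achieving 1-to-3 scheme of Section~\ref{subsec:detailed3} revisits $T=\{1,2\}$ and is therefore cyclic --- so Proposition~\ref{prop:ach2} is an inner bound, not an exact characterization of all PE schedules.
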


Since Proposition~\ref{prop:ach2} holds for any cardinality-compatible, strict total ordering $\prec$. We can easily derive the following corollary:

To distinguish different strict total orderings, we append a subscript $l$ to $\prec$. For example, $\prec_1$ and $\prec_2$ correspond to two distinct strict total orderings. Overall, there are $L\stackrel{\Delta}{=}\prod_{k=0}^K\left({K\choose k}!\right)$ distinct strict total ordering $\prec_l$, $\forall l\in [L]$, that are cardinality-compatible.
\begin{corollary}
For any given cardinality-compatible strict total ordering $\prec_l$, we use $\Lambda_l$ to denote the collection of all $(R_1,\cdots, R_K)$ rate vectors satisfying Proposition~\ref{prop:ach2}. Then the convex hull of $\Co\left(\{\Lambda_l:\forall l\in[L]\}\right)$ is an achievable region of the given 1-to-$K$ broadcast PEC with COF.
\end{corollary}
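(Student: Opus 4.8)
The plan is to show that time-sharing among the linear network codes guaranteed by Proposition~\ref{prop:ach2} is itself a valid (randomized) coding scheme, so that any convex combination of points in $\bigcup_{l\in[L]}\Lambda_l$ is achievable; the closure of the set of achievable rates is then convex and contains $\Co(\{\Lambda_l:\forall l\in[L]\})$. First I would recall that, by Proposition~\ref{prop:ach2}, for each fixed cardinality-compatible strict total ordering $\prec_l$ and each rate vector $(R_1,\dots,R_K)\in\Lambda_l$ (in fact for each vector in the interior of $\Lambda_l$, which suffices since we ultimately take closures), there is a linear network code over $\GF(q)$, operating over $n$ channel uses, that delivers every session's packets with error probability below any prescribed $\epsilon>0$ for all sufficiently large $n$ and $q$. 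Call these the component codes.

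The key step is the standard concatenation/time-sharing argument. Given finitely many rate vectors $\r^{(1)},\dots,\r^{(m)}$ with $\r^{(i)}\in\Lambda_{l_i}$ and nonnegative weights $\theta_1,\dots,\theta_m$ summing to $1$, I would partition the $n$ available channel uses into $m$ consecutive blocks of lengths $\lceil\theta_i n\rceil$ (adjusting by at most $m$ slots, which vanishes as $n\to\infty$), and run the component code for $\r^{(i)}$ over the $i$-th block on a fresh, independent batch of information symbols for each session. Because the broadcast PEC is memoryless and time-invariant, the channel restricted to each block is statistically identical to the original channel over that many uses, and the COF received during block $i$ depends only on that block; hence each component code performs exactly as guaranteed. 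A union bound over the $m$ blocks and the $K$ receivers shows the overall error probability is at most $mK\epsilon$, which can be made arbitrarily small. The resulting per-session rate is $\sum_i \theta_i R^{(i)}_k - o(1)$, so every point of $\sum_i\theta_i\r^{(i)}$ is in the closure of the achievable region. Since the capacity region is by definition closed, and since Carathéodory's theorem lets us restrict to convex combinations of at most $K+1$ extreme points, this shows $\Co(\{\Lambda_l:\forall l\in[L]\})$ is contained in the capacity region, i.e.\ it is an achievable region.

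I do not expect a serious obstacle here; the only point requiring a little care is the interplay between the two asymptotic parameters. Proposition~\ref{prop:ach2} asks for $n$ and $q$ both large, and the $x$- and $w$-variables (the phase lengths) of each component code must be realized as integers after scaling by the block length $\theta_i n$; one must check that rounding the $x_S n$ and $w_{k;S\to T}n$ quantities to integers perturbs the achieved rates and the error probability only by $o(1)$, which follows because the defining inequalities \eqref{eq:total-x}--\eqref{eq:ind-length-2} are satisfied with strict inequality (or can be taken so for interior points) and are stable under $O(1/n)$ perturbations. Likewise $q$ can be taken common to all component codes (e.g.\ the maximum of the finitely many required field sizes), so all codes live over the same $\GF(q)$. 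With these routine verifications in place the corollary follows immediately from Proposition~\ref{prop:ach2}.
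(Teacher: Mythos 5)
Your proof is correct and is exactly the argument the paper intends (the paper states the corollary as an immediate consequence of Proposition~3 without writing it out): time-sharing the component codes over disjoint blocks of channel uses, using memorylessness and a union bound, yields any convex combination of rate vectors in $\bigcup_l\Lambda_l$. Your attention to the rounding of block lengths and to choosing a common field size covers the only technical points worth mentioning.
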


{\em Remark:} For some general classes of PEC parameters, one can prove that the inner bound of Proposition~\ref{prop:ach2} is indeed the capacity region for arbitrary $K\geq 4$ values. Two such classes are discussed in the next subsection.

\subsection{Capacity Results For Two Classes of 1-to-$K$ Broadcast PECs\label{subsec:special}}

We first provide the capacity results for {\em symmetric} broadcast PECs.
\begin{definition}
A 1-to-$K$ broadcast PEC is {\em symmetric} if the channel parameters $\left\{p_{S\overline{[K]\backslash S}}:\forall S\in 2^{[K]}\right\}$ satisfy
\begin{align}
\forall S_1,S_2\in 2^{[K]}\text{ with }|S_1|=|S_2|,~p_{S_1\overline{[K]\backslash S_1}}=p_{S_2\overline{[K]\backslash S_2}}.\nonumber
\end{align}
That is, the success probability $p_{S\overline{[K]\backslash S}}$ depends only on $|S|$, the size of $S$, and does not depend on which subset of receivers being considered.
\end{definition}

\begin{proposition}\label{prop:cap-sym}
For any symmetric 1-to-$K$ broadcast PEC with COF, the  capacity outer bound in Proposition~\ref{prop:outer} is indeed the corresponding capacity region.
\end{proposition}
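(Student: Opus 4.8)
\noindent\emph{Proof sketch (proposal).} The converse is already Proposition~\ref{prop:outer}, so the plan is to show that the region \eqref{eq:pi-outer} is also achievable when the channel is symmetric. The first step is to simplify the outer region. Since the channel is symmetric, $p_{\cup S}$ depends only on $|S|$; writing $q_j\stackrel{\Delta}{=}p_{\cup S}$ for any $S$ with $|S|=j$, we have $q_1\le q_2\le\cdots\le q_K$ and $p_{\cup S^\pi_j}=q_j$ for every permutation $\pi$, so \eqref{eq:pi-outer} reads $\sum_{j=1}^K R_{\pi(j)}/q_j\le 1$. Because $1/q_1\ge\cdots\ge 1/q_K$, the rearrangement inequality shows that the tightest of these $K!$ constraints pairs the largest rates with the smallest $q_j$'s: writing $R_{(1)}\ge\cdots\ge R_{(K)}$ for the decreasing rearrangement of $(R_1,\ldots,R_K)$, the region of Proposition~\ref{prop:outer} equals $\mathcal{P}\stackrel{\Delta}{=}\{(R_1,\ldots,R_K)\ge 0:\sum_{j=1}^K R_{(j)}/q_j\le 1\}$. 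A short telescoping argument (decompose a sorted rate vector as $\sum_{m=1}^K (R_{(m)}-R_{(m+1)})\mathbf{1}_{\{1,\ldots,m\}}$ with $R_{(K+1)}:=0$ and rescale, using $r_m\le q_1\le 1$ and $\sum_j R_{(j)}/q_j\le 1$) then shows that $\mathcal{P}$ is exactly the convex hull of the origin together with all coordinate permutations of the vectors $v_m$, $m\in[K]$, having $m$ entries equal to $r_m\stackrel{\Delta}{=}\left(\sum_{j=1}^m 1/q_j\right)^{-1}$ and the rest zero; each $v_m$ makes the $m$-term sorted inequality tight and every other inequality slack.

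Next I would reduce achievability of all of $\mathcal{P}$ to achievability of a single symmetric rate point. The capacity region is closed and convex (time-sharing) and, by symmetry of the channel parameters, invariant under relabeling the receivers, so it suffices to achieve each vertex $v_m$. With only sessions $1,\ldots,m$ active, receivers $d_{m+1},\ldots,d_K$ are pure spectators (each $d_k$ decodes from $Z_k$ only), so $v_m$ is achievable on the 1-to-$K$ PEC iff the symmetric rate $(r_m,\ldots,r_m)$ is achievable on the restricted 1-to-$m$ symmetric PEC with COF. Everything thus reduces to: for each $m$, deliver rate $r_m=(\sum_{j=1}^m 1/q_j)^{-1}$ to every receiver of a symmetric 1-to-$m$ broadcast PEC with COF, and I would get this by instantiating the packet-evolution inner bound of Proposition~\ref{prop:ach2} (with $K$ replaced by $m$ and $R_1=\cdots=R_m=r_m$).

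The concrete instantiation uses a \emph{symmetric} choice of the variables $\{x_S\}$ and $\{w_{k;S\rightarrow T}\}$ --- one depending only on the cardinalities $|S|,|T|$ --- together with a fixed cardinality-compatible total ordering to break ties. The underlying scheme runs in $m$ phases: in phase $i$ the source transmits ``order-$i$'' combinations, each mixing $i$ sessions and built so that the $i$ receivers whose sessions appear can cancel the other $i-1$ by code alignment; a receiver outside that group which overhears such a combination has it promoted --- by freezing this linear combination inside all subsequent mixes --- into a fresh order-$(i{+}1)$ opportunity. With per-slot ``useful'' probability $q_i$ in phase $i$, each phase occupies $\approx N/q_i$ slots where $N=nr_m$, for a total of $N\sum_{j=1}^m 1/q_j=n$ slots, matching the bound. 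Under the symmetric ansatz the $(1+m2^{m-1}+m3^{m-1})$ inequalities \eqref{eq:total-x}--\eqref{eq:ind-length-2} collapse to a small, essentially triangular linear system indexed by pairs $(|S|,|T|)$ whose coefficients are built from the $q_j$'s and the cardinality-reduced values of $f_p(\cdot)$ from \eqref{eq:fp-def}; one solves this system in closed form and checks that $\sum_S x_S\to 1$ precisely at $R_k=r_m$, which certifies that the symmetric achievable rate is exactly $r_m$.

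The main obstacle is the bookkeeping in this last step: verifying that the supply of order-$i$ coding opportunities created (as overhears) in phases $1,\ldots,i-1$ exactly meets the demand of phase $i$ --- equivalently, that the symmetric assignment satisfies the chained inequalities \eqref{eq:ind-length-1}--\eqref{eq:ind-length-2} and does so with equality --- which, though finite and explicit after the cardinality reduction, still requires a careful induction on the phase/ordering index. The remaining points are routine given the framework of Section~\ref{sec:achievability}: linear independence of the coded packets over $\GF(q)$ (guaranteed for $q$ large, as in Proposition~\ref{prop:ach2}) and absorbing the $o(n)$ fluctuations of the per-phase lengths into the $\epsilon$ of the achievability definition; the polytope description in the first step is elementary. \hfill$\blacksquare$
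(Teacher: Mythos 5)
Your reduction of the converse region is fine, and your overall route genuinely differs from the paper's. The paper does \emph{not} decompose the region into vertices: it fixes an arbitrary $(R_1,\cdots,R_K)$ with $R_1\geq\cdots\geq R_K$ in the interior of $\sum_k R_k/p_{\cup[k]}\leq 1$ and directly exhibits a feasible point of the linear program in Proposition~\ref{prop:ach2}, namely $w_{k;S\rightarrow S}=R_k\cdot L_S$ with $L_S$ the inclusion--exclusion expression $\sum_{i=K-|S|}^{K}\sum_{S_1:|S_1|=i,\,([K]\backslash S)\subseteq S_1}(-1)^{i-(K-|S|)}/p_{\cup S_1}$ (which is $\EE\{\Gamma_S\}$, the expected number of ``logical time slots'' of type $(k;S)$ per unit rate in the state-transition analysis of Section~\ref{subsec:high-level-cap3}), all other $w_{k;S\rightarrow T}=0$, and $x_T=w_{k^*;(T\backslash k^*)\rightarrow(T\backslash k^*)}$ with $k^*=\min T$; the monotonicity $R_1\geq\cdots\geq R_K$ is exactly what resolves the max in \eqref{eq:phase-xT}, and \eqref{eq:total-x} follows from the telescoping identity $\sum_{T:\,k\in T,\,[k-1]\cap T=\emptyset}x_T=R_k/p_{\cup[k]}$. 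Your vertex decomposition (convex hull of the origin and permutations of $v_m$, then reduction to the symmetric point $(r_m,\ldots,r_m)$ on a 1-to-$m$ sub-channel) is a valid alternative and buys a cleaner instantiation --- at a fully symmetric rate point the max in \eqref{eq:phase-xT} is attained by every $k\in T$ and the LP collapses further by cardinality --- at the cost of an extra polytope lemma and an appeal to time-sharing/closure that the paper avoids.

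The genuine gap is that you stop exactly where the achievability proof actually lives. You write down no candidate for the $\{w_{k;S\rightarrow T}\}$ and defer the verification of \eqref{eq:ind-length-1}--\eqref{eq:ind-length-2} (the supply-equals-demand conservation laws for the $Q_{k;S}$ queues) to ``a careful induction on the phase/ordering index.'' That verification is the entire content of the inner bound matching the outer bound; without an explicit assignment there is nothing to induct on, and it is not a priori clear that the ``essentially triangular'' reduced system has a nonnegative solution whose total $\sum_S x_S$ equals $\sum_{j=1}^m r_m/q_j$ rather than something larger. The fix is to take $w_{k;S\rightarrow S}=r_m\,L_S$ (restricted to the active $m$ receivers) and $w_{k;S\rightarrow T}=0$ for $T\subsetneq S$: nonnegativity and the conservation identities then follow from the probabilistic interpretation $L_S=\EE\{\Gamma_S\}$ in Appendix~\ref{app:osf-lemma} (the $\Gamma_S$ partition the geometric waiting times, cf.\ \eqref{eq:Gamma-sum}), and $\sum_T x_T=\sum_{j=1}^m r_m/q_j=1$ gives \eqref{eq:total-x} in the limit. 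Until you commit to such an explicit assignment and check the chained inequalities against it, the proposal is an outline of a proof rather than a proof.
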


The perfect channel symmetry condition in Proposition~\ref{prop:cap-sym}  may be a bit restrictive for real environments as most broadcast channels are non-symmetric. A more realistic setting is to allow channel asymmetry while assuming spatial independence between different destinations $d_i$. 

\begin{definition}
A 1-to-$K$ broadcast PEC is {\em spatially independent} if the channel parameters $\left\{p_{S\overline{[K]\backslash S}}:\forall S\in 2^{[K]}\right\}$ satisfy
\begin{align}
\forall S\in2^{[K]},~ p_{S\overline{[K]\backslash S}}=\left(\prod_{k\in S}p_k\right)\left(\prod_{k\in [K]\backslash S} (1-p_k)\right),\nonumber
\end{align}
where $p_k$ is the marginal success probability of destination $d_k$.
\end{definition}

{\em Note:} A symmetric 1-to-$K$ broadcast PEC needs not be spatially independent. A spatially independent PEC is symmetric if $p_1=p_2=\cdots=p_K$.

To describe the capacity results for spatially independent 1-to-$K$ PECs, we need the following additional definition.

\begin{definition}
Consider a 1-to-$K$ broadcast PEC with marginal success probabilities $p_1$ to $p_K$. Without loss of generality, assume $p_1\leq p_2\leq \cdots\leq p_K$, which can be achieved by relabeling. We say a rate vector $(R_1,\cdots, R_K)$ is {\em one-sidedly fair} if
\begin{align}
\forall i<j,~R_i(1-p_i)\geq R_j(1-p_j). \nonumber
\end{align}
We use $\Lambda_{\text{osf}}$ to denote the collection of all one-sidedly fair rate vectors.
\end{definition}

The one-sided fairness contains many practical scenarios of interest. For example, the perfectly fair rate vector $(R,R,\cdots, R)$ by definition is also one-sidedly fair. Another example is when $\min(p_1,\cdots, p_K)>\frac{1}{2}$ and we allow the rate $R_k$ to be proportional to the corresponding marginal success probability $p_k$, i.e., $R_k=p_k R$,  then the rate vector
$(p_1R, p_2R,\cdots, p_KR)$ is also one-sidedly fair.

For the following, we provide the capacity of spatially independent 1-to-$K$ PECs with COF under the condition of one-sided fairness.

\begin{proposition}\label{prop:cap-osf}
Suppose the 1-to-$K$ PEC of interest is spatially independent with marginal success probabilities  $0<p_1\leq p_2\leq \cdots\leq p_K$. Any one-sidedly fair rate vector $(R_1,\cdots, R_K)\in \Lambda_\text{osf}$ is in the capacity region if and only if $(R_1,\cdots, R_K)\in \Lambda_\text{osf}$ satisfies
\begin{align}
\sum_{k=1}^K\frac{R_k}{1-\prod_{l=1}^k(1-p_l)}\leq  1.\label{eq:osf-cap}
\end{align}
\end{proposition}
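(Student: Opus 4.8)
The plan is to establish the two directions separately. The converse (``only if'') should be essentially immediate: given the spatial independence assumption, the permutation $\pi$ that sorts the marginal probabilities in increasing order, i.e.\ $\pi(j)=j$ with $p_1\leq\cdots\leq p_K$, yields $p_{\cup S_j^\pi}=p_{\cup\{1,\dots,j\}}=1-\prod_{l=1}^j(1-p_l)$, because under spatial independence a packet misses $\{1,\dots,j\}$ with probability exactly $\prod_{l=1}^j(1-p_l)$. Substituting this into the outer bound \eqref{eq:pi-outer} of Proposition~\ref{prop:outer} gives exactly \eqref{eq:osf-cap}, so any achievable rate vector (one-sidedly fair or not) must satisfy it. Hence the only real content is the achievability (``if'') direction for $(R_1,\dots,R_K)\in\Lambda_\text{osf}$ satisfying \eqref{eq:osf-cap}.

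For achievability I would specialize the general inner bound of Proposition~\ref{prop:ach2} (or its convex-hull corollary) to spatially independent channels and show that, under the one-sided fairness hypothesis, the LP feasibility conditions \eqref{eq:total-x}--\eqref{eq:ind-length-2} admit a solution whenever \eqref{eq:osf-cap} holds with strict inequality. Concretely, I would guess a structured solution: choose the cardinality-compatible ordering $\prec$ that, among sets of equal size, ranks by the sorted index (so that ``interference generated for $d_k$'' always involves the lower-indexed, worse receivers first), and then try an ansatz in which only the ``diagonal'' evolution variables $w_{k;S\to T}$ with $S=\{1,\dots,j\}\setminus\{k\}$-type structure are nonzero. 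The intuition, matching the multi-phase picture in Section~\ref{subsec:example}, is that a phase-$j$ packet mixes the sessions of the $j$ ``weakest'' receivers $d_1,\dots,d_j$ and is useful to all of them by code alignment; the number of such phase-$j$ transmissions is $\approx$ (residual load of session $j$)$/(1-\prod_{l=1}^j(1-p_l))$, and one-sided fairness $R_i(1-p_i)\geq R_j(1-p_j)$ for $i<j$ is exactly the condition guaranteeing that the ``heavier'' (lower-index) sessions still have undelivered packets to mix when the lighter ones do, so the alignment never stalls. I would verify that with this choice the overhearing-balance inequalities \eqref{eq:ind-length-1}--\eqref{eq:ind-length-2} reduce, after using $f_p((S\setminus T)\overline{[K]\setminus S})=\big(\prod_{i\in S\setminus T}p_i\big)\big(\prod_{j\in[K]\setminus S}(1-p_j)\big)$, to telescoping identities, and that summing the per-phase lengths and dividing through reproduces the left side of \eqref{eq:osf-cap}; strictness of \eqref{eq:osf-cap} then gives the slack needed for \eqref{eq:total-x}, and a standard limiting argument over $\epsilon\to0$, $n\to\infty$, $q\to\infty$ closes the region.

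Alternatively — and this may be cleaner to write — one can prove achievability directly by an explicit $K$-phase packet-evolution scheme rather than by solving the LP: in phase $j$ the source transmits random $\GF(q)$-linear combinations of the currently ``pending'' packets of sessions $1,\dots,j$, repeating each until heard by at least one of $d_1,\dots,d_j$, and maintaining for each receiver $d_k$ the invariant that its received signals span its own packets plus a known aligned interference subspace. One then tracks expected queue lengths via the law of large numbers; the recursion for the pending load of session $j$ after phase $j$ involves the factor $1-\prod_{l=1}^{j}(1-p_l)$, and one-sided fairness ensures the queues drain in the right order. I expect the main obstacle to be the bookkeeping in the LP verification: checking that the proposed $(x_S,w_{k;S\to T})$ satisfies \emph{all} of \eqref{eq:ind-length-1} and \eqref{eq:ind-length-2} simultaneously, since those inequalities couple many $w$ variables through the $\prec$-ordering and the $f_p$ weights, and making the one-sided fairness hypothesis intervene at precisely the right inequality (the one bounding the $T\neq S$ ``degenerate'' evolutions) is delicate; getting the reduction to a clean telescoping sum is where the real work lies. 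Establishing that the proposed solution is feasible \emph{only} in the one-sidedly fair regime — i.e.\ that outside $\Lambda_\text{osf}$ this particular construction genuinely fails even though \eqref{eq:osf-cap} may hold — is not needed for the proposition as stated, so I would not pursue it.
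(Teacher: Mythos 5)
Your outline follows the same route as the paper: the converse is exactly the sorted-permutation instance of Proposition~\ref{prop:outer} under spatial independence, and achievability is obtained by exhibiting a feasible point of the linear program in Proposition~\ref{prop:ach2} with only the ``diagonal'' variables $w_{k;S\rightarrow S}$ nonzero. But the proposal stops precisely where the work begins. First, you never specify the values of the nonzero variables; the paper takes $w_{k;S\rightarrow S}=R_k L_S$ with $L_S$ an explicit inclusion--exclusion expression in the quantities $p_{\cup S_1}$ (probabilistically, the expected number of logical slots a session-$k$ packet spends with overhearing set exactly $S$), and this explicit form is what makes the per-session identity $\sum_{T:\,k\in T,\,[k-1]\cap T=\emptyset}x_T=R_k/p_{\cup[k]}$ in \eqref{eq:prf-sum} come out, so that \eqref{eq:total-x} reduces to \eqref{eq:osf-cap}. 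Also, your remark about restricting the nonzero variables to sets of the form $\{1,\dots,j\}\setminus\{k\}$ would, if taken literally, be wrong: one needs $w_{k;S\rightarrow S}>0$ for every $S\subseteq([K]\setminus k)$, because a packet intended for $d_k$ can be overheard by an arbitrary subset of the other receivers, not only a prefix.

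Second, and more importantly, the single place where one-sided fairness enters is not in verifying \eqref{eq:ind-length-1}--\eqref{eq:ind-length-2} (those hold for the chosen $w$'s by packet-conservation identities regardless of the rates) but in evaluating the max in \eqref{eq:phase-xT}: one must show that for every nonempty $T$ and every $k_1<k_2$ in $T$ one has $R_{k_1}L_{T\setminus k_1}\geq R_{k_2}L_{T\setminus k_2}$, so that $x_T$ equals the term indexed by $\min T$ and $\sum_T x_T$ collapses to the left-hand side of \eqref{eq:osf-cap}. Your proposal asserts this only qualitatively (``the heavier sessions still have packets to mix''), and it is not a telescoping triviality: the paper proves it as Lemma~\ref{lem:osf-main} by representing $L_S=\EE\{\Gamma_S\}$ for a coupon-collector-type process built from independent geometric random variables and applying a change-of-measure bound between the laws of $X_{k_1}$ and $X_{k_2}$, which yields $L_{T\setminus k_1}/(1-p_{k_1})\geq L_{T\setminus k_2}/(1-p_{k_2})$; multiplying by the one-sided-fairness inequality $R_{k_1}(1-p_{k_1})\geq R_{k_2}(1-p_{k_2})$ then gives the claim. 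Without this step the scheme is not shown to fit in the budget of $n$ slots, so the achievability direction of your argument has a genuine gap.
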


Proposition~\ref{prop:cap-osf} implies that Proposition~\ref{prop:outer} is indeed the capacity region when focusing on the one-sidedly fair rate region $\Lambda_\text{osf}$.


\section{The Packet Evolution Schemes\label{sec:achievability}}

For the following, we describe a new class of coding schemes, termed the {\em packet evolution} (PE) scheme, which embodies the concept of code alignment and achieves (near) optimal throughput. The PE scheme is the building block of the capacity /  achievability results in Section~\ref{sec:main}.


\subsection{Description Of The Packet Evolution Scheme\label{subsec:PE}}
The packet evolution scheme is described as follows. Recall that each $(s,d_k)$ session has $nR_k$ information packets $X_{k,1}$ to $X_{k,nR_k}$. We associate each of the $\sum_{k=1}^KnR_k$ information packets  with {\em an intersession coding vector} $\vv$ and a set $S\subseteq [K]$. An intersession coding vector is a $\left(\sum_{k=1}^KnR_k\right)$-dimensional row vector with each coordinate being a scalar in $\GF(q)$. Before the start of the broadcast, for any $k\in[K]$ and $j\in[nR_k]$ we initialize the corresponding vector $\vv$ of $X_{k,j}$ in a way that the only nonzero coordinate of $\vv$ is the coordinate corresponding to $X_{k,j}$ and all other coordinates are zero. Without loss of generality, we set the value of the only non-zero coordinate to one. That is, initially the coding vectors $\vv$ are set to the elementary basis vectors of the entire $\left(\sum_{k=1}^KnR_k\right)$-dimensional message space.

 For any $k\in[K]$ and $j\in[nR_k]$ the set $S$ of $X_{k,j}$ is initialized to $\emptyset$. As will be clear shortly after, we call  $S$ the {\em overhearing set}\footnote{Unlike the existing results \cite{KattiRahulHuKatabiMedardCrowcroft06}, in this work the overhearing set does not mean that the receivers $d_i$ in $S(X_{k,j})$ have known the value of $X_{k,j}$. Detailed discussion of the overhearing set $S(X_{k,j})$ are provided in Lemma~\ref{lem:non-interfering}.} of the packet $X_{k,j}$. For easier reference, we use $\vv(X_{k,j})$ and $S(X_{k,j})$ to denote the intersession coding vector and the overhearing set of $X_{k,j}$.

Throughout the $n$ broadcast time slots, source $s$ constantly updates the $S(X_{k,j})$ and $\vv(X_{k,j})$ according to the COF. The main structure of a packet evolution scheme can now be described as follows.

\algtop{The Packet Evolution Scheme}
\begin{algorithmic}[1]

\STATE Source $s$ maintains a single flag ${\mathsf{f}}_{\text{change}}$. Initially, set ${\mathsf{f}}_{\text{change}}\leftarrow 1$.
\FOR{$t=1,\cdots, n$, }
\STATE In the beginning of the $t$-th time slot, do Lines~\ref{line:begin1} to~\ref{line:transmit-v}.
\IF{${\mathsf{f}}_{\text{change}}= 1$\label{line:begin1}}
\STATE Choose a non-empty subset $T\subseteq [K]$.
\STATE Run a subroutine {\sc Packet Selection}, which takes $T$ as input and outputs a collection of $|T|$ packets $\{X_{k,j_k}: \forall k\in T\}$, termed the {\em target packets}, for which all $X_{k,j_k}$ satisfy $(S(X_{k,j_k})\cup \{k\})\supseteq T$.\label{line:targeting-set}

\STATE Generate $|T|$ uniformly random coefficients $c_k\in \GF(q)$ for all $k\in T$ and construct an intersession coding vector $\vv_\text{tx}\leftarrow \sum_{k\in T}c_k\cdot \vv(X_{k,j_k})$.\label{line:vv-construct}

\STATE Set ${\mathsf{f}}_{\text{change}}\leftarrow 0$.
\ENDIF

\STATE Sends out a linearly intersession coded packet according to the coding vector $\vv_\text{tx}$. That is, we send
\begin{align}Y_\text{tx}=\vv_\text{tx}\cdot (X_{1,1},\cdots, X_{K,nR_K})^\tran\nonumber
\end{align}  where $(X_{1,1},\cdots, X_{K,nR_K})^\tran$ is a column vector consisting of all information symbols.\footnote{It is critical to note that the coding operation is based purely on $\vv_\text{tx}$ rather than on the list of the target packets $X_{k,j_k}$. Once $\vv_\text{tx}$ is decided, we create a new coded packet based on the coordinates of $\vv_\text{tx}$. It is possible that
$\vv_\text{tx}$ has non-zero coordinates corresponding to some $X_{k',j}$ that are not one of the target packets $X_{k,j_k}$. Those $X_{k',j}$ will participate in creating the coded packet.}
\label{line:transmit-v}



\STATE In the end of the $t$-th time slot, use a subroutine {\sc Update} to revise the $\vv(X_{k,j_k})$ and $S(X_{k,j_k})$ values of all target packets $X_{k,j_k}$ based on the COF.

\IF{the $S(X_{k,j_k})$ value changes for at least one target packet $X_{k,j_k}$ after the {\sc Update}}
\STATE Set ${\mathsf{f}}_{\text{change}}\leftarrow 1$.
\ENDIF

\ENDFOR

\end{algorithmic} \algbot

In summary, a group of target packets $\{X_{k,j_k}\}$ are selected according to the choice of the subset $T$. The corresponding vectors $\{\vv(X_{k,j_k})\}$ are used to construct a coding vector $\vv_\text{tx}$. The same coded packet $Y_\text{tx}$, corresponding to $\vv_\text{tx}$, is then sent repeatedly for many time slots until one of the target packets $X_{k,j_k}$ {\em evolves} (when the corresponding $S(X_{k,j_k})$ changes). Then a new subset $T$ is chosen and the process is repeated until we use up all $n$ time slots. Three subroutines are used as the building blocks of a packet evolution method: (i) How to choose the non-empty $T\subseteq [K]$; (ii) For each $k\in[K]$, how to select a single target packets $X_{k,j_k}$ among all $X_{k,j}$ satisfying $(S(X_{k,j})\cup \{k\})\supseteq T$; and (iii) How to update the coding vectors $\vv(X_{k,j_k})$ and the overhearing sets $S(X_{k,j_k})$. For the following, we first describe the detailed update rules.

\algtop{Update of $S(X_{k,j_k})$ and $\vv(X_{k,j_k})$}
\begin{algorithmic}[1]
\STATE {\bf Input:} The $T$ and $\vv_\text{tx}$ used for transmission in the current time slot; And $S_\text{rx}$, the set of destinations $d_i$ which receive the transmitted coded packet in the current time slot. ($S_\text{rx}$ is obtained through the COF in the end of the current time slot.)
\FOR{all $k\in T$}\label{line:update-4-K}
\IF{$S_\text{rx}\nsubseteq S(X_{k,j_k})$}
\STATE Set $S(X_{k,j_k})\leftarrow (T\cap S(X_{k,j_k}))\cup S_\text{rx}$.\label{line:S-update}
 \STATE Set $\vv(X_{k,j_k})\leftarrow \vv_\text{tx}$.\label{line:v-update}
\ENDIF

\ENDFOR

\end{algorithmic} \algbot

\vspace{.3cm}
\noindent {\em An Illustrative Example Of The PE Scheme:}

\vspace{.2cm}
Let us revisit the optimal coding scheme of the example in Fig.~\ref{fig:example} of Section~\ref{subsec:example}. Before broadcast, the three information packets $X_{1}$ to $X_3$ have the corresponding $\vv$ and $S$: $\vv(X_1)=(1,0,0)$, $\vv(X_2)=(0,1,0)$, and $\vv(X_3)=(0,0,1)$, and $S(X_1)=S(X_2)=S(X_3)=\emptyset$. We use the following table
for summary.

\begin{center}\begin{tabular}{|c|c|c|}
\hline
$X_1$: (1,0,0),$\emptyset$ &$X_2$: (0,1,0),$\emptyset$ & $X_3$: (0,0,1),$\emptyset$ \\
\hline
\end{tabular}
\end{center}

Consider a duration of 5 time slots.

Slot 1: Suppose that $s$ chooses $T=\{1\}$. Since $(\emptyset\cup \{1\})\supseteq T$, {\sc Packet Selection} outputs $X_1$. The coding vector $\vv_\text{tx}$ is thus a scaled version of $\vv(X_1)=(1,0,0)$. Without loss of generality, we choose $\vv_\text{tx}=(1,0,0)$. Based on $\vv_\text{tx}$, $s$ transmits a packet $1 X_1+0 X_2+0 X_3=X_1$. Suppose $[X_1]$ is received by $d_2$, i.e., $S_\text{rx}=\{2\}$. Then during {\sc Update}, $S_\text{rx}=\{2\}\nsubseteq S(X_1)=\emptyset$. {\sc Update} thus sets $S(X_1)=\{2\}$ and $\vv(X_1)=\vv_\text{tx}=(1,0,0)$. The packet summary becomes

\begin{center}\begin{tabular}{|c|c|c|}
\hline
$X_1$: (1,0,0),$\{2\}$ &$X_2$: (0,1,0),$\emptyset$ & $X_3$: (0,0,1),$\emptyset$ \\
\hline
\end{tabular}.
\end{center}

Slot 2: Suppose that $s$ chooses $T=\{2\}$. Since $(\emptyset\cup \{2\})\supseteq T$,  {\sc Packet Selection} outputs $X_2$. The coding vector $\vv_\text{tx}$ is thus a scaled version of $\vv(X_2)=(0,1,0)$. Without loss of generality, we choose $\vv_\text{tx}=(0,1,0)$ and accordingly $[X_2]$ is sent. Suppose $[X_2]$ is received by $d_1$, i.e., $S_\text{rx}=\{1\}$. Since $S_\text{rx}\nsubseteq S(X_2)$, after {\sc Update} the packet summary becomes

\begin{center}\begin{tabular}{|c|c|c|}
\hline
$X_1$: (1,0,0),$\{2\}$ &$X_2$: (0,1,0),$\{1\}$ & $X_3$: (0,0,1),$\emptyset$ \\
\hline
\end{tabular}.
\end{center}

Slot 3: Suppose that $s$ chooses $T=\{3\}$ and {\sc Packet Selection} outputs $X_3$. The coding vector $\vv_\text{tx}$ is thus a scaled version of $\vv(X_3)=(0,0,1)$, and we choose $\vv_\text{tx}=(0,0,1)$. Accordingly $[X_3]$ is sent. Suppose $[X_3]$ is received by $d_1$ and $d_2$, i.e., $S_\text{rx}=\{1,2\}$. Then after {\sc Update}, the packet summary becomes

\begin{center}\begin{tabular}{|c|c|c|}
\hline
$X_1$: (1,0,0),$\{2\}$ &$X_2$: (0,1,0),$\{1\}$ & $X_3$: (0,0,1),$\{1,2\}$ \\
\hline
\end{tabular}.
\end{center}

Slot 4: Suppose that $s$ chooses $T=\{1,2\}$. Since $(S(X_1)\cup \{1\})\supseteq T$ and $(S(X_2)\cup \{2\})\supseteq T$, {\sc Packet Selection} outputs $\{X_1,X_2\}$. $\vv_\text{tx}$ is thus a linear combination of $\vv(X_1)=(1,0,0)$ and $\vv(X_2)=(0,1,0)$. Without loss of generality, we choose $\vv_\text{tx}=(1,1,0)$ and accordingly $[X_1+X_2]$ is sent. Suppose $[X_1+X_2]$ is received by $d_3$, i.e., $S_\text{rx}=\{3\}$. Then during {\sc Update}, for $X_1$,  $S_\text{rx}=\{3\}\nsubseteq S(X_1)=\{2\}$. {\sc Update} thus sets $S(X_1)=\{2,3\}$ and $\vv(X_1)=\vv_\text{tx}=(1,1,0)$. For $X_2$, $S_\text{rx}=\{3\} \nsubseteq S(X_2)=\{1\}$. {\sc Update} thus sets $S(X_2)=\{1,3\}$ and $\vv(X_2)=\vv_\text{tx}=(1,1,0)$.  The packet summary becomes


\begin{center}\begin{tabular}{|c|c|}
\hline
$X_1$: (1,1,0),$\{2,3\}$ &$X_2$: (1,1,0),$\{1,3\}$\\
\hline
 $X_3$: (0,0,1),$\{1,2\}$  &~ \\
\hline
\end{tabular}.
\end{center}


Slot 5: Suppose that $s$ chooses $T=\{1,2,3\}$. By Line~\ref{line:targeting-set} of {\sc The Packet Evolution Scheme}, the subroutine {\sc Packet Selection} outputs $\{X_1,X_2,X_3\}$. $\vv_\text{tx}$ is thus a linear combination of $\vv(X_1)=(1,1,0)$, $\vv(X_2)=(1,1,0)$, and $\vv(X_3)=(0,0,1)$, which is of the form $\alpha(X_1+X_2)+\beta X_3$. {\em Note that the packet evolution scheme automatically achieves code alignment}, which is the key component of the optimal coding policy in Section~\ref{subsec:example}. Without loss of generality, we choose $\alpha=\beta=1$ and $\vv_\text{tx}=(1,1,1)$. $Y_\text{tx}=[X_1+X_2+X_3]$ is sent accordingly. Suppose $[X_1+X_2+X_3]$ is received by $\{d_1,d_2,d_3\}$, i.e., $S_\text{rx}=\{1,2,3\}$. Then after {\sc Update}, the summary of the packets becomes
\begin{center}\begin{tabular}{|c|c|}
\hline
$X_1$: (1,1,1),$\{1,2,3\}$ &$X_2$: (1,1,1),$\{1,2,3\}$\\
\hline
  $X_3$: (1,1,1),$\{1,2,3\}$ &~ \\
\hline
\end{tabular}.
\end{center}

From the above step-by-step illustration, we see that the optimal coding policy in Section~\ref{subsec:example} is a special case of a packet evolution scheme.

\subsection{Properties of A Packet Evolution Scheme}

We term the packet evolution (PE) scheme in Section~\ref{subsec:PE} a {\em generic} PE method since it does not depend on how to choose $T$ and the target packets $X_{k,j_k}$ and only requires the output of {\sc Packet Selection} satisfying
$(S(X_{k,j_k})\cup \{k\})\supseteq T, \forall k\in T$. In this subsection, we state some key properties for any generic PE scheme. The intuition of the PE scheme is based on these key properties and will be discussed further in Section~\ref{subsec:intuition-PE}.

We first define the following notation for any linear network codes. (Note that the PE scheme is a linear network code.)


\begin{definition}
Consider any linear network code. For any destination $d_k$, each of the received packet $Z_k(t)$ can be represented by a vector $\w_k(t)$, which is a $\left(\sum_{k=1}^KnR_k\right)$-dimensional vector containing the coefficients used to generate $Z_k(t)$. That is, $Z_k(t)=\w_k(t)\cdot (X_{1,1},\cdots, X_{K,nR_K})^\tran$. If $Z_k(t)$ is an erasure, we simply set $\w_k(t)$ to be an all-zero vector. The {\em knowledge space} of destination $d_k$ in the end of time $t$ is denoted by $\Omega_{\text{Z},k}(t)$, which is the linear span of $\w_k(\tau)$, $\tau\leq t$. That is, $\Omega_{\text{Z},k}(t)\stackrel{\Delta}{=}\linsp(\w_k(\tau):\forall \tau\in[t])$.
\end{definition}

\begin{definition} For any non-coded information packet $X_{k,j}$, the corresponding intersession coding vector is a $\left(\sum_{k=1}^KnR_k\right)$-dimensional vector with a single one in the corresponding coordinate and all other coordinates being zero. We use $\delta_{k,j}$ to denote such a delta vector. The message space of $d_k$ is then defined as $\Omega_{M,k}=\linsp(\delta_{k,j}:\forall j\in[nR_k])$.
\end{definition}

With the above definitions, we have the following straightforward lemma:
\begin{lemma}\label{lem:simple-dec}
In the end of time $t$, destination $d_k$ is able to decode all the desired information packets $X_{k,j}$, $\forall j\in [nR_k]$, if and only if $\Omega_{M,k}\subseteq \Omega_{Z,k}(t)$.
\end{lemma}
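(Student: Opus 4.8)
\textbf{Proof proposal for Lemma~\ref{lem:simple-dec}.}

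The plan is to prove the two directions of the ``if and only if'' separately, both essentially by the standard linear-algebra observation that a receiver can recover exactly the linear functionals of the message vector that lie in its knowledge space. Recall that $d_k$ observes the symbols $Z_k(\tau) = \w_k(\tau)\cdot(X_{1,1},\cdots,X_{K,nR_K})^\tran$ for $\tau\in[t]$ (with $\w_k(\tau)=\0$ on erased slots, carrying no information), and that $\Omega_{Z,k}(t) = \linsp(\w_k(\tau):\tau\in[t])$. The key elementary fact is: given the values $\{\w_k(\tau)\cdot\x : \tau\in[t]\}$ of an unknown vector $\x = (X_{1,1},\cdots,X_{K,nR_K})^\tran$, a receiver can compute $\u\cdot\x$ for a fixed vector $\u$ if and only if $\u\in\Omega_{Z,k}(t)$; moreover when $\u\in\Omega_{Z,k}(t)$ the value $\u\cdot\x$ is determined as the corresponding linear combination of the observed $Z_k(\tau)$'s, and when $\u\notin\Omega_{Z,k}(t)$ the value $\u\cdot\x$ is not a function of the observations (over the uniform distribution of $\x$ on $\GF(q)^{\sum nR_k}$, it remains uniform given the observations).

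For the ``if'' direction: suppose $\Omega_{M,k}\subseteq\Omega_{Z,k}(t)$. Each desired packet satisfies $X_{k,j} = \delta_{k,j}\cdot\x$ and $\delta_{k,j}\in\Omega_{M,k}\subseteq\Omega_{Z,k}(t)$, so by the elementary fact $d_k$ can solve for $X_{k,j}$ as an explicit $\GF(q)$-linear combination of its received symbols; doing this for every $j\in[nR_k]$ recovers all of $\X_k$. For the ``only if'' direction: suppose $\Omega_{M,k}\not\subseteq\Omega_{Z,k}(t)$, so there is some $\delta_{k,j}\notin\Omega_{Z,k}(t)$. Then $X_{k,j}=\delta_{k,j}\cdot\x$ is independent of the vector of observations $\{Z_k(\tau):\tau\in[t]\}$ and remains uniformly distributed on $\GF(q)$ conditioned on them; hence no (even randomized) decoder $g_k$ can output $\hat X_{k,j}=X_{k,j}$ with probability exceeding $1/q$, so $d_k$ cannot decode all its packets. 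This can be argued cleanly by extending $\{\w_k(\tau)\}$ together with $\delta_{k,j}$ to account for the coordinate $\delta_{k,j}$ not spanned, or simply by noting that the map $\x\mapsto(\{\w_k(\tau)\cdot\x\}_\tau,\ \delta_{k,j}\cdot\x)$ is a surjective linear map onto its image with the last coordinate free, so each value of the observations is consistent with all $q$ values of $X_{k,j}$ equally often.

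I expect no real obstacle here — this is the ``straightforward lemma'' the text advertises. The only point requiring a little care is phrasing the converse as an \emph{information-theoretic} impossibility (matching the $\prop(\hat\X_k\neq\X_k)<\epsilon$ achievability definition) rather than merely a ``cannot invert the system'' statement: one must invoke the uniform prior on the information symbols to conclude that $X_{k,j}$ stays uniform given $d_k$'s entire observation, so that the decoding error probability is at least $1-1/q$, which does not vanish as $n\to\infty$ for fixed $q$ and in fact cannot be made smaller than $\epsilon$ once $\epsilon<1-1/q$. With that remark the equivalence follows.
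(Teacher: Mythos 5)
Your proof is correct and is exactly the standard linear-algebra argument the paper intends: the paper states this as a "straightforward lemma" and gives no proof at all, and your two directions (recoverability of $\delta_{k,j}\cdot\x$ iff $\delta_{k,j}\in\Omega_{Z,k}(t)$, plus the uniform-prior observation that an unspanned coordinate stays uniform given the observations) are precisely what is implicitly being invoked. Your closing remark about phrasing the converse information-theoretically, using the uniform distribution of the $X_{k,j}$, is the right way to make "able to decode" precise and matches the paper's achievability definition.
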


We now define ``non-interfering vectors" from the perspective of a destination $d_k$.
\begin{definition}
In the end of time $t$ (or in the beginning of time $(t+1)$), a vector $\vv$ (and thus the corresponding coded packet) is ``non-interfering" from the perspective of $d_k$ if
\begin{align}
\vv\in \linsp(\Omega_{Z,k}(t),\Omega_{M,k}).\nonumber
\end{align}
\end{definition}

We note that any non-interfering vector $\vv$ can always be expressed as the sum of two vectors $\vv'$ and $\w$, where $\vv'\in \Omega_{M,k}$ is a linear combination of all information vectors for $d_k$ and $\w\in \Omega_{Z,k}(t)$ is a linear combination of all the packets received by $d_k$. If $\vv'=0$, then $\vv=\w$ is a {\em transparent} packet from $d_k$'s perspective since $d_k$ can compute the value of $\w\cdot (X_{1,1},\cdots, X_{K,nR_K})^\tran$ from its current knowledge space $\Omega_{Z,k}(t)$.
If $\vv'\neq 0$, then $\vv=\vv'+\w$ can be viewed as a pure information packet $\vv'\in \Omega_{M,k}$ after subtracting the unwanted $\w$ vector. In either case, $\vv$ is {\em not interfering} with the transmission of the $(s,d_k)$ session, which gives the name of ``non-interfering vectors."

The following Lemmas~\ref{lem:non-interfering} and \ref{lem:decodability} discuss the time dynamics of the PE scheme. To distinguish different time instants, we add a time subscript and use $S_{t-1}(X_{k,j_k})$ and $S_{t}(X_{k,j_k})$ to denote the overhearing set of $X_{k,j_k}$ in the end of time $(t-1)$ and $t$, respectively. Similarly, $\vv_{t-1}(X_{k,j_k})$ and $\vv_{t}(X_{k,j_k})$ denote the coding vectors in the end of time $(t-1)$ and $t$, respectively.

\begin{lemma}\label{lem:non-interfering}In the end of the $t$-th time slot, consider any $X_{k,j}$ out of all the information packets $X_{1,1}$ to $X_{K,nR_K}$. Its assigned vector $\vv_t(X_{k,j})$ is non-interfering from the perspective of $d_i$ for all $i\in (S_t(X_{k,j})\cup\{k\})$.
\end{lemma}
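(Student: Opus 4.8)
The natural approach is induction on the time slot $t$, tracking how the pair $(S_t(X_{k,j}),\vv_t(X_{k,j}))$ changes under the \textsc{Update} subroutine. The base case $t=0$ is immediate: initially $S_0(X_{k,j})=\emptyset$ and $\vv_0(X_{k,j})=\delta_{k,j}\in\Omega_{M,k}$, so $\vv_0(X_{k,j})$ is non-interfering from $d_k$'s perspective, which is exactly the claim since $S_0(X_{k,j})\cup\{k\}=\{k\}$. For the inductive step I would fix a destination index $i$ and split into two cases according to whether $X_{k,j}$ is a target packet that actually evolves in slot $t$.

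\emph{Case 1: $X_{k,j}$ is not updated in slot $t$} (either it is not a target packet, or $S_\text{rx}\subseteq S_{t-1}(X_{k,j})$). Then $\vv_t(X_{k,j})=\vv_{t-1}(X_{k,j})$ and $S_t(X_{k,j})\subseteq S_{t-1}(X_{k,j})\cup S_\text{rx}$; in fact by Line~\ref{line:S-update} $S_t(X_{k,j})=(T\cap S_{t-1}(X_{k,j}))\cup S_\text{rx}$ when it is a target packet that did not trigger the guard, but since $S_\text{rx}\subseteq S_{t-1}(X_{k,j})$ in that sub-case we get $S_t(X_{k,j})\subseteq S_{t-1}(X_{k,j})$. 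For each $i\in S_t(X_{k,j})\cup\{k\}\subseteq S_{t-1}(X_{k,j})\cup\{k\}$ the induction hypothesis gives $\vv_{t-1}(X_{k,j})\in\linsp(\Omega_{Z,i}(t-1),\Omega_{M,i})\subseteq\linsp(\Omega_{Z,i}(t),\Omega_{M,i})$, since the knowledge space only grows. So $\vv_t(X_{k,j})$ is still non-interfering from $d_i$'s perspective. (One must check that the non-target packets $X_{k',j}$ whose coordinates happen to be nonzero in $\vv_\text{tx}$ — cf.\ the footnote after Line~\ref{line:transmit-v} — are not affected, which is true because \textsc{Update} only touches the target packets $X_{k,j_k}$.)

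\emph{Case 2: $X_{k,j}=X_{k,j_k}$ is a target packet and $S_\text{rx}\nsubseteq S_{t-1}(X_{k,j_k})$.} Then $\vv_t(X_{k,j_k})=\vv_\text{tx}$ and $S_t(X_{k,j_k})=(T\cap S_{t-1}(X_{k,j_k}))\cup S_\text{rx}$. Fix $i\in S_t(X_{k,j_k})\cup\{k\}$. There are two sub-cases. If $i\in S_\text{rx}$, then $d_i$ received the coded packet with coding vector $\vv_\text{tx}$ in slot $t$, so $\vv_\text{tx}=\w_i(t)\in\Omega_{Z,i}(t)\subseteq\linsp(\Omega_{Z,i}(t),\Omega_{M,i})$ and we are done. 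Otherwise $i\in(T\cap S_{t-1}(X_{k,j_k}))\cup\{k\}$, i.e.\ $i=k$ or $i\in T\cap S_{t-1}(X_{k,j_k})$. Now recall $\vv_\text{tx}=\sum_{l\in T}c_l\,\vv_{t-1}(X_{l,j_l})$. For each $l\in T$ the target packet $X_{l,j_l}$ satisfies $(S_{t-1}(X_{l,j_l})\cup\{l\})\supseteq T\ni i$, so $i\in S_{t-1}(X_{l,j_l})\cup\{l\}$, and by the induction hypothesis applied to $X_{l,j_l}$ we get $\vv_{t-1}(X_{l,j_l})\in\linsp(\Omega_{Z,i}(t-1),\Omega_{M,i})\subseteq\linsp(\Omega_{Z,i}(t),\Omega_{M,i})$. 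Hence $\vv_\text{tx}$, being a linear combination of these, also lies in $\linsp(\Omega_{Z,i}(t),\Omega_{M,i})$, proving non-interference from $d_i$'s perspective.

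The crux — the step I expect to be the real content — is the second sub-case of Case 2: showing that when $i$ is an ``old'' overhearing-set member that survives the intersection with $T$ (or equals $k$), \emph{every} packet $X_{l,j_l}$ mixed into $\vv_\text{tx}$ was already non-interfering from $d_i$'s viewpoint. This is precisely why \textsc{Packet Selection} is required to return target packets all satisfying $(S(X_{k,j_k})\cup\{k\})\supseteq T$: that common-superset condition guarantees $i\in S_{t-1}(X_{l,j_l})\cup\{l\}$ for \emph{all} $l\in T$, so the induction hypothesis is simultaneously applicable to all summands. The bookkeeping to watch is the intersection with $T$ in the $S$-update rule: it is exactly what ensures $S_t(X_{k,j_k})\setminus S_\text{rx}\subseteq T$, so that the ``otherwise'' branch can invoke $i\in T$. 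Everything else (monotonicity of $\Omega_{Z,i}$, linearity of $\linsp$, the disjointness of the update to target vs.\ non-target packets) is routine.
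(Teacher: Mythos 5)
Your proof is correct and follows essentially the same route as the paper's: induction on $t$, with the unchanged/non-target packets handled by monotonicity of the knowledge spaces, the $i\in S_\text{rx}$ sub-case handled by direct reception of $\vv_\text{tx}$, and the remaining sub-case reduced to $i\in T$ so that the common-superset condition $(S_{t-1}(X_{l,j_l})\cup\{l\})\supseteq T$ lets the induction hypothesis be applied to every summand of $\vv_\text{tx}$. The only cosmetic difference is that the paper first derives the identity $S_t(X_{k,j_k})\cup\{k\}=T\cup S_\text{rx}$ before concluding $i\in T\setminus S_\text{rx}$, whereas you read $i\in T$ directly off the update rule; both are equivalent.
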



To illustrate Lemma~\ref{lem:non-interfering}, consider our 5-time-slot example. In the end of Slot~4, we have $\vv(X_1)=(1,1,0)$ and $S(X_1)\cup \{1\}=\{1,2,3\}$. From $d_1$'s perspective, $\Omega_{Z,1}(4)=\linsp((0,1,0),(0,0,1))$ and $\Omega_{M,1}=\linsp((1,0,0))$. $\vv(X_1)\in \linsp(\Omega_{Z,1}(4),\Omega_{M,1})$ is indeed non-interfering from $d_1$'s perspective. The same reasoning can be applied to $d_2$ to show that $\vv(X_1)$ is non-interfering from $d_2$'s perspective. For $d_3$, $\Omega_{Z,3}(4)=\linsp((1,1,0))$ and $\Omega_{M,3}=\linsp((0,0,1))$.  $\vv(X_1)\in \linsp(\Omega_{Z,3}(4),\Omega_{M,3})$ is indeed non-interfering from $d_3$'s perspective. Lemma~\ref{lem:non-interfering} holds for our illustrative example.

\begin{lemma}\label{lem:decodability} In the end of the $t$-th time slot, we use $\Omega_{R,k}(t)$ to denote the {\em remaining space} of the PE scheme:
\begin{align}
&\Omega_{R,k}(t)\stackrel{\Delta}{=}\nonumber\\
&\linsp(\vv_t(X_{k,j}):\forall j\in [nR_k] \text{ satisfying } k\notin S_t(X_{k,j})).\nonumber
\end{align}

For any $n$ and any $\epsilon>0$, there exists a sufficiently large finite field $\GF(q)$ such that for all $k\in[K]$ and $t\in[n]$,
\begin{align}
&\prop\left(\linsp(\Omega_{Z,k}(t), \Omega_{R,k}(t))=\linsp(\Omega_{Z,k}(t), \Omega_{M,k})\right)\nonumber\\
&>1-\epsilon.\nonumber
\end{align}
\end{lemma}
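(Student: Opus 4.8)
The plan is to show that $\Omega_{R,k}(t)$ together with the already-decodable space $\Omega_{Z,k}(t)$ spans the same space as $\Omega_{M,k}$ together with $\Omega_{Z,k}(t)$, with high probability over the random coefficients $c_k\in\GF(q)$ drawn in Line~\ref{line:vv-construct} of the PE scheme. The inclusion $\linsp(\Omega_{Z,k}(t),\Omega_{R,k}(t))\subseteq\linsp(\Omega_{Z,k}(t),\Omega_{M,k})$ is the \emph{deterministic} direction and should follow from Lemma~\ref{lem:non-interfering}: every $\vv_t(X_{k,j})$ with $k\notin S_t(X_{k,j})$ is in particular non-interfering from $d_k$'s perspective (take $i=k\in S_t(X_{k,j})\cup\{k\}$), hence lies in $\linsp(\Omega_{Z,k}(t),\Omega_{M,k})$; since $\Omega_{R,k}(t)$ is spanned by such vectors, the inclusion is immediate. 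So the whole content of the lemma is the reverse inclusion, i.e.\ a \emph{dimension-counting} statement: with high probability
\[
\dim\linsp(\Omega_{Z,k}(t),\Omega_{R,k}(t))=\dim\linsp(\Omega_{Z,k}(t),\Omega_{M,k}).
\]

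First I would set up the bookkeeping. Fix $k$ and $t$. Partition the information packets of session $k$ into those with $k\in S_t(X_{k,j})$ and those with $k\notin S_t(X_{k,j})$; call the latter set $J_k(t)$. One would like to argue that $d_k$ has already decoded (equivalently, $\delta_{k,j}\in\Omega_{Z,k}(t)$) each $X_{k,j}$ with $k\in S_t(X_{k,j})$ — this should be a separate small claim proved by induction on $t$ using the {\sc Update} rule (once $k$ enters the overhearing set, the transmitted vector $\vv_{\text{tx}}$ was received by $d_k$, and combined with Lemma~\ref{lem:non-interfering} the packet becomes decodable). Granting that, $\linsp(\Omega_{Z,k}(t),\Omega_{M,k})$ is already spanned by $\Omega_{Z,k}(t)$ together with $\{\delta_{k,j}:j\in J_k(t)\}$, so it suffices to show each $\delta_{k,j}$ with $j\in J_k(t)$ lies in $\linsp(\Omega_{Z,k}(t),\Omega_{R,k}(t))$ w.h.p. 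Now $\vv_t(X_{k,j})=\delta_{k,j}+(\text{interference terms from other sessions})$ by the structure of the updates (each update replaces the vector by $\vv_{\text{tx}}=\sum_{k'\in T}c_{k'}\vv(X_{k',j_{k'}})$, which keeps a nonzero $\delta_{k,j}$-component as long as $k\in T$ and the $c_k$ coefficient is nonzero — a high-probability event). Since $\vv_t(X_{k,j})\in\Omega_{R,k}(t)$, it remains to show the interference part lies in $\Omega_{Z,k}(t)$; this is exactly the non-interfering property again, but we need it \emph{simultaneously} for all $j\in J_k(t)$, which is where randomness genuinely enters.

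The heart of the argument — and the step I expect to be the main obstacle — is a Schwartz–Zippel / generic-rank argument showing that the $|J_k(t)|$ vectors $\{\vv_t(X_{k,j}):j\in J_k(t)\}$, projected modulo $\Omega_{Z,k}(t)$, are linearly independent with probability at least $1-\epsilon$ once $q$ is large enough. The determinant of the relevant (appropriately chosen square submatrix of the) projected coding-vector matrix is a polynomial in the random coefficients $\{c_{k'}\}$ drawn across all time slots; one must argue this polynomial is not identically zero — i.e.\ exhibit one assignment of coefficients making the vectors independent — and bound its degree by something like $\mathrm{poly}(n)$, so that the failure probability is at most $\deg/q\le\epsilon$ for $q$ sufficiently large (and then a union bound over the finitely many pairs $(k,t)$, $k\in[K]$, $t\in[n]$, absorbs into a redefined $\epsilon$). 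Establishing non-vanishing of this determinant polynomial requires an inductive/combinatorial argument tracking how the overhearing sets grow and how each "evolution" event injects a fresh independent direction; this is the delicate part and presumably relies on the structural invariant of Lemma~\ref{lem:non-interfering} plus the specific form of the {\sc Update} rule in Lines~\ref{line:S-update}--\ref{line:v-update}. I would organize it as: (a) identify, for each $j\in J_k(t)$, the time slot at which $X_{k,j}$ last evolved and the vector $\vv_{\text{tx}}$ it inherited; (b) show these inheritance events can be linearly ordered so that each contributes a pivot not in the span of $\Omega_{Z,k}(t)$ and the earlier pivots; (c) conclude the determinant is a nonzero polynomial of controlled degree, then invoke Schwartz–Zippel.
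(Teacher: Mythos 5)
Your deterministic inclusion $\linsp(\Omega_{Z,k}(t),\Omega_{R,k}(t))\subseteq\linsp(\Omega_{Z,k}(t),\Omega_{M,k})$ via Lemma~\ref{lem:non-interfering} is correct and matches what the paper needs. But there are two genuine problems with the rest. First, your intermediate decomposition is not valid: you want to show that each $X_{k,j}$ with $k\in S_t(X_{k,j})$ is \emph{individually} decoded, i.e.\ $\delta_{k,j}\in\Omega_{Z,k}(t)$. That is false in general. When $d_k$ receives $\vv_\text{tx}=\sum_{l\in T}c_l\vv(X_{l,j_l})$, the other sessions' vectors $\vv(X_{l,j_l})$ are only non-interfering, i.e.\ of the form $\vv'_l+\w_l$ with $\vv'_l\in\Omega_{M,k}$ and $\w_l\in\Omega_{Z,k}(t)$; the components $\vv'_l$ are unknown linear combinations of \emph{other} $\delta_{k,j'}$'s, so $d_k$ cannot isolate $\delta_{k,j}$ from this single reception. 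The lemma is inherently a collective rank statement about the joint span, not a per-packet decodability statement, and the paper treats it as such.

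Second, the step you yourself flag as the ``main obstacle'' --- exhibiting a coefficient assignment under which the projected vectors are independent, for \emph{every} channel realization, so that the determinant polynomial is not identically zero --- is precisely the entire content of the lemma, and your steps (a)--(c) do not supply it. The paper's proof closes this gap by a per-time-slot induction rather than a global Schwartz--Zippel bound: conditioning on the event $\linsp(\Omega_{Z,k}(t-1),\Omega_{R,k}(t-1))=\linsp(\Omega_{Z,k}(t-1),\Omega_{M,k})$, it runs a case analysis on whether $S(X_{k,j_k})$ changed and whether $k$ entered it, and shows that in every case the equality either propagates deterministically or fails only when a single scalar $c_k+\alpha$ vanishes, where $\alpha$ is determined by the coefficients $c_l$, $l\neq k$, and is independent of $c_k$. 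This gives a conditional success probability of at least $1-1/q$ per target session per slot, hence $(1-1/q)^{nK}$ overall, which is the quantitative bound you would otherwise need a degree bound and a witness for. If you want to keep the polynomial-identity framing, you would still have to prove the witness exists by essentially this same induction, so the slot-by-slot conditional argument is the more direct route.
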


Intuitively, Lemma~\ref{lem:decodability} says that if in the end of time $t$ we directly transmit all the {\em remaining} coded packets $\left\{\vv_t(X_{k,j}):\forall j\in [nR_k], k\notin S_t(X_{k,j})\right\}$ from $s$ to $d_k$ through a noise-free information pipe, then with high probability, $d_k$ can successfully decode all the desired information packets $X_{k,1}$ to $X_{k,nR_k}$ (see Lemma~\ref{lem:simple-dec}) by the knowledge space $\Omega_{Z,k}(t)$ and the new information of the remaining space $\Omega_{R,k}(t)$.

Lemma~\ref{lem:decodability} directly implies the following corollary.

\begin{corollary}\label{cor:dec}
For any $n$ and any $\epsilon>0$, there exists a sufficiently large finite field $\GF(q)$ such that the following statement holds. If in the end of the $n$-th time slot, all information packets $X_{k,j}$ have $S_n(X_{k,j})\ni k$, then
\begin{align}
\prop(\forall k, d_k\text{ can decode all its desired } \{X_{k,j}\})>1-\epsilon.\nonumber
\end{align}
\end{corollary}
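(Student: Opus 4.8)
The statement to prove is Corollary~\ref{cor:dec}, which says that if at the end of time $n$ every information packet $X_{k,j}$ has $k \in S_n(X_{k,j})$, then with high probability every $d_k$ can decode all its desired packets.

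\textbf{Plan.} The corollary should follow almost immediately from Lemma~\ref{lem:decodability} together with Lemma~\ref{lem:simple-dec}. The plan is to instantiate Lemma~\ref{lem:decodability} at time $t = n$ and observe that under the hypothesis of the corollary, the remaining space $\Omega_{R,k}(n)$ is trivial, so the conclusion of Lemma~\ref{lem:decodability} reduces to exactly the decodability criterion of Lemma~\ref{lem:simple-dec}.

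\textbf{Key steps.} First I would unpack the definition of the remaining space: $\Omega_{R,k}(n) = \linsp\left(\vv_n(X_{k,j}) : j \in [nR_k] \text{ with } k \notin S_n(X_{k,j})\right)$. Under the corollary's hypothesis, \emph{every} information packet satisfies $k \in S_n(X_{k,j})$, so in particular there is no index $j \in [nR_k]$ with $k \notin S_n(X_{k,j})$. Hence the spanning set defining $\Omega_{R,k}(n)$ is empty, and $\Omega_{R,k}(n) = \{\0\}$. Second, plug this into the high-probability event of Lemma~\ref{lem:decodability}: for a sufficiently large field $\GF(q)$ (depending on $n$ and $\epsilon$), with probability exceeding $1 - \epsilon$ we have $\linsp(\Omega_{Z,k}(n), \Omega_{R,k}(n)) = \linsp(\Omega_{Z,k}(n), \Omega_{M,k})$ simultaneously for all $k \in [K]$ (a union bound over the $K$ destinations, each at confidence $1 - \epsilon/K$, keeps the total error below $\epsilon$; or simply invoke the ``for all $k$'' form of Lemma~\ref{lem:decodability} directly). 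Since $\Omega_{R,k}(n) = \{\0\}$, the left side is just $\Omega_{Z,k}(n)$, so the event becomes $\Omega_{Z,k}(n) = \linsp(\Omega_{Z,k}(n), \Omega_{M,k})$, which is equivalent to $\Omega_{M,k} \subseteq \Omega_{Z,k}(n)$. Third, by Lemma~\ref{lem:simple-dec}, $\Omega_{M,k} \subseteq \Omega_{Z,k}(n)$ is precisely the condition under which $d_k$ can decode all desired $X_{k,j}$, $j \in [nR_k]$. Combining, with probability at least $1 - \epsilon$ every $d_k$ decodes all its desired packets, which is the claim.

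\textbf{Main obstacle.} There is essentially no obstacle at the level of this corollary: all the real work is already encapsulated in Lemma~\ref{lem:decodability} (whose proof handles the Schwartz–Zippel-type argument over $\GF(q)$ to ensure the random linear coefficients generate a generic subspace) and Lemma~\ref{lem:simple-dec}. The only point requiring a line of care is confirming that the hypothesis ``$S_n(X_{k,j}) \ni k$ for all $k,j$'' forces $\Omega_{R,k}(n)$ to be trivial for \emph{every} $k$, and then correctly matching the resulting equality $\linsp(\Omega_{Z,k}(n), \{\0\}) = \linsp(\Omega_{Z,k}(n), \Omega_{M,k})$ to the inclusion $\Omega_{M,k} \subseteq \Omega_{Z,k}(n)$ used by Lemma~\ref{lem:simple-dec}. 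Handling the ``for all $k$'' quantifier inside the probability is immediate since Lemma~\ref{lem:decodability} is already stated uniformly over $k \in [K]$ and $t \in [n]$.
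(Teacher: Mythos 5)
Your proof is correct and follows exactly the paper's own argument: note that the hypothesis forces $\Omega_{R,k}(n)=\{\0\}$ for every $k$, then combine Lemma~\ref{lem:decodability} (at $t=n$) with Lemma~\ref{lem:simple-dec}. The brief remark on the union bound over $k$ is a harmless elaboration of what the paper leaves implicit.
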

\begin{proof} If in the end of the $n$-th time slot, all $X_{k,j}$ have $S_n(X_{k,j})\ni k$, then the corresponding $\Omega_{R,k}(n)=\{0\}$ contains only the origin for all $k\in[K]$. Therefore, Corollary~\ref{cor:dec} is simply a restatement of Lemmas~\ref{lem:simple-dec} and~\ref{lem:decodability}.
\end{proof}

To illustrate Corollary~\ref{cor:dec}, consider our 5-time-slot example. In the end of Slot~5, since $k\in S(X_k)$ for all $k\in\{1,2,3\}$, Corollary~\ref{cor:dec} guarantees that with high probability all $d_k$ can decode the desired $X_k$, which was first observed in the example of Section~\ref{subsec:example}.

The proofs of Lemmas~\ref{lem:non-interfering} and~\ref{lem:decodability} are relegated to Appendices~\ref{app:lem-non-interfering} and~\ref{app:lem-decodability}, respectively.

\subsection{The Intuitions Of The Packet Evolution Scheme\label{subsec:intuition-PE}}
Lemmas~\ref{lem:non-interfering} and~\ref{lem:decodability} are the key properties of a PE scheme. In this subsection, we discuss the corresponding intuitions.

{\sf Receiving} {\bf the information packet $X_{k,j}$:}\quad
Each information packet keeps a coding vector $\vv(X_{k,j})$. Whenever we would like to communicate $X_{k,j}$ to destination $d_k$, instead of sending a non-coded packet $X_{k,j}$ directly, we send an intersession coded packet according to the coding vector $\vv(X_{k,j})$. Lemma~\ref{lem:decodability} shows that if we send all the coded vectors $\vv(X_{k,j})$ that have not been heard by $d_k$ (with $k\notin S(X_{k,j})$) through a noise-free information pipe, then $d_k$ can indeed decode all the desired packets $X_{k,j}$ with close-to-one probability. It also implies, although in an implicit way, that once a $\vv(X_{k,j_0})$ is heard by $d_k$ for some $j_0$ (therefore $k\in S(X_{k,j_0})$), there is no need to transmit this particular $\vv(X_{k,j_0})$ in the later time slots. Jointly, these two implications show that we can indeed use the coded packet $\vv(X_{k,j})$ as a substitute for $X_{k,j}$ without losing any information. In the broadest sense, we can say that $d_k$ {\sf receives} a packet $X_{k,j}$ if the corresponding $\vv(X_{k,j})$ successfully arrives $d_k$ in some time slot $t$.

For each $X_{k,j}$, the set $S(X_{k,j})$ serves two purposes: (i) Keep track of whether its intended destination $d_k$ has {\sf received} this $X_{k,j}$ (through the $\vv(X_{k,j})$), and (ii) Keep track of whether $\vv(X_{k,j})$ is non-interfering to other destinations $d_i$, $i\neq k$. We discuss these two purposes separately.

{\bf Tracking the reception of the intended $d_k$:}\quad We first note that in the end of time 0, $d_k$ has not received any packet and we indeed have $k\notin S(X_{k,j})=\emptyset$. We then notice that for any given $X_{k,j}$, the set $S(X_{k,j})$ evolves over time. By Line~\ref{line:S-update} of the {\sc Update}, we can prove that as time proceeds, the first time $t_0$ such that $k\in S(X_{k,j})$ must be the first time when $X_{k,j}$ is {\sf received} by $d_k$ (i.e., $X_{k,j}$ is chosen in the beginning of time $t$ and $k\in S_\text{rx}$ in the end of time $t$). One can also show that for any $X_{k,j}$ once $k\in S_{t_0}(X_{k,j})$ in the end of time $t_0$ for some $t_0$, we will have $k\in S_t(X_{k,j})$ for all $t\geq t_0$. By the above reasonings, checking whether $k\in S(X_{k,j})$ indeed tells us whether the intended receiver $d_k$ has {\sf received} $X_{k,j}$.

{\bf Tracking the non-interference from the perspective of $d_i\neq d_k$:}\quad
Lemma~\ref{lem:non-interfering} also ensures that $\vv(X_{k,j})$ is non-interfering from $d_i$'s perspective for any $i\in S(X_{k,j})$, $i\neq k$. Therefore $S(X_{k,j})$ successfully tracks whether $\vv(X_{k,j})$ is non-interfering from the perspectives of $d_i$, $i\neq k$.

{\bf Serving multiple destinations simultaneously by mixing non-interfering packets:}\quad The above discussion ensures that when we would like to send an information packet $X_{k,j_k}$ to $d_k$, we can send a coded packet $\vv(X_{k,j_k})$ as an information-lossless substitute. On the other hand, by Lemma~\ref{lem:non-interfering}, such $\vv(X_{k,j_k})$ is non-interfering from $d_i$'s perspective for all $i\in (S(X_{k,j_k})\cup\{k\})$. Therefore, instead of sending a single packet $\vv(X_{k,j_k})$, it is beneficial to {\em combine} the transmission of two packets $\vv(X_{k,j_k})$ and $\vv(X_{l,j_l})$ together, as long as $l\in S(X_{k,j_k})$ and $k\in S(X_{l,j_l})$. More explicitly, suppose we simply add the two packets together and transmit a packet corresponding to $[\vv(X_{k,j_k})+\vv(X_{l,j_l})]$. Since $\vv(X_{k,j_k})$ is non-interfering from $d_l$'s perspective, it is as if $d_l$ directly receives  $\vv(X_{l,j_l})$ without any interference. Similarly, since $\vv(X_{l,j_l})$ is non-interfering from $d_k$'s perspective, it is as if $d_k$ directly receives  $\vv(X_{k,j_k})$ without any interference. By generalizing this idea, a PE scheme first selects a $T\subseteq [K]$ and then choose all $X_{k,j_k}$ such that $k\in T$ and $\vv(X_{k,j_k})$ are non-interfering from $d_l$'s perspective for all $l\in T\backslash k$ (see Line~\ref{line:targeting-set} of the PE scheme). This thus ensures that the coded packet $\vv_\text{tx}$ in Line~\ref{line:vv-construct} of the PE scheme can serve all destinations $k\in T$ simultaneously.

{\bf Creating new coding opportunities while exploiting the existing coding opportunities:}\quad As discussed in the example of Section~\ref{subsec:example}, the suboptimality of the existing 2-phase approach for $K\geq 3$ destinations is due to the fact that it fails to create new coding opportunities while exploiting old coding opportunities. The PE scheme was designed to solve this problem. More explicitly, for each $X_{k,j}$ the $\vv(X_{k,j})$ is non-interfering for all $d_i$ satisfying $i\in (S(X_{k,j})\cup\{k\})$. Therefore, the larger the set $S(X_{k,j})$ is, the larger the number of sessions that can be coded together with $\vv(X_{k,j})$. To create more coding opportunities, we thus need to be able to enlarge the $S(X_{k,j})$ set over time. Let us assume that the {\sc Packet Selection} in Line~\ref{line:targeting-set} chooses the $X_{k,j}$ such that $S(X_{k,j})=T\backslash k$. That is, we choose the $X_{k,j}$ that can be mixed with those $(s,d_l)$ sessions with $l\in S(X_{k,j})\cup\{k\}=T$. Then Line~\ref{line:S-update} of the {\sc Update} guarantees that if some other $d_i$, $i\notin T$, overhears the coded transmission, we can update $S(X_{k,j})$ with a strictly larger set $S(X_{k,j})\cup S_\text{rx}$. Therefore, new coding opportunity is created since we can now mix more sessions together with $X_{k,j}$. Note that the coding vector $\vv(X_{k,j})$ is also updated accordingly. The new $\vv(X_{k,j})$ represents the necessary ``code alignment" in order to utilize this newly created coding opportunity. The (near-) optimality of the PE scheme is rooted deeply in the concept of code alignment, which aligns the ``non-interfering subspaces" through the joint use of $S(X_{k,j})$ and $\vv(X_{k,j})$.

\section{Quantify The Achievable Rates of PE Schemes\label{sec:pf-of-achievability}}
In this section, we describe how to use the PE schemes to attain the capacity of 1-to-3 broadcast PECs with COF (Proposition~\ref{prop:cap3}), the achievability results for general 1-to-$K$ broadcast PEC with COF (Proposition~\ref{prop:ach2}), the capacity results for symmetric broadcast PECs (Proposition~\ref{prop:cap-sym}) and for spatially independent PECs with one-sided fairness constraints (Proposition~\ref{prop:cap-osf}).

We first describe a detailed construction of a capacity-achieving PE scheme for general 1-to-3 broadcast PECs with COF in Section~\ref{subsec:detailed3} and then discuss the corresponding high-level intuition in Section~\ref{subsec:high-level-cap3}. The high-level discussion will later be used to prove the achievability results for general 1-to-$K$ broadcast PEC with COF in Section~\ref{subsec:1-to-M-ach}. The proofs of the capacity results of two special classes of PECs are provided in Section~\ref{subsec:2spec-class}.

\subsection{Achieving the Capacity of 1-to-3 Broadcast PECs With COF --- Detailed Construction~\label{subsec:detailed3}}
Consider a 1-to-3 broadcast PEC with arbitrary channel parameters $\{p_{S\overline{\{1,2,3\}\backslash S}}\}$. Without loss of generality, assume that the marginal success probability $p_k>0$ for $k=1,2,3$. For the cases in which  $p_k=0$ for some $k$, such $d_k$ cannot receive any packet. The 1-to-3 broadcast PEC thus collapses to a 1-to-2 broadcast PEC, the capacity of which was proven in \cite{GeorgiadisTassiulas09}.

 Given any arbitrary rate vector $(R_1,R_2,R_3)$ that is in the interior of the capacity outer bound of Proposition~\ref{prop:outer}, our goal is to design a PE scheme for which each $d_k$ can successfully decode its desired packets $\{X_{k,j}:\forall j\in[nR_k]\}$, for $k\in\{1,2,3\}$, after $n$ usages of the broadcast PEC. Before describing such a PE scheme, we introduce a new definition and the corresponding lemma.

Given a rate vector $(R_1,R_2,R_3)$ and the PEC channel parameters $\{p_{S\overline{\{1,2,3\}\backslash S}}\}$,
we say that destination $d_i$ {\em dominates} another $d_k$, $i\neq k$ if
\begin{align}
&R_i\left(\frac{1}{p_{\cup (\{1,2,3\}\backslash k)}}-\frac{1}{p_{\cup \{1,2,3\}}}\right)\nonumber\\
&\hspace{2cm}\geq R_k\left(\frac{1}{p_{\cup (\{1,2,3\}\backslash i)}}-\frac{1}{p_{\cup \{1,2,3\}}}\right).\label{eq:dominance-cond}
\end{align}
\begin{lemma} \label{lem:dominance}For distinct values of $i,k,l\in\{1,2,3\}$, if $d_i$ dominates $d_k$, and $d_k$ dominates $d_l$, then we must have $d_i$ dominates $d_l$.
\end{lemma}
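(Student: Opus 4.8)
The plan is to read the dominance relation \eqref{eq:dominance-cond} as a comparison of two real numbers attached to each receiver, and then show that ``$d_i$ dominates $d_k$'' is equivalent to an ordering of those numbers. Concretely, for each $k\in\{1,2,3\}$ define
\begin{align}
a_k &\stackrel{\Delta}{=} \frac{1}{p_{\cup(\{1,2,3\}\backslash k)}}-\frac{1}{p_{\cup\{1,2,3\}}},\nonumber\\
b_k &\stackrel{\Delta}{=} R_k.\nonumber
\end{align}
Then \eqref{eq:dominance-cond} says exactly that $d_i$ dominates $d_k$ iff $b_i a_k \geq b_k a_i$. First I would observe that all the quantities involved are strictly positive: since $p_j>0$ for every $j$, the set $\{1,2,3\}\backslash k$ is nonempty, so $p_{\cup(\{1,2,3\}\backslash k)}>0$; moreover $p_{\cup(\{1,2,3\}\backslash k)}\leq p_{\cup\{1,2,3\}}$ because the union is over a larger index set (by the definition \eqref{eq:pS-def}), with the inequality strict whenever $p_k>0$ contributes a success event not already counted — so $a_k>0$. (If some $R_k=0$ the statement is vacuous for that index, or can be handled as a trivial degenerate case, so I would assume all $R_k>0$, i.e. $b_k>0$.)

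With all $a_k,b_k>0$, the inequality $b_i a_k\geq b_k a_i$ is equivalent, after dividing by the positive quantity $a_i a_k$, to $\frac{b_i}{a_i}\geq \frac{b_k}{a_k}$. Hence ``$d_i$ dominates $d_k$'' is precisely the statement that the ratio $r_m\stackrel{\Delta}{=}b_m/a_m = R_m/a_m$ satisfies $r_i\geq r_k$. The claimed transitivity of dominance is then immediate from the transitivity of the total order $\geq$ on the reals: if $d_i$ dominates $d_k$ then $r_i\geq r_k$; if $d_k$ dominates $d_l$ then $r_k\geq r_l$; hence $r_i\geq r_l$, which is exactly ``$d_i$ dominates $d_l$.''

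The only genuine obstacle is the positivity bookkeeping — one must be careful that none of the denominators $p_{\cup(\{1,2,3\}\backslash k)}$ vanish and that the $a_k$ are strictly positive so that the division step is valid and the equivalence with a ratio comparison is clean. This is why the reduction explicitly invokes the standing assumption $p_k>0$ for all $k$ (stated at the top of Section~\ref{subsec:detailed3}): it guarantees $p_{\cup(\{1,2,3\}\backslash k)}>0$ and, together with $p_k>0$, that $p_{\cup(\{1,2,3\}\backslash k)}<p_{\cup\{1,2,3\}}$, hence $a_k>0$. Once that is in place there is nothing more to do; I would write the proof as: define $r_k=R_k/a_k$ with $a_k$ as above, note all $a_k>0$, observe that dominance of $d_i$ over $d_k$ is equivalent to $r_i\geq r_k$, and conclude by transitivity of $\geq$.
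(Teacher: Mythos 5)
Your proof is correct in its main idea but takes a genuinely different route from the paper's. The paper argues by contradiction: it assumes the conclusion fails (so that $d_l$ dominates $d_i$), multiplies the three dominance inequalities together, observes that the product of the left-hand sides equals the product of the right-hand sides, and concludes that all three must be equalities. Your argument is direct: you turn the pairwise relation $R_i a_k\geq R_k a_i$ into a comparison of the scalars $r_m=R_m/a_m$, so that dominance becomes a total preorder on $\{1,2,3\}$ and transitivity is inherited from $\geq$ on the reals. Your route is arguably cleaner and makes the structure of the relation transparent; the paper's route avoids any division and hence any positivity discussion.

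That said, your positivity bookkeeping has a genuine flaw. You claim that $p_k>0$ forces $p_{\cup(\{1,2,3\}\backslash k)}<p_{\cup\{1,2,3\}}$ and hence $a_k>0$. The difference $p_{\cup\{1,2,3\}}-p_{\cup(\{1,2,3\}\backslash k)}$ equals $p_{\{k\}\overline{\{1,2,3\}\backslash\{k\}}}$, the probability that \emph{only} $d_k$ receives the packet; since the paper explicitly allows spatially dependent PECs, this can be zero even when $p_k>0$ (e.g., if $d_k$'s success event is contained in $d_i$'s). So $a_k=0$ is a live possibility and the division by $a_ia_k$ is not always legitimate. The lemma still holds in those degenerate cases (assuming all $R_m>0$): for instance, if $a_k=0$ and $R_k>0$, then ``$d_i$ dominates $d_k$'' forces $a_i=0$, whence ``$d_i$ dominates $d_l$'' reads $R_ia_l\geq 0$ and is automatic; the cases $a_i=0$ and $a_l=0$ are handled similarly. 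You should either add this short case analysis or restrict the ratio argument to the situation where all $a_m>0$ and dispose of the boundary cases separately.
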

\begin{proof} Suppose this lemma is not true and we have $d_i$ dominates $d_k$, $d_k$ dominates $d_l$, and $d_l$ dominates $d_i$. By definition, we must have
\begin{align}
&R_i\left(\frac{1}{p_{\cup (\{1,2,3\}\backslash k)}}-\frac{1}{p_{\cup \{1,2,3\}}}\right)\nonumber\\
&\hspace{2cm}\geq R_k\left(\frac{1}{p_{\cup (\{1,2,3\}\backslash i)}}-\frac{1}{p_{\cup \{1,2,3\}}}\right),\label{eq:i-k-dominate}\\
&R_k\left(\frac{1}{p_{\cup (\{1,2,3\}\backslash l)}}-\frac{1}{p_{\cup \{1,2,3\}}}\right)\nonumber\\
&\hspace{2cm}\geq R_l\left(\frac{1}{p_{\cup (\{1,2,3\}\backslash k)}}-\frac{1}{p_{\cup \{1,2,3\}}}\right),\label{eq:k-l-dominate}\\
&R_l\left(\frac{1}{p_{\cup (\{1,2,3\}\backslash i)}}-\frac{1}{p_{\cup \{1,2,3\}}}\right)\nonumber\\
&\hspace{2cm}\geq R_i\left(\frac{1}{p_{\cup (\{1,2,3\}\backslash l)}}-\frac{1}{p_{\cup \{1,2,3\}}}\right).\label{eq:l-i-dominate}
\end{align}
We then notice that the product of the left-hand sides of \eqref{eq:i-k-dominate}, \eqref{eq:k-l-dominate}, and \eqref{eq:l-i-dominate} equals the product of the right-hand side of  \eqref{eq:i-k-dominate}, \eqref{eq:k-l-dominate}, and \eqref{eq:l-i-dominate}. As a result, all three inequalities of \eqref{eq:i-k-dominate}, \eqref{eq:k-l-dominate}, and \eqref{eq:l-i-dominate} must also be equalities. Since \eqref{eq:l-i-dominate} is an equality, we can also say that $d_i$ dominates $d_l$. The proof of Lemma~\ref{lem:dominance} is complete.
\end{proof}

By Lemma~\ref{lem:dominance}, we can assume that $d_1$ dominates $d_2$, $d_2$ dominates $d_3$, and $d_1$ dominates $d_3$, which can be achieved by relabeling the destinations $d_k$. We then describe a detailed capacity-achieving PE scheme, which has four major phases. The dominance relationship is a critical part in the proposed PE scheme. The high-level discussion of this capacity-achieving PE scheme will be provided in Section~\ref{subsec:high-level-cap3}

{\bf Phase~1} contains 3 sub-phases. In {\bf Phase~1.1}, we always choose $T=\{1\}$ for the PE scheme. In the beginning of time 1,  we first select $X_{1,1}$. We keep transmitting the uncoded packet according to $\vv(X_{1,1})=\delta_{1,1}$ until it is received by at least one of the three destinations $\{d_1,d_2,d_3\}$. Update its $S(X_{1,1})$ and $\vv(X_{1,1})$ according to the {\sc Update} rule.  Then we move to packet $X_{1,2}$. Keep transmitting the uncoded packet according to $\vv(X_{1,2})=\delta_{1,2}$ until it is received by at least one of the three receivers $\{d_1,d_2,d_3\}$. Update its $S(X_{1,2})$ and $\vv(X_{1,2})$ according to the {\sc Update} rule. Repeat this process until all $X_{1,j}$, $j\in [nR_1]$ is received by at least one receiver. By the law of large numbers, Phase~1.1 will continue for
\begin{align}
\approx \frac{nR_1}{p_{\cup\{1,2,3\}}}\text{ time slots.} \label{eq:phase11}
\end{align}

{\bf Phase~1.2}: After Phase~1.1 we move to Phase~1.2. In {Phase~1.2}, we always choose $T=\{2\}$ for the PE scheme. In the beginning of Phase~1.2,  we first select $X_{2,1}$. We keep transmitting the uncoded packet according to $\vv(X_{2,1})=\delta_{2,1}$ until it is received by at least one of the three destinations $\{d_1,d_2,d_3\}$. Update its $S(X_{2,1})$ and $\vv(X_{2,1})$.  Repeat this process until all $X_{2,j}$, $j\in [nR_2]$ is received by at least one receiver. By the law of large numbers, Phase~1.2 will continue for
\begin{align}\approx \frac{nR_2}{p_{\cup\{1,2,3\}}}\text{ time slots.} \label{eq:phase12}\end{align}

{\bf Phase~1.3}: After Phase~1.2 we move to Phase~1.3. In {Phase~1.3},  we always choose $T=\{3\}$ for the PE scheme. We repeat the same process as in Phases~1.1 and~1.2 until
 all $X_{3,j}$, $j\in [nR_3]$ is received by at least one receiver. By the law of large numbers, Phase~1.3 will continue for
 \begin{align}\approx \frac{nR_3}{p_{\cup\{1,2,3\}}}\text{ time slots.} \label{eq:phase13}\end{align}

{\bf Phase~2}: After Phase~1.3, we move to Phase~2. {Phase~2} contains 3 sub-phases. In  {\bf Phase~2.1},  we always choose $T=\{2,3\}$ for the PE scheme. Consider all the packets $X_{2,j}$ that have $S(X_{2,j})=\{3\}$ in the end of Phase~1.3, which was resulted/created in Phase 1.2 when a Phase-1.2 packet was received by $d_3$ only. Totally there are $\approx \frac{nR_2p_{\{3\}\overline{\{1,2\}}}}{p_{\cup\{1,2,3\}}}$ such packets, which are termed the queue $Q_{2;3\overline{1}}$ packets.  Consider all the packets $X_{3,j}$ that have $S(X_{3,j})=\{2\}$ in the end of Phase~1.3, which was resulted/created in Phase 1.3 when a Phase-1.3 packet was received by $d_2$ only. Totally there are $\approx \frac{nR_3p_{\{2\}\overline{\{1,3\}}}}{p_{\cup\{1,2,3\}}}$ such packets, which are termed the queue $Q_{3;2\overline{1}}$ packets.

We order all the $Q_{2;3\overline{1}}$ packets in any arbitrary sequence and order all the $Q_{3;2\overline{1}}$ packets in any arbitrary sequence. In the beginning of Phase~2.1,  we first select the head-of-the-line $X_{2,j_2}$ and the head-of-line  $X_{3,j_3}$ from these two queues $Q_{2;3\overline{1}}$ and $Q_{3;2\overline{1}}$, respectively. Since
\begin{align}
S(X_{2,j_2})\cup\{2\}=T=\{2,3\}=S(X_{3,j_3})\cup\{3\},\nonumber
\end{align}
these two packets can be linearly combined together. Let $\vv_\text{tx}$ denote the overall coding vector generated from these two packets (see Line~\ref{line:vv-construct} of the main PE scheme). As discussed in Line~\ref{line:transmit-v} of the main PE scheme, we keep transmitting the same coded packet $\vv_\text{tx}$ until at least one of the two packets $X_{2,j_2}$ and $X_{3,j_3}$ has a new $S(X_{2,j_2})$ (or a new $S(X_{3,j_3})$). In the end, we thus have three subcases: (i) only $X_{2,j_2}$ has a new $S(X_{2,j_2})$, (ii) only $X_{3,j_3}$ has a new $S(X_{3,j_3})$, and (iii) both $X_{2,j_2}$ has a new $S(X_{2,j_2})$ and $X_{3,j_3}$ has a new $S(X_{3,j_3})$. In Case (i), we keep the same $T=\{2,3\}$ and the same $X_{3,j_3}$ but switch to the next-in-line $Q_{2;3\overline{1}}$ packet $X_{2,j_2'}$. The new $X_{2,j_2'}$ will be then be used, together with the existing $X_{3,j_3}$ to generate new $\vv_\text{tx}$ in Line~\ref{line:vv-construct} of the main PE scheme for the next time slot(s). In Case~(ii), we keep the same $T=\{2,3\}$ and the same $X_{2,j_2}$ but switch to the next-in-line $Q_{3;2\overline{1}}$ packet $X_{3,j_3'}$. The new $X_{3,j_3'}$ will then be used, together with the existing $X_{2,j_2}$, to generate new $\vv_\text{tx}$ in Line~\ref{line:vv-construct} of the main PE scheme for the next time slot(s). In Case~(iii), we keep the same $T=\{2,3\}$ and switch to the next-in-line packets $X_{2,j_2'}$ and $X_{3,j_3'}$. The new pair $X_{2,j_2'}$ and $X_{3,j_3'}$ will then be used to generate new $\vv_\text{tx}$ in Line~\ref{line:vv-construct} of the main PE scheme for the next time slot(s). We repeat the above process until we have used up all $Q_{3;2\overline{1}}$ packets $X_{3,j}$.

{\em Remark 1:} One critical observation of the PE scheme is that when two packets $X_{2,j_2}$ or $X_{3,j_3}$ are mixed together to generate $\vv_\text{tx}$, each packet still keeps its own identity $X_{2,j_2}$ and $X_{3,j_3}$, its own associated sets $S(X_{2,j_2})$ and $S(X_{3,j_3})$ and coding vectors $\vv(X_{2,j_2})$ and $\vv(X_{3,j_3})$. Even the decision whether to update $S(X)$ or $\vv(X)$ is made separately (Line~\ref{line:update-4-K} of the {\sc Update}) for each of the two packets $X_{2,j_2}$ or $X_{3,j_3}$. Therefore,
it is as if the two packets $X_{2,j_2}$ or $X_{3,j_3}$ are sharing the single time slot in a non-interfering way (like carpooling together). Following this observation, in Phase~2.1, whether we decide to switch the current $X_{2,j_2}$ to the next-in-line $Q_{2;3\overline{1}}$ packet $X_{2,j_2'}$ is also completely independent from the decision whether to switch the current $X_{3,j_3}$ to the next-in-line $Q_{3;2\overline{1}}$ packet $X_{3,j_3'}$.

{\em Remark 2:} We first take a closer look at when a $Q_{3;2\overline{1}}$ packet $X_{3,j_3}$ will be switched to the next-in-line packet $X_{3,j_3'}$. By Line~\ref{line:S-update} of the {\sc Update}, we switch to the next-in-line  $X_{3,j_3'}$ if and only if one of $\{d_1,d_3\}$ has received the current packet $\vv_{\text{tx}}$, in which $X_{3,j_3}$ participates. Therefore, in average each $X_{3,j_3}$ will stay in Phase~2.1 for $\frac{1}{p_{\cup\{1,3\}}}$ time slots. Since we have $\approx \frac{nR_3p_{\{2\}\overline{\{1,3\}}}}{p_{\cup\{1,2,3\}}}$  number of $Q_{3;2\overline{1}}$ packets to begin with, it takes
\begin{align}
\approx \frac{nR_3p_{\{2\}\overline{\{1,3\}}}}{p_{\cup\{1,2,3\}}}\frac{1}{p_{\cup\{1,3\}}}=
nR_3\left(\frac{1}{p_{\cup\{1,3\}}}-\frac{1}{p_{\cup\{1,2,3\}}}\right)\label{eq:d3-phase2.1}
\end{align}
to completely finish the $Q_{3;2\overline{1}}$ packets. By similar arguments, it takes
\begin{align}
\approx \frac{nR_2p_{\{3\}\overline{\{1,2\}}}}{p_{\cup\{1,2,3\}}}\frac{1}{p_{\cup\{1,2\}}}
=nR_2\left(\frac{1}{p_{\cup\{1,2\}}}-\frac{1}{p_{\cup\{1,2,3\}}}\right)\label{eq:d2-phase2.1}
\end{align}
to completely use up the $Q_{2;3\overline{1}}$ packets. Since we assume that $d_2$ dominates $d_3$, the dominance inequality in  \eqref{eq:dominance-cond} implies that \eqref{eq:d2-phase2.1} is no smaller than \eqref{eq:d3-phase2.1}. Therefore we indeed can finish the $Q_{3;2\overline{1}}$ packets before exhausting the $Q_{2;3\overline{1}}$ packets.

{\em Remark~3:} Overall it takes roughly \eqref{eq:d3-phase2.1} of time slots to finish Phase~2.1.

{\bf Phase~2.2}: After Phase~2.1, we move to Phase~2.2. In  {Phase~2.2},  we always choose $T=\{1,3\}$ for the PE scheme. Consider all the packets $X_{1,j}$ that have $S(X_{1,j})=\{3\}$ in the end of Phase~2.1, which was resulted/created in Phase 1.1 when a Phase-1.1 packet was received by $d_3$ only. Totally there are $\approx \frac{nR_1p_{\{3\}\overline{\{1,2\}}}}{p_{\cup\{1,2,3\}}}$ such packets, which are termed the queue $Q_{1;3\overline{2}}$ packets.  Consider all the packets $X_{3,j}$ that have $S(X_{3,j})=\{1\}$ in the end of Phase~2.1, which was resulted/created in Phase 1.3 when a Phase-1.3 packet was received by $d_1$ only. We note that there are some $Q_{3;2\overline{1}}$ packets being transmitted in Phase~2.1. Before the transmission of Phase~2.1, those packets have $S(X_{3,j})=\{2\}$ and after the transmission of Phase~2.1, those packets will have their $S(X_{3,j})$ being one of the three forms $\{1,2\}$, $\{2,3\}$, and $\{1,2,3\}$ (see Line~\ref{line:S-update} of the {\sc Update}). Therefore, Phase~2.1 does not contribute to any $X_{3,j}$ packets considered in Phase~2.2 (those with $S(X_{3,j})=\{1\}$).
Totally there are $\approx \frac{nR_3p_{\{1\}\overline{\{2,3\}}}}{p_{\cup\{1,2,3\}}}$ packets considered in Phase~2.2, which are termed the queue $Q_{3;1\overline{2}}$ packets.

We order all the $Q_{1;3\overline{2}}$ packets in any arbitrary sequence and order all the $Q_{3;1\overline{2}}$ packets in any arbitrary sequence. Following similar steps  as in Phase~2.1, we first mix the head-of-the-line packets $X_{1,j_1}$ and $X_{3,j_3}$ of $Q_{1;3\overline{2}}$ and $Q_{3;1\overline{2}}$, respectively, and then make the decisions of switching to the next-in-line packets $X_{1,j_1'}$ and $X_{3,j_3'}$ independently for the two queues $Q_{1;3\overline{2}}$ and $Q_{3;1\overline{2}}$. We repeat the above process until we have used up all $Q_{3;1\overline{2}}$ packets $X_{3,j}$.

{\em Remark:} We take a closer look at when a $Q_{3;1\overline{2}}$ packet $X_{3,j_3}$ will be switched to the next-in-line packet $X_{3,j_3'}$. By Line~\ref{line:S-update} of the {\sc Update}, we switch to the next-in-line  $X_{3,j_3'}$ if and only if one of $\{d_2,d_3\}$ has received the current packet $\vv_{\text{tx}}$, in which $X_{3,j_3}$ participates. Therefore, in average each $X_{3,j_3}$ will stay in Phase~2.2 for $\frac{1}{p_{\cup\{2,3\}}}$ time slots. Since we have $\approx \frac{nR_3p_{\{1\}\overline{\{2,3\}}}}{p_{\cup\{1,2,3\}}}$  number of $Q_{3;1\overline{2}}$ packets to begin with, it takes
\begin{align}
\approx \frac{nR_3p_{\{1\}\overline{\{2,3\}}}}{p_{\cup\{1,2,3\}}}\frac{1}{p_{\cup\{2,3\}}}
=nR_3\left(\frac{1}{p_{\cup\{2,3\}}}-\frac{1}{p_{\cup\{1,2,3\}}}\right)\label{eq:d3-phase2.2}
\end{align}
to completely finish the $Q_{3;1\overline{2}}$ packets. By similar arguments, it takes
\begin{align}
\approx \frac{nR_1p_{\{3\}\overline{\{1,2\}}}}{p_{\cup\{1,2,3\}}}\frac{1}{p_{\cup\{1,2\}}}
=nR_1\left(\frac{1}{p_{\cup\{1,2\}}}-\frac{1}{p_{\cup\{1,2,3\}}}\right)\label{eq:d1-phase2.2}
\end{align}
to completely use up the $Q_{1;3\overline{2}}$ packets. Since we assume that $d_1$ dominates $d_3$, the dominance inequality in  \eqref{eq:dominance-cond} implies that \eqref{eq:d1-phase2.2} is no smaller than \eqref{eq:d3-phase2.2}. Therefore we indeed can finish the $Q_{3;1\overline{2}}$ packets before exhausting the $Q_{1;3\overline{2}}$ packets.
Overall it takes roughly \eqref{eq:d3-phase2.2} number of time slots to finish Phase~2.2.

{\bf Phase~2.3}: After Phase~2.2, we move to Phase~2.3. In {Phase~2.3}, we always choose $T=\{1,2\}$ for the PE scheme. Consider all the packets $X_{1,j}$ that have $S(X_{1,j})=\{2\}$ in the end of Phase~2.2, which was resulted/created in Phase 1.1 when a Phase-1.1 packet was received by $d_2$ only. Note that the transmission in Phase~2.2 does not create any new such packets.
Totally there are thus $\approx \frac{nR_1p_{\{2\}\overline{\{1,3\}}}}{p_{\cup\{1,2,3\}}}$ such packets, which are termed the queue $Q_{1;2\overline{3}}$ packets. Consider all the packets $X_{2,j}$ that have $S(X_{2,j})=\{1\}$ in the end of Phase~2.2, which was resulted/created in Phase 1.2 when a Phase-1.2 packet was received by $d_1$ only. Note that the transmission in Phase~2.1 does not create any new such packets.
Totally there are thus $\approx \frac{nR_2p_{\{1\}\overline{\{2,3\}}}}{p_{\cup\{1,2,3\}}}$ such packets, which are termed the queue $Q_{2;1\overline{3}}$ packets.

We order all the $Q_{1;2\overline{3}}$ packets in any arbitrary sequence and order all the $Q_{2;1\overline{3}}$ packets in any arbitrary sequence. Following similar steps  as in Phases~2.1 and~2.2, we first mix the head-of-the-line packets $X_{1,j_1}$ and $X_{2,j_2}$ of $Q_{1;2\overline{3}}$ and $Q_{2;1\overline{3}}$, respectively, and then make the decisions of switching to the next-in-line packets $X_{1,j_1'}$ and $X_{2,j_2'}$ independently for the two queues $Q_{1;2\overline{3}}$ and $Q_{2;1\overline{3}}$. We repeat the above process until we have used up all $Q_{2;1\overline{3}}$ packets $X_{2,j}$. By the assumption that $d_1$ dominates $d_2$ and  by the same arguments as in Phases~2.1 and~2.2,  we indeed can finish the $Q_{2;1\overline{3}}$ packets before exhausting the $Q_{1;2\overline{3}}$ packets.
Overall it takes roughly
\begin{align}
\approx \frac{nR_2p_{\{1\}\overline{\{2,3\}}}}{p_{\cup\{1,2,3\}}}\frac{1}{p_{\cup\{2,3\}}}
=nR_2\left(\frac{1}{p_{\cup\{2,3\}}}-\frac{1}{p_{\cup\{1,2,3\}}}\right)\label{eq:d2-phase2.3}
\end{align}
 of time slots to finish Phase~2.3.

{\bf Phase 3:} Before the description of Phase-3 operations, we first summarize the status of all the packets in the end of Phase~2.3. For $d_3$, all $X_{3,j}$ packets that have $S(X_{3,j})=\emptyset$ have been used up in Phase~1.3. All $X_{3,j}$ packets that have $S(X_{3,j})=\{1\}$ have been used up in Phase~2.2. All $X_{3,j}$ packets that have $S(X_{3,j})=\{2\}$ have been used up in Phase~2.1. As a result, all the $X_{3,j}$ packets are either {\sf received} by $d_3$ (i.e., having $3\in S(X_{3,j})$) or have $S(X_{3,j})=\{1,2\}$. For Phase 3, we will focus on the latter type of $X_{3,j}$ packets, which are termed the $Q_{3;12}$ packets. Recall the definition of $f_p(S\overline{T})$ in \eqref{eq:fp-def}. Totally, we have
\begin{align}
nR_3\left(\frac{p_{12\overline{3}}}{p_{\cup \{1,2,3\}}}+\frac{p_{1\overline{23}}}{p_{\cup \{1,2,3\}}}\frac{f_p(2\overline{3})}{p_{\cup \{2,3\}}}+\frac{p_{2\overline{13}}}{p_{\cup \{1,2,3\}}}\frac{f_p(1\overline{3})}{p_{\cup \{1,3\}}}\right)\label{eq:d3-phase3}
\end{align}
number of $Q_{3;12}$ packets in the beginning of Phase~3, where the first, second, and the third terms correspond to the $Q_{3;12}$ packets generated in Phase~1.3, Phase~2.2, and Phase~2.1, respectively. We can further simplify \eqref{eq:d3-phase3} as
\begin{align}
\eqref{eq:d3-phase3}=nR_3p_3\left(\frac{1}{p_3}-\frac{1}{p_{\cup\{1,3\}}}-\frac{1}{p_{\cup\{2,3\}}}
+\frac{1}{p_{\cup\{1,2,3\}}}\right).\label{eq:d3-phase3-sim}
\end{align}

For $d_2$, all $X_{2,j}$ packets that have $S(X_{2,j})=\emptyset$ have been used up in Phase~1.2. All $X_{2,j}$ packets that have $S(X_{2,j})=\{1\}$ have been used up in Phase~2.3. As a result, all the $X_{2,j}$ packets must satisfy one of the following: (i) $X_{2,j}$ are {\sf received} by $d_2$ (i.e., having $2\in S(X_{2,j})$), or (ii) have $S(X_{2,j})=\{3\}$, or (iii) have $S(X_{2,j})=\{1,3\}$. For Phase 3, we will focus on the latter two types of $X_{2,j}$ packets, which are termed the $Q_{2;3\overline{1}}$ and the $Q_{2;13}$ packets, respectively. There are
\begin{align}nR_2\frac{p_{3\overline{12}}}{p_{\cup\{1,2,3\}}}
-nR_3\frac{p_{2\overline{13}}}{p_{\cup\{1,2,3\}}}\frac{p_{\cup\{1,2\}}}{p_{\cup \{1,3\}}}\label{eq:d23-phase3}
\end{align}
number of $Q_{2;3\overline{1}}$ packets in the beginning of Phase~3, where the first term is the number of $Q_{2;3\overline{1}}$ packets generated in Phase~1.2 and the second term corresponds to the number of $Q_{2;3\overline{1}}$ packets that are used up in Phase~2.1. \eqref{eq:d23-phase3} can be simplified to
\begin{align}
\eqref{eq:d23-phase3}=p_{\cup\{1,2\}}&\left(nR_2\left(\frac{1}{p_{\cup\{1,2\}}}-\frac{1}{p_{\cup\{1,2,3\}}}\right)\right.\nonumber\\
&\hspace{.8cm}\left.-nR_3\left(\frac{1}{p_{\cup\{1,3\}}}-\frac{1}{p_{\cup \{1,2,3\}}}\right)\right).\label{eq:d23-phase3-sim}
\end{align}
 There are
\begin{align}nR_2\frac{p_{13\overline{2}}}{p_{\cup\{1,2,3\}}}
+nR_2\frac{p_{1\overline{23}}}{p_{\cup\{1,2,3\}}}\frac{f_p(3\overline{2})}{p_{\cup\{2,3\}}}
+nR_3\frac{p_{2\overline{13}}}{p_{\cup\{1,2,3\}}}\frac{f_p(1\overline{2})}{p_{\cup \{1,3\}}}\label{eq:d213-phase3}
\end{align}
number of $Q_{2;13}$ packets in the beginning of Phase~3, where the first, second, and third terms correspond to the number of $Q_{2;13}$ packets generated in Phase~1.2, Phase~2.3, and Phase 2.1, respectively.

For $d_1$, all $X_{1,j}$ packets that have $S(X_{1,j})=\emptyset$ have been used up in Phase~1.1. As a result, all the $X_{1,j}$ packets must satisfy one of the following: (i) $X_{1,j}$ are {\sf received} by $d_1$ (i.e., having $1\in S(X_{1,j})$), or (ii) have $S(X_{1,j})=\{2\}$, (iii) have $S(X_{1,j})=\{3\}$, or (iv) have $S(X_{1,j})=\{2,3\}$. For Phase 3, we will focus on the types (ii) and (iii), which are termed the $Q_{1;2\overline{3}}$ and the $Q_{1;3\overline{2}}$ packets, respectively. There are
\begin{align}&nR_1\frac{p_{2\overline{13}}}{p_{\cup\{1,2,3\}}}
-nR_2\frac{p_{1\overline{23}}}{p_{\cup\{1,2,3\}}}\frac{p_{\cup\{1,3\}}}{p_{\cup \{2,3\}}}\label{eq:d12-phase3}
\end{align}
number of $Q_{1;2\overline{3}}$ packets in the beginning of Phase~3, where the first term is the number of $Q_{1;2\overline{3}}$ packets generated in Phase~1.1 and the second term corresponds to the number of $Q_{1;2\overline{3}}$ packets that are used up in Phase~2.3. \eqref{eq:d12-phase3} can be simplified to
\begin{align}
\eqref{eq:d12-phase3}=p_{\cup\{1,3\}}&\left(nR_1\left(\frac{1}{p_{\cup\{1,3\}}}-\frac{1}{p_{\cup\{1,2,3\}}}\right)\right.\nonumber\\
&\hspace{.8cm}\left.-nR_2\left(\frac{1}{p_{\cup\{2,3\}}}-\frac{1}{p_{\cup \{1,2,3\}}}\right)\right).\label{eq:d12-phase3-sim}
\end{align}
There are
\begin{align}&nR_1\frac{p_{3\overline{12}}}{p_{\cup\{1,2,3\}}}
-nR_3\frac{p_{1\overline{23}}}{p_{\cup\{1,2,3\}}}\frac{p_{\cup\{1,2\}}}{p_{\cup \{2,3\}}}\label{eq:d13-phase3}
\end{align}
number of $Q_{1;3\overline{2}}$ packets in the beginning of Phase~3, where the first term is the number of $Q_{1;3\overline{2}}$ packets generated in Phase~1.1 and the second term corresponds to the number of $Q_{1;3\overline{2}}$ packets that are used up in Phase~2.2. \eqref{eq:d13-phase3} can be simplified to
\begin{align}
\eqref{eq:d13-phase3}=p_{\cup\{1,2\}}&\left(nR_1\left(\frac{1}{p_{\cup\{1,2\}}}-\frac{1}{p_{\cup\{1,2,3\}}}\right)\right.\nonumber\\
&\hspace{.8cm}\left.-nR_3\left(\frac{1}{p_{\cup\{2,3\}}}-\frac{1}{p_{\cup \{1,2,3\}}}\right)\right).\label{eq:d13-phase3-sim}
\end{align}

We are now ready to describe Phase~3, which contains 3 sub-phases.

{\bf Phase 3.1}: Similar to Phase~2.1, we choose $T=\{2,3\}$ for the PE scheme. In  Phase~2.1, we chose the $Q_{2;3\overline{1}}$ packets $X_{2,j_2}$ and the $Q_{3;2\overline{1}}$ packets $X_{3,j_3}$ satisfying $S(X_{2,j_2})=\{3\}$ and $S(X_{3,j_3})=\{2\}$. Since we have already used up all $Q_{3;2\overline{1}}$ packets in Phase~2.1, in Phase~3.1, we choose the $Q_{2;3\overline{1}}$ packets $X_{2,j_2}$ and the new $Q_{3;12}$ packets $X_{3,j_3}$ instead, such that the packets satisfy $S(X_{2,j_2})=\{3\}$ and $S(X_{3,j_3})=\{1,2\}$. Similar to Phase~2.1, we switch to the next-in-line packet as long as the $S(X_{2,j_2})$ (or $S(X_{3,j_3})$) is changed.  Again, the decision whether to switch from $X_{2,j_2}$ to the next-in-line packet $X_{2,j_2'}$ is independent from the decision whether to switch from $X_{3,j_3}$ to the next-in-line packet $X_{3,j_3'}$.

Note that, by Line~\ref{line:S-update} of the {\sc Update}, the $S(X_{2,j_2})$ of a $Q_{2;3\overline{1}}$ packet $X_{2,j_2}$ will change if and only if it is {\sf received} by any one of $\{d_1,d_2\}$. Therefore, in average each $Q_{2;3\overline{1}}$ packet $X_{2,j_2}$ will take $\frac{1}{p_{\cup\{1,2\}}}$ number of time slots before we switch to the next-in-line packet $X_{2,j_2'}$. For comparison,
the $S(X_{3,j_3})$ of a $Q_{3;12}$ packet $X_{3,j_3}$ will change if and only if it is {\sf received} by $\{d_3\}$. Therefore, in average each $Q_{3;12}$ packet $X_{3,j_3}$ will take $\frac{1}{p_3}$ number of time slots before we switch to the next-in-line packet $X_{3,j_3'}$. We continue Phase~3.1 until we have finished all $Q_{2;3\overline{1}}$ packets. It is possible that we finish the $Q_{3;12}$ packets before finishing the $Q_{2;3\overline{1}}$ packets. In this case, we do not need to transmitting any $Q_{3;12}$ packets anymore and we use a degenerate $T=\{2\}$ instead and continue Phase~3.1 by only choosing $Q_{2;3\overline{1}}$ packets $X_{2,j_2}$. Intuitively, Phase~3.1 is a clean-up phase that finishes the $Q_{2;3\overline{1}}$ packets that have not been used in Phase~2.1. While finishing up $Q_{2;3\overline{1}}$ packets, we also piggyback some $Q_{3;12}$ packets through network coding. If all $Q_{3;12}$ packets have been used up, then we continue sending pure $Q_{2;3\overline{1}}$ packets without mixing together any $Q_{3;12}$ packets.

Since we have \eqref{eq:d23-phase3-sim} number of $Q_{2;3\overline{1}}$ packets to begin with, it will take
\begin{align}
\frac{\eqref{eq:d23-phase3-sim}}{p_{\cup\{1,2\}}}=&nR_2\left(\frac{1}{p_{\cup\{1,2\}}}-\frac{1}{p_{\cup\{1,2,3\}}}\right)\nonumber\\
&-nR_3\left(\frac{1}{p_{\cup\{1,3\}}}-\frac{1}{p_{\cup \{1,2,3\}}}\right)\label{eq:Phase3-1}
\end{align}
number of time slots to finish Phase~3.1.

{\em Remark:} When we transmit a $Q_{2;3\overline{1}}$ packet $X_{2,j_2}$, the new $S(X_{2,j_2})$ becomes $\{1,3\}$ if and only if $S_\text{rx}=\{1\}$ (i.e., only $d_1$ {\sf receives} $X_{2,j_2}$). Therefore Phase~3.1 will also create some new $Q_{2;13}$ packets. After Phase~3.1, the number of $Q_{2;13}$ packets is changed from \eqref{eq:d213-phase3} to
\begin{align}nR_2\left(\frac{p_{13\overline{2}}}{p_{\cup\{1,2,3\}}}
+\frac{p_{1\overline{23}}}{p_{\cup\{1,2,3\}}}\frac{f_p(3\overline{2})}{p_{\cup\{2,3\}}}
+\frac{p_{3\overline{12}}}{p_{\cup\{1,2,3\}}}\frac{f_p(1\overline{2})}{p_{\cup \{1,2\}}}\right),\label{eq:d213-phase3-new}
\end{align}
where the first, second, and the third terms correspond to the $Q_{2;13}$ packets generated in Phase~1.3, Phase~2.3, and Phase~2.1 plus Phase~3.1, respectively. We can further simplify \eqref{eq:d213-phase3-new} as
\begin{align}
\eqref{eq:d213-phase3-new}=nR_2p_2\left(\frac{1}{p_2}-\frac{1}{p_{\cup\{1,2\}}}-\frac{1}{p_{\cup\{2,3\}}}
+\frac{1}{p_{\cup\{1,2,3\}}}\right).\label{eq:d213-phase3-sim}
\end{align}

{\bf Phase 3.2}: After Phase~3.1, we move to Phase~3.2. Similar to Phase~3.1, Phase~3.2 serves the role of cleaning up the $Q_{1;3\overline{2}}$ packets that have not been used in Phase~2.2. More explicitly,
we choose $T=\{1,3\}$, and use the $Q_{1;3\overline{2}}$ packets $X_{1,j_1}$ and the new $Q_{3;12}$ packets $X_{3,j_3}$, such that the packets satisfy $S(X_{1,j_1})=\{3\}$ and $S(X_{3,j_3})=\{1,2\}$. It is possible that all $Q_{3;12}$ packets have been used up in Phase~3.1.  In this case, we do not need to transmitting any $Q_{3;12}$ packets anymore and we use a degenerate $T=\{1\}$ instead and continue Phase~3.1 by only choosing $Q_{1;3\overline{2}}$ packets $X_{1,j_1}$.

Similar to all previous phases, we switch to the next-in-line packet as long as the $S(X_{1,j_1})$ (or $S(X_{3,j_3})$) is changed, and the decision whether to switch from $X_{1,j_1}$ to the next-in-line packet $X_{1,j_1'}$ is independent from the decision whether to switch from $X_{3,j_3}$ to the next-in-line packet $X_{3,j_3'}$. We continue Phase~3.2 until we have finished all $Q_{1;3\overline{2}}$ packets. Again, if we finish the $Q_{3;12}$ packets before finishing the $Q_{1;3\overline{2}}$ packets, then we stop transmitting any $Q_{3;12}$ packets, use a degenerate $T=\{1\}$ instead, and continue Phase~3.2 by only choosing $Q_{1;3\overline{2}}$ packets $X_{1,j_1}$.

By Line~\ref{line:S-update} of the {\sc Update}, the $S(X_{1,j_1})$ of a $Q_{1;3\overline{2}}$ packet $X_{1,j_1}$ will change if and only if it is {\sf received} by any one of $\{d_1,d_2\}$. Therefore, in average each $Q_{1;3\overline{2}}$ packet $X_{1,j_1}$ will take $\frac{1}{p_{\cup\{1,2\}}}$ number of time slots before we switch to the next-in-line packet $X_{1,j_1'}$.
Since we have \eqref{eq:d13-phase3-sim} number of $Q_{1;3\overline{2}}$ packets to begin with, it will take
\begin{align}
\frac{\eqref{eq:d13-phase3-sim}}{p_{\cup\{1,2\}}}=&nR_1\left(\frac{1}{p_{\cup\{1,2\}}}-\frac{1}{p_{\cup\{1,2,3\}}}\right)\nonumber\\
&-nR_3\left(\frac{1}{p_{\cup\{2,3\}}}-\frac{1}{p_{\cup \{1,2,3\}}}\right)\label{eq:Phase3-2}
\end{align}
number of time slots to finish Phase~3.2.

{\bf Phase 3.3}: After Phase~3.2, we move to Phase~3.3. Similar to Phases~3.1 and~3.2, Phase~3.3 serves the role of cleaning up the $Q_{1;2\overline{3}}$ packets that have not been used in Phase~2.3. More explicitly,
we choose $T=\{1,2\}$, and use the $Q_{1;2\overline{3}}$ packets $X_{1,j_1}$ and the new $Q_{2;13}$ packets $X_{2,j_2}$, such that the packets satisfy $S(X_{1,j_1})=\{2\}$ and $S(X_{2,j_2})=\{1,3\}$. Recall that in the beginning of Phase~3.3, we have \eqref{eq:d213-phase3-sim} number of $Q_{2;13}$ packets.

Similar to all previous phases, we switch to the next-in-line packet as long as the $S(X_{1,j_1})$ (or $S(X_{2,j_2})$) is changed, and the decision whether to switch from $X_{1,j_1}$ to the next-in-line packet $X_{1,j_1'}$ is independent from the decision whether to switch from $X_{2,j_2}$ to the next-in-line packet $X_{2,j_2'}$. We continue Phase~3.3 until we have finished all $Q_{1;2\overline{3}}$ packets. If we finish the $Q_{2;13}$ packets before finishing the $Q_{1;2\overline{3}}$ packets, then we stop transmitting any $Q_{2;13}$ packets, use a degenerate $T=\{1\}$ instead, and continue Phase~3.3 by only choosing $Q_{1;2\overline{3}}$ packets $X_{1,j_1}$.

By Line~\ref{line:S-update} of the {\sc Update}, the $S(X_{1,j_1})$ of a $Q_{1;2\overline{3}}$ packet $X_{1,j_1}$ will change if and only if it is {\sf received} by any one of $\{d_1,d_3\}$. Therefore, in average each $Q_{1;2\overline{3}}$ packet $X_{1,j_1}$ will take $\frac{1}{p_{\cup\{1,3\}}}$ number of time slots before we switch to the next-in-line packet $X_{1,j_1'}$.
Since we have \eqref{eq:d12-phase3-sim} number of $Q_{1;2\overline{3}}$ packets to begin with, it will take
\begin{align}
\frac{\eqref{eq:d12-phase3-sim}}{p_{\cup\{1,3\}}}=&nR_1\left(\frac{1}{p_{\cup\{1,3\}}}-\frac{1}{p_{\cup\{1,2,3\}}}\right)\nonumber\\
&-nR_2\left(\frac{1}{p_{\cup\{2,3\}}}-\frac{1}{p_{\cup \{1,2,3\}}}\right)\label{eq:Phase3-3}
\end{align}
number of time slots to finish Phase~3.3.

{\bf Phase~4:} We first summarize the status of all the packets in the end of Phase~3.3. For $d_3$, all the $X_{3,j}$ packets are either {\sf received} by $d_3$ (i.e., having $3\in S(X_{3,j})$) or have $S(X_{3,j})=\{1,2\}$, the $Q_{3;12}$ packets. 
By  Line~\ref{line:S-update} of the {\sc Update}, the $S(X_{3,j_3})$ of a $Q_{3;12}$ packet $X_{3,j_3}$ will change if and only if it is {\sf received} by $d_3$. Therefore, in average each $Q_{3;12}$ packet $X_{3,j_3}$ will take $\frac{1}{p_{3}}$ number of time slots before we switch to the next-in-line packet $X_{3,j_3'}$.
 Since the $Q_{3;12}$ packets participate in Phases~3.1 and~3.2, in the end of Phase~3.3, the total number of $Q_{3;12}$ packets becomes
\begin{align}
\left(\text{Eq.}\eqref{eq:d3-phase3-sim}-p_3\cdot \text{Eq.}\eqref{eq:Phase3-1}-p_3\cdot \text{Eq.}\eqref{eq:Phase3-2}\right)^+,\label{eq:d3-phase4}
\end{align}
where $(\cdot)^+=\max(\cdot, 0)$ is the projection to the non-negative reals.

For $d_2$, all $X_{2,j}$ packets that have $S(X_{2,j})=\emptyset$ or $S(X_{2,j})=\{1\}$ have been used up in Phase~1.2 or Phase~2.3, respectively. All $X_{2,j}$ packets that have $S(X_{2,j})=\{3\}$ have been used up in Phases~2.1 and~3.1. As a result,
all the $X_{2,j}$ packets
are either {\sf received} by $d_2$ (i.e., having $2\in S(X_{2,j})$) or have $S(X_{2,j})=\{1,3\}$, the $Q_{2;13}$ packets.
By Line~\ref{line:S-update} of the {\sc Update}, the $S(X_{2,j_2})$ of a $Q_{2;13}$ packet $X_{2,j_2}$ will change if and only if it is {\sf received} by $d_2$. Therefore, in average each $Q_{2;13}$ packet $X_{2,j_2}$ will take $\frac{1}{p_{2}}$ number of time slots before we switch to the next-in-line packet $X_{2,j_2'}$.
 Since the $Q_{2;13}$ packets also participate in Phase~3.3, in the end of Phase~3.3, the total number of $Q_{2;13}$ packets becomes
\begin{align}
\left(\text{Eq.}\eqref{eq:d213-phase3-sim}-p_2\cdot \text{Eq.}\eqref{eq:Phase3-3}\right)^+.\label{eq:d2-phase4}
\end{align}

For $d_1$, all $X_{1,j}$ packets that have $S(X_{1,j})=\emptyset$, $S(X_{1,j})=\{2\}$, and $S(X_{1,j})=\{3\}$ have been used up in Phases~1.1, 2.3+3.3, and 2.2+3.2, respectively. As a result,
all the $X_{1,j}$ packets
are either {\sf received} by $d_1$ (i.e., having $1\in S(X_{1,j})$) or have $S(X_{1,j})=\{2,3\}$, the $Q_{1;23}$ packets.
In the end of Phase~3.3, the total number of $Q_{1;23}$ packets is
\begin{align}nR_1\left(\frac{p_{23\overline{1}}}{p_{\cup\{1,2,3\}}}
+\frac{p_{2\overline{13}}}{p_{\cup\{1,2,3\}}}\frac{f_p(3\overline{1})}{p_{\cup\{1,3\}}}
+\frac{p_{3\overline{12}}}{p_{\cup\{1,2,3\}}}\frac{f_p(2\overline{1})}{p_{\cup \{1,2\}}}\right),\label{eq:d123-phase4}
\end{align}
where the first, second, and the third terms correspond to the $Q_{1;23}$ packets generated in Phase~1.1, 2.3+3.3, and 2.2+3.2, respectively. We can further simplify \eqref{eq:d123-phase4} as
\begin{align}
\eqref{eq:d123-phase4}=nR_1p_1\left(\frac{1}{p_1}-\frac{1}{p_{\cup\{1,2\}}}-\frac{1}{p_{\cup\{1,3\}}}
+\frac{1}{p_{\cup\{1,2,3\}}}\right).\label{eq:d1-phase4}
\end{align}

In Phase~4, since the only remaining packets (that still need to be retransmitted, see Lemma~\ref{lem:decodability}) are the $Q_{1;23}$, $Q_{2;13}$, and $Q_{3;12}$ packets, we always choose $T=\{1,2,3\}$ and randomly and linearly mix the $Q_{1;23}$, $Q_{2;13}$, and $Q_{3;12}$ packets (one from each queue) for each time slot. That is, we use Phase~4 to clean up the remaining packets. Since in average a $Q_{i;\{1,2,3\}\backslash i}$ packet $X_{i,j}$ takes $\frac{1}{p_i}$ amount of time before it is received by $d_i$, Phase~4 thus takes
\begin{align}
\max\left(\frac{\text{Eq.\eqref{eq:d1-phase4}}}{p_1},\frac{\text{Eq.\eqref{eq:d2-phase4}}}{p_2}, \frac{\text{Eq.\eqref{eq:d3-phase4}}}{p_3}\right). \label{eq:total-phase4}
\end{align}
number of time slots to finish. More precisely, as time proceeds, we need to gradually switch to a degenerate $T$. For example, if the $Q_{2;13}$ packets are used up first, then we set the new $T=\{1,3\}$ and focus on mixing the remaining $Q_{1;23}$ and $Q_{3;12}$ packets. After \eqref{eq:total-phase4} number of time slots, it is thus guaranteed that for sufficiently large $n$, all information packets $X_{k,j}$, $k\in\{1,2,3\}$, and $j\in [nR_k]$ satisfy $k\in S(X_{k,j})$. By Corollary~\ref{cor:dec}, all $d_k$ can decode the desired packets $X_{k,j}$, $j\in [nR_k]$ with close-to-one probability.

{\bf Quantify the throughput of the 4-phase scheme:} The remaining task is to show that if $(R_1,R_2,R_3)$ is in the interior of the outer bound in Proposition~\ref{prop:outer}, then the total number of time slots used by the above 4-Phase PE scheme is within the time budget $n$ time slots.  That is, we need to prove that
\begin{align}
&\eqref{eq:phase11}+\eqref{eq:phase12}+\eqref{eq:phase13}+\eqref{eq:d3-phase2.1}+
\eqref{eq:d3-phase2.2}\nonumber\\
&+\eqref{eq:d2-phase2.3}+\eqref{eq:Phase3-1}+\eqref{eq:Phase3-2}+\eqref{eq:Phase3-3}+\eqref{eq:total-phase4}\leq n.\label{eq:final1}
\end{align}
The summation of the first nine terms of the left-hand side of \eqref{eq:final1} can be simplified to
\begin{align}
A_{\text{1.1--3.3}}\stackrel{\Delta}{=}&nR_1\left(\frac{1}{p_{\cup\{1,2\}}}+\frac{1}{p_{\cup\{1,3\}}}-\frac{1}{p_{\cup\{1,2,3\}}}\right)\nonumber\\
&+nR_2\frac{1}{p_{\cup\{1,2\}}}+nR_3\frac{1}{p_{\cup\{1,2,3\}}}, \nonumber
\end{align}
where $A_{\text{1.1--3.3}}$ is the total number of time slots in Phases~1.1 to 3.3. Since \eqref{eq:total-phase4} is the maximum of three terms, proving \eqref{eq:final1} is thus equivalent to proving that the following three inequality hold simultaneously.
\begin{align}
&A_{\text{1.1--3.3}}+\frac{\eqref{eq:d1-phase4}}{p_1}\leq n,\nonumber\\
&A_{\text{1.1--3.3}}+\frac{\eqref{eq:d2-phase4}}{p_2}\leq n,\nonumber\\
\text{and }&A_{\text{1.1--3.3}}+\frac{\eqref{eq:d3-phase4}}{p_3}\leq n.\nonumber
\end{align}
With direct simplification of the expressions, proving the above three inequalities is equivalent to proving
\begin{align}
&\frac{nR_1}{p_1}+\frac{nR_2}{p_{\cup\{1,2\}}}+\frac{nR_3}{p_{\cup\{1,2,3\}}}\leq n,\nonumber\\
&\frac{nR_1}{p_{\cup\{1,2\}}}+\frac{nR_2}{p_{2}}+\frac{nR_3}{p_{\cup\{1,2,3\}}}\leq n,\nonumber\\
\text{and }&\frac{nR_1}{p_{\cup\{1,3\}}}+\frac{nR_2}{p_{\cup\{1,2,3\}}}+\frac{nR_3}{p_{3}}\leq n, \nonumber
\end{align}
which hold for any $(R_1,R_2,R_3)$ in the interior of the capacity outer bound in Proposition~\ref{prop:outer}. More rigorously, by the law of large numbers, the expressions of the numbers of time slots in Phase~1.1 to Phase 4:
\eqref{eq:phase11}, \eqref{eq:phase12}, \eqref{eq:phase13}, \eqref{eq:d3-phase2.1},
\eqref{eq:d3-phase2.2}, \eqref{eq:d2-phase2.3}, \eqref{eq:Phase3-1}, \eqref{eq:Phase3-2}, \eqref{eq:Phase3-3}, and \eqref{eq:total-phase4}, are all of precision $o(n)$. Since $(R_1,R_2,R_3)$ is in the interior of the capacity outer bound in Proposition~\ref{prop:outer}, the last three inequalities hold with arbitrarily close to one probability for sufficiently large $n$. The proof of Proposition~\ref{prop:cap3} is thus complete.

\subsection{Achieving the Capacity of 1-to-3 Broadcast PECs With COF --- High-Level Discussion\label{subsec:high-level-cap3}}

As discussed in Section~\ref{subsec:detailed3}, one advantage of a PE scheme is that although different packets $X_{k,j_k}$  and $X_{i,j_i}$ with $k\neq i$ may be mixed together, the corresponding evolution of $X_{k,j_k}$ (the changes of $S(X_{k,j_k})$ and $\vv(X_{k,j_k})$)  are independent from the evolution of $X_{i,j_i}$. Also by Lemma~\ref{lem:non-interfering}, two different packets $X_{k,j_k}$  and $X_{i,j_i}$ can share the same time slot without interfering each other as long as $i\in S(X_{k,j_k})$ and $k\in S(X_{i,j_i})$. These two observations enable us to convert the achievability problem of a PE scheme to the following ``time slot packing problem."


\begin{figure}
\centering
\includegraphics[width=8cm]{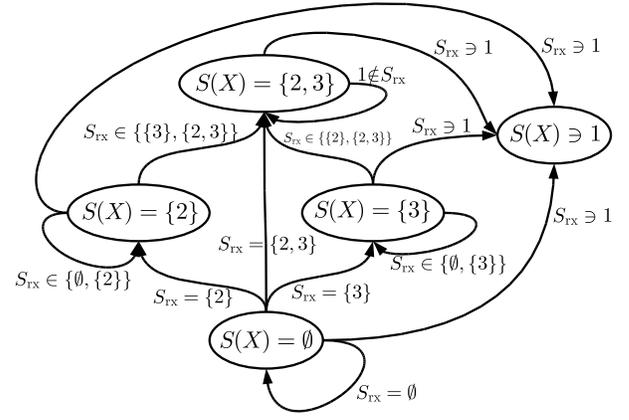}
\caption{The state transition diagram for destination $d_1$ when applying the packet evolution scheme to a 1-to-3 broadcast PEC.\label{fig:trans}}
\end{figure}

Let us focus on the $(s,d_1)$ session. 
For any $X_{1,j}$ packet, initially $S(X_{1,j})=\emptyset$. Then as time proceeds, each $X_{1,j}$ starts to participate in packet transmission. The corresponding $S(X_{1,j})$ evolves to different values, depending on the set of destinations that {receive} the transmitted packet in which $X_{1,j}$ participates. Since in this subsection we focus mostly on $S(X_{1,j})$, we sometimes use $S(X)$ as shorthand if it is unambiguous from the context.
 Fig.~\ref{fig:trans} describes how $S(X)$ evolves between different values.  In Fig.~\ref{fig:trans}, we use circles to represent the five different states according to the $S(X)$ value. Recall that $S_\text{rx}$ is the set of destinations who successfully receive the transmitted coded packet. The receiving set $S_\text{rx}$ decides the transition between different states. In Fig.~\ref{fig:trans}, we thus mark each transition arrow (between different states) by the value(s) of $S_\text{rx}$ that enables the transition. For example, by Line~\ref{line:S-update} of the {\sc Update}, when the initial state is $S(X)=\emptyset$, if the receiving set $S_\text{rx}\ni 1$, then the new set satisfies $S(X)\ni 1$. Similarly, when the initial state is $S(X)=\emptyset$, if $S_\text{rx}=\{2,3\}$, then the new $S(X)$ becomes $S(X)=\{2,3\}$.
(Note that the corresponding $\vv(X_{1,j})$ also evolves over time to maintain the non-interfering property in Lemma~\ref{lem:non-interfering}, which is not illustrated in Fig.~\ref{fig:trans}.)

Since $S(X_{1,j})\ni 1$ if and only if $d_1$ {\sf receives} $X_{1,j}$, it thus takes $\frac{nR_1}{p_1}$ {\em logical} time slots to finish the transmission of $nR_1$ information packets. On the other hand, some logical time slots for the $(s,d_1)$ session can be ``packed/shared" jointly with the logical time slots for the $(s,d_{k})$ session, $k\neq 1$, or, equivalently, one physical time slot can serve two sessions simultaneously. For the following, we quantify how many logical time slots of the $(s,d_1)$ session are {\em compatible} to those of other sessions. For any $S_0\in 2^{\{1,2,3\}}$, let $A_{1;S_0}$ denote the number of logical time slots (out of the total $\frac{nR_1}{p_1}$ time slots) such that during those time slots, the transmitted $X_{1,j}$ has $S(X_{1,j})=S_0$. Initially, there are $nR_1$ packets $X_{1,j}$. If any one of $\{d_1,d_2,d_3\}$ {\sf receives} the transmitted packet (equivalently $S_\text{rx}\neq \emptyset$), $S(X_{1,j})$ becomes non-empty. Therefore, each $X_{1,j}$ contributes to $\frac{1}{p_{\cup\{1,2,3\}}}$ logical time slots with $S(X_{1,j})=\emptyset$. We thus have
\begin{align}
A_{1;\emptyset}= nR_1\left(\frac{1}{p_{\cup\{1,2,3\}}}\right).\label{eq:A0}
\end{align}
We also note that during the evolution process of $X_{1,j}$, if any one of $\{d_1,d_3\}$ {\sf receives} the transmitted packet (equivalently $S_\text{rx}\cap \{1,3\}\neq \emptyset$), then $S(X)$ value will move from one of the two states ``$S(X)=\emptyset$" and ``$S(X)=\{2\}$" to one of the three states ``$S(X)=\{3\}$," ``$S(X)=\{2,3\}$," and ``$S(X)\ni 1$."
Therefore, each $X_{1,j}$ contributes to $\frac{1}{p_{\cup\{1,3\}}}$ logical time slots for which we either have $S(X_{1,j})=\emptyset$ or $S(X_{1,j})=\{2\}$. By the above reasoning, we have
\begin{align}
A_{1;\{2\}}+A_{1;\emptyset}= nR_1\left(\frac{1}{p_{\cup\{1,3\}}}\right).\label{eq:A2}
\end{align}
Similarly, during the evolution process of $X_{1,j}$, if any one of $\{d_1,d_2\}$ {\sf receives} the transmitted packet (equivalently $S_\text{rx}\cap \{1,2\}\neq \emptyset$), then $S(X)$ value will move from one of the two states ``$S(X)=\emptyset$" and ``$S(X)=\{3\}$" to one of the three states ``$S(X)=\{2\}$," ``$S(X)=\{2,3\}$," and ``$S(X)\ni 1$."
 Therefore, each $X_{1,j}$ contributes to $\frac{1}{p_{\cup\{1,2\}}}$ logical time slots for which either $S(X_{1,j})=\emptyset$ or $S(X_{1,j})=\{3\}$.
By the above reasoning, we have
\begin{align}
A_{1;\{3\}}+A_{1;\emptyset}= nR_1\left(\frac{1}{p_{\cup\{1,2\}}}\right).\label{eq:A3}
\end{align}
Before $S(X)$ evolves to the state ``$S(X)\ni 1$," any logical time slot contributed by such an $X$ must have one of the following four states:
``$S(X)=\emptyset$," ``$S(X)=\{2\}$," ``$S(X)=\{3\}$," and ``$S(X)=\{2,3\}$."  As a result, we must have
\begin{align}
A_{1;\{2,3\}}+A_{1;\{2\}}+A_{1;\{3\}}+A_{1;\emptyset}= nR_1\left(\frac{1}{p_{1}}\right).\label{eq:A23}
\end{align}
Solving \eqref{eq:A0}, \eqref{eq:A2}, \eqref{eq:A3}, and \eqref{eq:A23}, we have
\begin{align}
A_{1;\emptyset}&=nR_1\left(\frac{1}{p_{\cup\{1,2,3\}}}\right)\label{eq:A1-0-num}\\
A_{1;\{2\}}&=nR_1\left(\frac{1}{p_{\cup\{1,3\}}}-\frac{1}{p_{\cup\{1,2,3\}}}\right)\\
A_{1;\{3\}}&=nR_1\left(\frac{1}{p_{\cup\{1,2\}}}-\frac{1}{p_{\cup\{1,2,3\}}}\right)\\
A_{1;\{2,3\}}&=nR_1\left(\frac{1}{p_1}-\frac{1}{p_{\cup\{1,2\}}}-\frac{1}{p_{\cup\{1,3\}}}+\frac{1}{p_{\cup\{1,2,3\}}}\right).
\end{align}
We can also define $A_{k;S_0}$ as the number of logical time slots of the $(s,d_k)$ session with $S(X_{k,j_k})=S_0$. 
By similar derivation arguments, we have
\begin{align}
A_{2;\emptyset}&=nR_2\left(\frac{1}{p_{\cup\{1,2,3\}}}\right)\\
A_{2;\{1\}}&=nR_2\left(\frac{1}{p_{\cup\{2,3\}}}-\frac{1}{p_{\cup\{1,2,3\}}}\right)\\
A_{2;\{3\}}&=nR_2\left(\frac{1}{p_{\cup\{1,2\}}}-\frac{1}{p_{\cup\{1,2,3\}}}\right)\\
A_{2;\{1,3\}}&=nR_2\left(\frac{1}{p_2}-\frac{1}{p_{\cup\{1,2\}}}-\frac{1}{p_{\cup\{2,3\}}}+\frac{1}{p_{\cup\{1,2,3\}}}\right).
\end{align}
and
\begin{align}
A_{3;\emptyset}&=nR_3\left(\frac{1}{p_{\cup\{1,2,3\}}}\right)\\
A_{3;\{1\}}&=nR_3\left(\frac{1}{p_{\cup\{2,3\}}}-\frac{1}{p_{\cup\{1,2,3\}}}\right)\\
A_{3;\{2\}}&=nR_3\left(\frac{1}{p_{\cup\{1,3\}}}-\frac{1}{p_{\cup\{1,2,3\}}}\right)\\
A_{3;\{1,2\}}&=nR_3\left(\frac{1}{p_3}-\frac{1}{p_{\cup\{1,3\}}}-\frac{1}{p_{\cup\{2,3\}}}+\frac{1}{p_{\cup\{1,2,3\}}}\right).\label{eq:A3-12-num}
\end{align}

Recall that by definition, $A_{k;S_0}$ is the number of logical time slots of the $(s,d_k)$ session that is compatible to the logical time slots of $(s,d_i)$ session with $i\in S_0$.
The achievability problem of a PE scheme thus becomes the following time slot packing problem.

\begin{quote}
Consider 12 types of logical time slots and each type is denoted by $(k;S_0)$ for some $k\in\{1,2,3\}$, $S_0\in 2^{\{1,2,3\}}$, and $k\notin S_0$. The numbers of logical time slots of each type are described in \eqref{eq:A1-0-num} to \eqref{eq:A3-12-num}. Two logical time slots of types $(k_1;S_1)$ and $(k_2;S_2)$ are {\em compatible} if $k_1\neq k_2$, $k_1\in S_2$, and $k_2\in S_1$. Any compatible logical time slots can be packed together in the same physical time slot. For example, consider the following types of logical time slots: $(1;\{2,3\})$, $(2;\{1,3\})$, and $(3;\{1,2\})$. Three logical time slots, one from each type, can occupy the same physical time slot since any two of them are compatible to each other. The time slot packing problem is thus: {\em Can we pack all the logical time slots within $n$ physical time slots?}
\end{quote}

\begin{figure*}
\centering
\includegraphics[width=14cm]{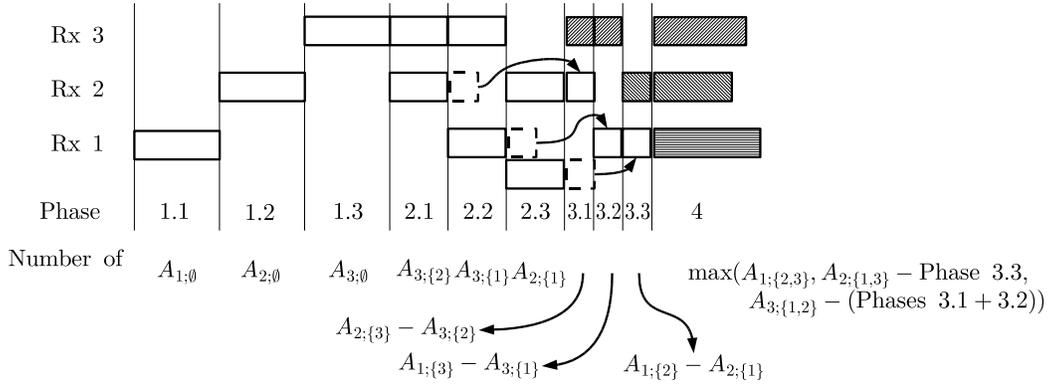}
\caption{The time-slot packing policy that corresponds to the 4-Phase solution for 1-to-3 broadcast PECs. The shaded rectangles represent the logical time slots of types $(1;\{2,3\})$, $(2;\{1,3\})$, and $(3;\{1,2\})$. \label{fig:pack}}
\end{figure*}

The detailed 4-phase PE scheme in Section~\ref{subsec:detailed3} thus corresponds to the time-slot-packing policy depicted in Fig.~\ref{fig:pack}. Namely, we first use Phases~1.1 to 1.3 send all the logical time slots that cannot be packed with any other logical time slots. Totally, it takes $A_{1;\emptyset}+A_{2;\emptyset}+A_{3;\emptyset}$ number of time slots to finish Phases~1.1 to 1.3.  We then use Phases~2.1 to 2.3 to pack those logical time slots that can be packed with exactly one other logical time slot from a different session. By the assumption that $d_1$ dominates $d_2$ and $d_3$, and $d_2$ dominates $d_3$, we have $A_{1;\{2\}}\geq A_{2;\{1\}}$, $A_{1;\{3\}}\geq A_{3;\{1\}}$, and $A_{2;\{3\}}\geq A_{3;\{2\}}$. Therefore, it takes $A_{3;\{2\}}+A_{3;\{1\}}+A_{2;\{1\}}$ number of physical time slots to finish Phases~2.1 to 2.3.

Phases~3.1 to 3.3 are to clean up the remaining logical time slots of types $(2;\{3\})$, $(1;\{3\})$, and $(1;\{2\})$. We notice that in Phase~3.1 when sending a logical time slot of type $(2;\{3\})$, there is no type-$(3,\{2\})$ logical time slot that can be mixed together. On the other hand, there are still some type-$(3,\{1,2\})$ logical time slots, which can also be mixed with the logical time slots of the $(s,d_2)$ session. Therefore, when we send a logical time slot of type $(2;\{3\})$, the optimal way is to pack it with a type-$(3,\{1,2\})$ logical time slots together as illustrated in Phase~3.1 of Fig.~\ref{fig:pack}. It is worth emphasizing that although those type-$(3,\{1,2\})$ logical time slots can be packed with two other logical time slots simultaneously, there is no point to save the type-$(3,\{1,2\})$ logical time slots for future mixing. The reason is that when Phase~3.1 cleans up the remaining type-$(2;\{3\})$ logical time slots, it actually  provides a zero-cost free ride for any logical time slot that is compatible to a type-$(2;\{3\})$ logical time slot. Therefore, piggybacking a type-$(2;\{3\})$ logical time slot with a type-$(3,\{1,2\})$ logical time slot is optimal.
 Similarly, we also take advantage of the free ride by packing logical time slots of type-$(1;\{3\})$ with that of type-$(3;\{1,2\})$  in Phases~3.2, and by  packing logical time slots of type-$(1;\{2\})$ with that of type-$(2;\{1,3\})$  in Phases~3.3.
 It thus takes \begin{align}
(A_{2;\{3\}}-A_{3;\{2\}})+(A_{1;\{3\}}-A_{3;\{1\}})+(A_{1;\{2\}}-A_{2;\{1\}})\nonumber
\end{align}
number of time slots to finish Phases~3.1 to 3.3.

In Phase~4, we clean up and pack together all the remaining logical time slots of types $(1;\{2,3\})$, $(2;\{1,3\})$, and $(3;\{1,2\})$. We thus need
\begin{align}
\max&\left((A_{3;\{1,2\}}-(A_{2;\{3\}}-A_{3;\{2\}})-(A_{1;\{3\}}-A_{3;\{1\}}))^+,\right.\nonumber\\
&\left.(A_{2;\{1,3\}}-(A_{1;\{2\}}-A_{2;\{1\}}))^+, A_{1;\{2,3\}}\right)\label{eq:pack-max}
\end{align}
number of time slots to finish Phase~4. Depending on which of the three terms in \eqref{eq:pack-max} is the largest, the total number of physical time slots is one of the following three expressions:
\begin{align}
&A_{3;\emptyset}+A_{3;\{1\}}+A_{3;\{2\}}+A_{3;\{1,2\}}+A_{1;\emptyset}+A_{1;\{2\}}+A_{2;\emptyset},\nonumber\\
&A_{2;\emptyset}+A_{2;\{1\}}+A_{2;\{3\}}+A_{2;\{1,3\}}+A_{1;\emptyset}+A_{1;\{3\}}+A_{3;\emptyset},\nonumber\\
&\text{or }A_{1;\emptyset}+A_{1;\{2\}}+A_{1;\{3\}}+A_{1;\{2,3\}}+A_{2;\emptyset}+A_{2;\{3\}}+A_{3;\emptyset}.\nonumber
\end{align}
By \eqref{eq:A1-0-num} to \eqref{eq:A3-12-num}, one can easily check that all three equations are less than $n$ for any $(R_1,R_2,R_3)$ in the interior of the outer bound of Proposition~\ref{prop:outer}, which answers the time-slot-packing problem in an affirmative way.
One can also show that the packing policy in Fig.~\ref{fig:pack} is the tightest among any other packing policy, which indeed corresponds to the capacity-achieving PE scheme described in Section~\ref{subsec:detailed3}.

\subsection{The Achievability Results of General 1-to-$M$ Broadcast PECs With COF\label{subsec:1-to-M-ach}}

In Section~\ref{subsec:high-level-cap3}, we show how to reduce the achievability problem of a PE scheme to a time-slot-packing problem. However, the converse may not hold due to the causality constraint of the PE scheme. By taking into account the causality constraint, the time-slot-packing arguments can be used to generate new achievable rate inner bounds for general 1-to-$M$ broadcast PECs with COF, which will be discussed in this subsection.

One major difference between the tightest solution of the time-slot-packing problem in Fig.~\ref{fig:pack} and the detailed PE scheme in Section~\ref{subsec:detailed3} is that for the former, we can pack the time slots in any order. There is no need to first pack those logical time slots that cannot be shared with any other time slots. Any packing order will result in the same amount of physical time slots in the end. On the other hand, for the PE scheme it is critical to perform the 4 phases (10 sub-phases) in sequence since many packets used in the later phase are generated by the previous phases. For example, all the packets in Phases~2 to~4 are generated in Phases~1.1 to 1.3. Therefore it is imperative to conduct Phase~1 first before Phases~2 to~4. Similarly, the  $Q_{3;\{1,2\}}$ packets used in Phases~3.1 and~3.2 are generated in Phases~1.3, 2.1, and 2.2. Therefore, the number of  $Q_{3;\{1,2\}}$ packets in the end of Phase~1.3 (without those generated in Phases~2.1 and 2.2) may not be sufficient for mixing with $Q_{1;\{3\}}$ packets. As a result, it can be suboptimal to perform Phase~3.1 before Phases~2.1 and~2.2.

The causality constraints for a 1-to-$M$ PEC with $M\geq 4$ quickly become complicated due to the potential {\em cyclic} dependence\footnote{For general 1-to-$M$ PECs with $M\geq 4$, we may have the following cyclic dependence relationship: Packet mixing in Phase~A needs to use the packets generated by the packet mixing during Phase~B. Packing mixing in Phase~B needs the packets resulted from the packet mixing during Phase~C. But the packing mixing of Phase~C also needs the packets resulted from the packing mixing in Phase~A. Quantifying such a cyclic dependence relationship with causality constraints is a complicated problem.} of the problem. To simplify the derivation, we consider the following {\em sequential acyclic construction of PE schemes}, which allows tractable performance analysis but at the cost of potentially being  throughput suboptimal. As will be seen in Section~\ref{subsec:sim}, for most PEC channel parameters, the proposed sequential acyclic PE schemes are sufficient to achieve the channel capacity.

For the following, we describe the sequential PE schemes. The main feature of the sequential PE scheme is that we choose the mixing set $T$ in a sequential, acyclic fashion. For comparison, the $T$ parameters used in the capacity-achieving 1-to-3 PE scheme of Section~\ref{subsec:detailed3} are $\{1\}$, $\{2\}$, $\{3\}$, $\{2,3\}$, $\{1,3\}$, $\{1,2\}$, $\{2,3\}$, $\{1,3\}$, $\{1,2\}$, and $\{1,2,3\}$ in Phases~1.1 to 4, respectively. We notice that $T=\{1,2\}$ is visited twice in Phases 2.3 and 3.3. We thus call the capacity-achieving PE scheme a {\em cyclic} PE scheme. For the sequential PE schemes, we never revisit any $T$ value during all the phases.

To design a sequential PE scheme, we first observe that in the capacity-achieving 4-Phase PE scheme in Section~\ref{subsec:detailed3}, we always start from mixing a small subset $T$ then gradually move to mixing a larger subset $T$. The intuition behind is that when mixing a small set, say $T=\{2,3\}$ in Phase~2.1, we can create more coding opportunities in the later Phase~4 when $T=\{1,2,3\}$. Recall the definition of {\em cardinality-compatible} total ordering $\prec$ on $2^{[K]}$ in \eqref{eq:card-ordering}. For a sequential PE scheme, we thus choose the mixing set $T$ from the smallest to the largest according to the given cardinality-compatible total ordering. The detailed algorithm of choosing $T$ and the target packets $X_{k,j_k}$, $k\in T$, is described as follows.

There are $(2^K-1)$ phases and each phase is indexed by a non-empty subset $T\subseteq [K]$. We move sequentially between phases according to the cardinality-compatible total ordering $\prec$. That is, if $T_1\prec T_2$ and there is no other subset $T_3$ satisfying $T_1\prec T_3\prec T_2$, then after the completion of Phase~$T_1$, we move to Phase~$T_2$.

Consider the operation in Phase~$T$. Recall that the basic properties of the PE scheme allow us to choose the target packets $X_{k,j_k}$ independently for all $k\in T$. In Phase~$T$, consider a fixed $k\in T$. Let $S_{k}=T\backslash k$. We first choose a $Q_{k;S_k}$ packet $X_{k,j_k}$, i.e., those with $S(X_{k,j_k})=S_k$, and keep using this packet for transmission, which will be mixed with packets from other sessions according to Line~\ref{line:vv-construct} of the {PE} scheme. Whenever the current $X_{k,j_k}$ packet evolves (the corresponding $S(X_{k,j_k})$ changes), we move to the next $Q_{k;S_k}$ packet $X_{k,j_k'}$. Continue this process for a pre-defined amount of time slots. We use $w_{k;S_k\rightarrow S_k}$ to denote the number of time slots in which we choose a $Q_{k;S_k}$ packet. After $w_{k;S_k\rightarrow S_k}$ number of time slots, we are still in Phase~$T$ but we will start to choose a different $Q_{k;\tilde{S}_{k}}$ packet $X_{k,j_k}$ (i.e., with $S(X_{k,j_k})=\tilde{S}_{k}$), which will be mixed with packets from other sessions in $T$.
More explicitly, we choose a sequence of $\tilde{S}_{k}$ such that all $\tilde{S}_{k}$ satisfy $S_k\subseteq \tilde{S}_{k}\subseteq ([K]\backslash k)$, which guarantees that such new $X_{k,j_k}$ with $S(X_{k,j_k})=\tilde{S}_{k}$ is still non-interfering from the perspectives of all other sessions in $T$. The order we choose the $\tilde{S}_{k}$ follows that of the total ordering $\prec$. The closer $\tilde{S}_{k}$ is to $S_k$, the earlier we use such $\tilde{S}_{k}$.

For any chosen $\tilde{S}_{k}$, we choose a $Q_{k;\tilde{S}_{k}}$ packet $X_{k,j_k}$, i.e., those with $S(X_{k,j_k})=\tilde{S}_{k}$, and keep using this packet to generate coded packets for transmission. Whenever the current $X_{k,j_k}$ packet evolves (the corresponding $S(X_{k,j_k})$ changes), we move to the next $Q_{k;\tilde{S}_{k}}$ packet $X_{k,j_k'}$. Continue this process for a pre-defined amount of time slots. We use $w_{k;\tilde{S}_{k}\rightarrow S_k}$ to denote the number of time slots in which we choose a $Q_{k;\tilde{S}_{k}}$ packet. That is, $w_{k;\tilde{S}_{k}\rightarrow S_k}$ is the number of time slots that we are using a $Q_{k;\tilde{S}_{k}}$ packet in substitute for a $Q_{k;S_{k}}$ packet, which is similar to the operations in Phases~3.1 to~3.3. After $w_{k;\tilde{S}_{k}\rightarrow S_k}$ number of time slots, we are still in Phase~$T$ but we will move to the next eligible $\tilde{S}_{k}$ according to the total ordering $\prec$. Continue this process until all $\tilde{S}_{k}$ have been used.

Since we choose the target packet $X_{k,j_k}$ independently for all $k$, Phase~$T$ thus takes
\begin{align}
x_T\stackrel{\Delta}{=}\max_{\forall k\in T}\left(\sum_{\forall S:(T\backslash k)\subseteq S\subseteq([K]\backslash k)}w_{k;S\rightarrow (T\backslash k)}\right)\label{eq:phase-xT}
\end{align}
number of time slots to finish. Since we have totally $(2^{K}-1)$  different phases, it thus takes $
\sum_{\forall T\in 2^{[K]}: T\neq \emptyset}x_T$
to finish all the phases.

For the following, we will show that there exists a feasible sequential PE scheme if the choices of $\{x_T:\forall T\in 2^{[K]}, T\neq \emptyset\}$ and $\{w_{k;S\rightarrow T}:\forall k\in[K], \forall T\subseteq S\subseteq ([K]\backslash k)\}$ satisfy \eqref{eq:phase-xT} and the following equations:
\begin{align}
&\sum_{\forall T\in 2^{[K]}: T\neq \emptyset}x_T\leq n(1-\epsilon) \text{ for some $\epsilon>0$}\label{eq:pf-total}\\
&\forall k\in[K],\quad w_{k;\emptyset\rightarrow \emptyset}\cdot p_{\cup [K]}=n R_k\label{eq:pf-ind-length-0}\\
&\forall k\in[K], \forall S\subseteq ([K]\backslash k), S\neq \emptyset, \nonumber\\
&\hspace{0cm}
\left(\sum_{\forall T_1: T_1\subseteq S} w_{k;S\rightarrow T_1}\right) p_{\cup ([K]\backslash S)}= \nonumber\\
&\hspace{.7cm}\sum_{\scriptsize \begin{array}{c}\forall S_1,T_1:\text{such that}\\
T_1\subseteq S_1\subseteq ([K]\backslash k),\\
T_1\subseteq S,S\nsubseteq S_1 \end{array}}w_{k;S_1\rightarrow T_1}\cdot f_p\left((S\backslash T_1)\overline{([K]\backslash S)}\right)\label{eq:pf-ind-length-1}\\
&\forall k\in[K], S,T\in 2^{[K]} \text{ satisfying } T\subseteq S\subseteq ([K]\backslash k), T\neq S,\nonumber\\
&\hspace{0cm}\left(w_{k;S\rightarrow T}+\sum_{\scriptsize\begin{array}{c}\forall T_1\subseteq S:\\
(T_1\cup\{k\})\prec (T\cup\{k\})\end{array}}w_{k; S\rightarrow T_1}\right)p_{\cup ([K]\backslash S)} \leq\nonumber\\
&\hspace{.5cm}
\sum_{\scriptsize\begin{array}{c}\forall S_1: S_1\prec S,\\
T\subseteq S_1\subseteq ([K]\backslash k)\end{array}}w_{k;S_1\rightarrow T} \cdot f_p\left((S\backslash T)\overline{([K]\backslash S)}\right)+\nonumber\\
&\hspace{.7cm}\sum_{\scriptsize \begin{array}{c}\forall S_1,T_1:\text{such that}\\
T_1\subseteq S_1\subseteq ([K]\backslash k),\\
(T_1\cup\{k\})\prec (T\cup\{k\}),\\
T_1\subseteq S,S\nsubseteq S_1 \end{array}}w_{k;S_1\rightarrow T_1}\cdot f_p\left((S\backslash T_1)\overline{([K]\backslash S)}\right).\label{eq:pf-ind-length-2}
\end{align}
Note that \eqref{eq:phase-xT} to \eqref{eq:pf-ind-length-2} are similar to \eqref{eq:total-x} to \eqref{eq:ind-length-2} of the achievability inner bound in Proposition~\ref{prop:ach2}. The only differences are (i) The new scaling factor $n$ in \eqref{eq:pf-total} and \eqref{eq:pf-ind-length-0} when compared to \eqref{eq:total-x} and \eqref{eq:ind-length-0}; (ii) The use of the max operation in \eqref{eq:phase-xT} when compared to \eqref{eq:coding-len}; and (iii) The equality ``$=$" in \eqref{eq:pf-ind-length-0} and \eqref{eq:pf-ind-length-1} instead of the inequality ``$\geq$" in \eqref{eq:ind-length-0} and \eqref{eq:ind-length-1}. The first two differences (i) and (ii) are simple restatements and do not change the feasibility region. The third difference (iii) can be reconciled by sending auxiliary dummy (all-zero) packets in the PE scheme as will be clear in the following proof. As a result, we focus on proving the existence of a feasible sequential PE scheme provided the new inequalities \eqref{eq:phase-xT} to \eqref{eq:pf-ind-length-2} are satisfied.

Assuming sufficiently large $n$, the law of large numbers ensures that all the following discussion are accurate within the precision $o(n)$, which is thus ignored for simplicity.
\eqref{eq:pf-total} implies that we can finish all the phases within $n$ time slots. Since each $Q_{k;\emptyset}$ packet $X_{k,j_k}$ in average needs $\frac{1}{p_{\cup[K]}}$ time slots before its $S(X_{k,j_k})$ evolves to another value, \eqref{eq:pf-ind-length-0} ensures that after Phase~$\{k\}$, all $Q_{k;\emptyset}$ packets have been used up and evolved to a different $Q_{k;S}$ packet.

Suppose that we are currently in Phase~$(T\cup\{k\})$ for some $k\notin T$, and suppose that we just finished choosing the $Q_{k;S'}$ packet for some old $S'$ and are in the beginning of choosing a new $Q_{k;S}$ packet (with a new $S\neq S'$) that will subsequently be mixed with packets from other sessions. By Line~\ref{line:S-update}, each $Q_{k;S}$ packet evolves to a different packet if and only if one of the $d_i$ with $i\in ([K]\backslash S)$ receives the coded transmission. Therefore, sending $Q_{k;S}$ packets for  $w_{k;S\rightarrow T}$ number of time slots will consume additional $w_{k;S\rightarrow T}\cdot p_{\cup([K]\backslash S)}$ number of $Q_{k;S}$ packets. Similarly, the previous phases~$(T_1\cup\{k\})$ such that $T_1\subseteq S$ and $(T_1\cup\{k\})\prec (T\cup\{k\})$, have consumed totally
\begin{align}
\sum_{\scriptsize \begin{array}{c}\forall T_1\subseteq S:\\
 (T_1\cup\{k\})\prec (T\cup\{k\})\end{array}}\left(w_{k; S\rightarrow T_1}\cdot p_{\cup ([K]\backslash S)}\right)\nonumber
\end{align}
number of $Q_{k;S}$ packets. The left-hand side of \eqref{eq:pf-ind-length-2} thus represents the total number of $Q_{k;S}$ packets that have been consumed after finishing the $w_{k;S\rightarrow T}$ number of time slots of Phase~$(T\cup\{k\})$ sending $Q_{k;S}$ packets.
 As will be shown short after, the right-hand side of \eqref{eq:pf-ind-length-2} represents the total number of $Q_{k;S}$ packets that have been created until the current time slot.  As a result, \eqref{eq:pf-ind-length-2} corresponds to a packet-conservation law that limits the largest number of $Q_{k;S}$ packets that can be used in Phase~$(T\cup\{k\})$.

To show that  the right-hand side of \eqref{eq:pf-ind-length-2} represents the total number of $Q_{k;S}$ packets that have been created, we notice that the $Q_{k;S}$ packets can either be created within the current Phase~$(T\cup\{k\})$ but during the previous attempts of sending $Q_{k;S_1}$ packets in Phase~$(T\cup\{k\})$ with $S_1\prec S$; or be created in the previous phases~$(T_1\cup\{k\})$ with $(T_1\cup\{k\})\prec (T\cup\{k\})$. The former case corresponds to the first term on the right-hand side  of \eqref{eq:pf-ind-length-2} and the latter case corresponds to the second term on the right-hand side of \eqref{eq:pf-ind-length-2}.

For the former case, for each time slot in which we transmit a $Q_{k;S_1}$ packet in Phase~$(T\cup\{k\})$, there is some chance that the packet will evolve into a $Q_{k;S}$ packet. More explicitly, by Line~\ref{line:S-update} of the {\sc Update}, a $Q_{k;S_1}$ packet in Phase~$(T\cup\{k\})$ evolves into a $Q_{k;S}$ packet if and only if the packet is received by all $d_i$ with $i\in (S\backslash T)$ and not by any $d_i$ with $i\in ([K]\backslash S)$. As a result, each such time slot will create $f_p((S\backslash T)\overline{([K]\backslash S)})$ number of $Q_{k;S}$ packet in average. Since we previously sent $Q_{k;S_1}$ packets for a total $w_{k;S_1\rightarrow T}$ number of time slots, the first term of the right-hand side of \eqref{eq:pf-ind-length-2} is indeed the number of $Q_{k;S}$ packets created within the current Phase~$(T\cup\{k\})$ but during the previous attempts of sending $Q_{k;S_1}$ packets.

For the latter case, for each time slot in which we transmit a $Q_{k;S_1}$ packet in Phase~$(T_1\cup\{k\})$, there is some chance that the packet will evolve into a $Q_{k;S}$ packet, provided we have $T_1\subseteq S$ and $S\nsubseteq S_1$. More explicitly, by Line~\ref{line:S-update} of the {\sc Update}, a $Q_{k;S_1}$ packet in Phase~$(T_1\cup\{k\})$ evolves into a $Q_{k;S}$ packet if and only if
\begin{align}\begin{cases}
T_1\subseteq S\\
S_\text{rx}\nsubseteq S_1\\
(S\backslash T_1)\subseteq S_{\text{rx}}\\
([K]\backslash S)\subseteq ([K]\backslash S_\text{rx})
 \end{cases}\hspace{-.7cm}\text{or equivalently }\begin{cases}
T_1\subseteq S\\
S\nsubseteq S_1 \\
(S\backslash T_1)\subseteq S_{\text{rx}}\\
([K]\backslash S)\subseteq ([K]\backslash S_\text{rx})
\end{cases}\hspace{-.5cm}.
\nonumber \end{align}
Therefore, for any $(S_1,T_1)$ pair satisfying $T_1\subseteq S$ and $S\nsubseteq S_1$, a $Q_{k;S_1}$ packet in Phase~$(T_1\cup\{k\})$ will have $f_p((S\backslash T_1)\overline{([K]\backslash S)})$ probability to evolve into a $Q_{k;S}$ packet. Since we previously sent $Q_{k;S_1}$ packets in Phase~$(T_1\cup\{k\})$ for a total $w_{k;S_1\rightarrow T_1}$ number of time slots, the second term of the right-hand side of \eqref{eq:pf-ind-length-2} is indeed the number of $Q_{k;S}$ packets created during the attempts of sending $Q_{k;S_1}$ packets in the previous Phase~$(T_1\cup\{k\})$.

Suppose that we are currently in Phase~$(S\cup\{k\})$ for some $k\notin S$.
To justify \eqref{eq:pf-ind-length-1}, we first note that in the sequential PE construction we only select the packets $X_{k,j}$ with $k\notin S(X_{k,j})$. By Line~\ref{line:S-update} of the {\sc Update}, each packet $X_{k,j}$ transmitted in Phase~$T$ is either received by the intended destination $d_k$, or it will evolve into a new $S(X_{k,j})$ that is a proper superset of $(T\backslash k)$. As a result, the cardinality-compatible total ordering ``$\prec$" ensures that once we are in Phase~$(S\cup\{k\})$, any subsequent Phase~$T$ with $(S\cup\{k\})\prec T$ will not create any new $Q_{k;S}$ packets. Therefore, if we can clean up all $Q_{k;S}$ packets in Phase~$(S\cup\{k\})$ for all $S\subseteq ([K]\backslash k)$, then in the end of the sequential PE scheme, there will be no $Q_{k;S}$ packets for any $S\subseteq ([K]\backslash k)$. This thus implies that all $X_{k,j}$ packets in the end must have $S(X_{k,j})\ni k$. By Lemma~\ref{lem:decodability}, decodability is thus guaranteed.
\eqref{eq:pf-ind-length-1} is the equation that guarantees that we can clean up all $Q_{k;S}$ packets in Phase~$(S\cup\{k\})$.

By similar computation as in the discussion of the right-hand side of \eqref{eq:pf-ind-length-2}, the right-hand side of \eqref{eq:pf-ind-length-1} is the total number of $Q_{k;S}$ packets generated during the attempts of sending $Q_{k;S_1}$ packets in the previous Phase~$(T_1\cup\{k\})$ with $(T_1\cup\{k\})\prec (S\cup\{k\})$. Similar to the computation in the discussion of the left-hand side of \eqref{eq:pf-ind-length-2}, there is
\begin{align}
\left(\sum_{\forall T_1: T_1\subseteq S, T_1\neq S} w_{k;S\rightarrow T_1}\right) p_{\cup ([K]\backslash S)}\label{eq:clean-up1}
\end{align}
number of $Q_{k;S}$ packets that have been used during the previous Phases $(T_1\cup\{k\})$. In the beginning of this phase, we send $Q_{k;S\rightarrow S}$ packets for $w_{k;S\rightarrow S}$ number of time slots, which can clean up additional \begin{align}w_{k;S\rightarrow S}\cdot p_{\cup ([K]\backslash S)}\label{eq:clean-up2}
\end{align}
number of $Q_{k;S}$ packets. Jointly, \eqref{eq:clean-up1}, \eqref{eq:clean-up2}, and \eqref{eq:pf-ind-length-1} ensures that we can use up all $Q_{k;S}$ packets in Phase~$(S\cup\{k\})$.

The above reasonings show that we can finish the transmission in $n$ time slots, make all $X_{k,j}$ have $S(X_{k,j})\ni k$, and obey the causality constraints. Therefore, the corresponding sequential PE scheme is indeed a feasible solution. The proof of Proposition~\ref{prop:ach2} is thus complete.

\subsection{Attaining The Capacity Of Two Classes of PECs \label{subsec:2spec-class} }

In this section, we prove the capacity results for symmetric 1-to-$K$ broadcast PECs in Proposition~\ref{prop:cap-sym} and for spatially independent broadcast PECs with one-sided fairness constraints in Proposition~\ref{prop:cap-osf}.

\begin{thmproof}{Proof of Proposition~\ref{prop:cap-sym}:}  Since the broadcast channel is symmetric, for any $S_1,S_2\in 2^{[K]}$, we have
\begin{align}
p_{\cup S_1}=p_{\cup S_2} \text{ if } |S_1|=|S_2|.\nonumber
\end{align}
Without loss of generality, also assume that $R_1\geq R_2\geq \cdots \geq R_K$. By the above simplification, the outer bound in Proposition~\ref{prop:outer} collapses to the following single linear inequality:
\begin{align}
\sum_{k=1}^K\frac{R_k}{p_{\cup[k]}}\leq 1.\label{eq:sym-outer}
\end{align}

We use the results in Proposition~\ref{prop:ach2} to prove that \eqref{eq:sym-outer} is indeed the capacity region. To that end, we first fix an arbitrary cardinality-compatible total ordering. Then for any $S\subseteq ([K]\backslash k)$, we choose
\begin{align}
&w_{k;S\rightarrow S}=R_k\cdot \nonumber\\
&\hspace{1.5cm}
\sum_{i=K-|S|}^{K}\left(\sum_{\scriptsize\begin{array}{c}\forall S_1: |S_1|=i\\ ([K]\backslash S)\subseteq S_1\subseteq [K]\end{array}}\frac{\left(-1\right)^{i-(K-|S|)}}{p_{\cup S_1}}\right),\nonumber
\end{align}
and $w_{k;S\rightarrow T}=0$ for all $T$ being a proper subset of $S$.  The symmetry of the broadcast PEC, the assumption that $R_1\geq R_2\geq \cdots \geq R_K$, and \eqref{eq:phase-xT} jointly imply that
\begin{align}
x_T&=w_{k^*;(T\backslash k^*)\rightarrow (T\backslash k^*)}\text{ where }k^*\stackrel{\Delta}{=}\min\{i:i\in T\}\label{eq:sym-ch-prf}
\end{align}
for all $T\neq \emptyset$. For completeness, we set $x_{\emptyset}=0$.

By simple probability arguments as first described\footnote{Some detailed discussion can also be found in the proof of Lemma~\ref{lem:osf-main} in Appendix~\ref{app:osf-lemma}.} in Section~\ref{subsec:high-level-cap3}, we can show that the above choices of $w_{k;S\rightarrow T}$ and $x_T$ are all non-negative and jointly satisfy the inequalities \eqref{eq:coding-len} to \eqref{eq:ind-length-2}.

The remaining task is to show that inequality
\eqref{eq:total-x} is satisfied for any $(R_1,\cdots, R_K)$ in the interior of the capacity outer bound \eqref{eq:sym-outer}. To that end, we simply need to verify the following equalities by some simple arithmetic computation.
\begin{align}
\forall k\in [K],~&\sum_{\forall T\in 2^{[K]}: k\in T, [k-1]\cap T=\emptyset} x_T\nonumber\\
&= \sum_{\forall T\in 2^{[K]}: k\in T, [k-1]\cap T=\emptyset}w_{k;(T\backslash k)\rightarrow (T\backslash k)}\nonumber\\
&= \frac{R_k}{p_{\cup{[k]}}}.\label{eq:prf-sum}
\end{align}
Summing \eqref{eq:prf-sum} over different $k$ values, we thus show that
any $(R_1,\cdots, R_K)$ in the interior of the capacity outer bound \eqref{eq:sym-outer} indeed satisfies
\eqref{eq:total-x}. The proof of Proposition~\ref{prop:cap-sym} is complete.
\end{thmproof}

\begin{thmproof}{Proof of Proposition~\ref{prop:cap-osf}:}
Consider an arbitrary spatially independent broadcast PEC with $0<p_1\leq p_2\leq \cdots\leq p_K$. The capacity outer bound in Proposition~\ref{prop:outer} implies that any achievable rate vector $(R_1,\cdots, R_K)$ must satisfy
\begin{align}
\sum_{k=1}^K\frac{R_k}{1-\prod_{l=1}^k(1-p_l)}\leq 1.\label{eq:osf-cap-2}
\end{align}

We use the results in Proposition~\ref{prop:ach2} to prove that any one-sidedly fair rate vector $(R_1,\cdots, R_K)\in \Lambda_\text{osf}$ that is in the interior of
\eqref{eq:osf-cap-2} is indeed achievable. To that end, we first fix an arbitrary cardinality-compatible total ordering. Then for any $S\subseteq ([K]\backslash k)$, we choose
\begin{align}
&w_{k;S\rightarrow S}=R_k\cdot \nonumber\\
&\hspace{1.5cm}
\sum_{i=K-|S|}^{K}\left(\sum_{\scriptsize\begin{array}{c}\forall S_1: |S_1|=i\\ ([K]\backslash S)\subseteq S_1\subseteq [K]\end{array}}\frac{\left(-1\right)^{i-(K-|S|)}}{p_{\cup S_1}}\right),\nonumber
\end{align}
and $w_{k;S\rightarrow T}=0$ for all $T$ being a proper subset of $S$.
By Lemma~\ref{lem:osf-main} in Appendix~\ref{app:osf-lemma} and by \eqref{eq:phase-xT}, we have
\begin{align}
x_T&=\max_{\forall k\in T}\left(w_{k;(T\backslash k)\rightarrow (T\backslash k)}\right)\nonumber\\
&=w_{k^*;(T\backslash k^*)\rightarrow (T\backslash k^*)}\text{ where }k^*\stackrel{\Delta}{=}\min\{i:i\in T\}\nonumber
\end{align}
for all $T\neq \emptyset$. For completeness, we set $x_{\emptyset}=0$.

The remaining proof of Proposition~\ref{prop:cap-osf} can be completed by following the same steps after \eqref{eq:sym-ch-prf} of the proof of Proposition~\ref{prop:cap-sym}.

\end{thmproof}

\section{Further Discussion of The Main Results\label{sec:discussion}}

We provide some further discussion of the main results in this section. In particular, we focus on the accounting overhead of the PE schemes, the minimum finite field size of the PE schemes, the sum rate performance of asymptotically large $M$ values, and numerical evaluations of the outer and inner bounds for general 1-to-$K$ broadcast PECs. 

\subsection{Accounting Overhead\label{subsec:accounting}}
Thus far we assume that the individual destination $d_k$ knows the global coding vector $\vv_\text{tx}$ that is used to generate the coded symbols (see Line~\ref{line:transmit-v} of the main PE scheme). Since the coding vector $\vv_\text{tx}$ is generated randomly, this assumption generally does not hold, and the coding vector $\vv_\text{tx}$ also needs to be conveyed to the destinations. Otherwise, destinations $d_k$ cannot decode the original information symbols $X_{k,j}$ for the received coded symbols $Z_k(t)$, $t\in[n]$. The cost of sending the coding vector $\vv_\text{tx}$ is termed the {\em coding overhead} or the {\em accounting overhead.}

We use the generation-based scheme in \cite{ChouWuJain03} to average out and absorb the accounting overheard. Namely, we first choose sufficiently large $n$ and finite field size $q$ such that the PE scheme can achieve $(1-\epsilon)$-portion of the capacity with arbitrarily close-to-one probability when assuming there is no accounting overhead. Once the $n$ and $q$ values are fixed, we choose an even larger finite field $\GF(q^{M+\sum_{k=1}^KnR_k})$ for some large integer $M$. The large finite field is then treated as a vector of dimension $M+\sum_{k=1}^KnR_k$.
Although each information symbol (vector) is chosen from $X_{k,j}\in \GF(q^{M+\sum_{k=1}^KnR_k})$, we limit the range of the $X_{k,j}$ vector value such that the first $\sum_{k=1}^KnR_k$ coordinates are always zero, i.e., no information is carried in the first $\sum_{k=1}^KnR_k$ coordinates. We can thus view the entire systems as sending $M$ coordinates in each vector. During the transmission of the PE scheme, we focus on coding over
each coordinate, respectively, rather than jointly coding over the entire vector.
The
same coding vector $\vv_{\text{tx}}$ is used repeatedly to encode the last $M$ coordinates. And we use the first $\sum_{k=1}^KnR_k$ coordinates to store the coding vector $\vv_{\text{tx}}$.

Since only the last $M$ coordinates are used to carry information, overall the transmission rate is reduced by a factor $\frac{M}{M+\sum_{k=1}^KnR_k}$. By choosing a sufficiently large $M$, we have averaged out and absorbed the accounting overhead.

\subsection{Minimum Finite Field Size}
The PE scheme in Section~\ref{sec:achievability} is presented in the context of random linear network coding, which uses a sufficiently large finite field size $\GF(q)$ and proves that the desired properties hold with close-to-one probability. The main advantage of this random-coding-based description is that the entire algorithm can be carried out in a very efficient and distributed fashion. For example, with a sufficiently large $q$, the source $s$ only needs to bookkeep the $S(X_{k,j})$ and $\vv(X_{k,j})$ values of all the information packets $X_{k,j}$. All the coding and update computations are of linear complexity. On the other hand, the drawback of a randomized algorithm is that even with very large $\GF(q)$, there is still a small probability that after the termination of the PE algorithm, some destination $d_k$ has not accumulated enough linearly independent packets to decode the desired symbols $X_{k,1}$ to $X_{k,nR_k}$. For the following, we discuss how to covert the randomized PE scheme into a deterministic algorithm by quantifying the corresponding minimum size of the finite field.

\begin{proposition}\label{prop:finite-field} Consider the 1-to-$K$ broadcast PEC problem with COF. For any fixed finite field $\GF(q_0)$ satisfying $q_0>K$, all the achievability results in Propositions~\ref{prop:cap3}, \ref{prop:ach2}, \ref{prop:cap-sym}, and~\ref{prop:cap-osf} can be attained by a deterministic PE algorithm on $\GF(q_0)$ that deterministically computes the mixing coefficients
$\{c_k:\forall k\in T\}$ in Line~\ref{line:vv-construct} of the PE scheme.

\end{proposition}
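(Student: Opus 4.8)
The plan is to leave the PE scheme of Section~\ref{subsec:PE} untouched except for Line~\ref{line:vv-construct}, where the uniformly random coefficients $\{c_k:\forall k\in T\}$ are replaced by coefficients chosen deterministically over the fixed field $\GF(q_0)$. Two preliminary reductions: (a) the ``skeleton'' of the scheme --- which $T$ is used in each phase, the phase lengths, the output of {\sc Packet Selection}, and the evolution of the overhearing sets $S(X_{k,j})$ --- depends only on the channel realization $\{S_\text{rx}\}$, not on the $c_k$; and (b) Lemma~\ref{lem:non-interfering} is purely structural (whenever $d_i$ receives $\vv_\text{tx}$ it lies in $\Omega_{Z,i}$, which together with the $S$-set bookkeeping gives the claim), hence holds for every choice of coefficients. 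Thus the only place ``sufficiently large $q$'' is invoked is Lemma~\ref{lem:decodability}, and it suffices to show that its conclusion can be forced to hold with certainty, on every channel realization satisfying the law-of-large-numbers phase-length estimates of Sections~\ref{subsec:detailed3} and~\ref{subsec:1-to-M-ach}, using coefficients in $\GF(q_0)$ with $q_0>K$; the residual decoding error is then caused solely by atypical channel realizations and still vanishes as $n\to\infty$.

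The core is a sequential, ``linear-information-flow''-style derandomization maintaining, as a deterministic invariant, the conclusion of Lemma~\ref{lem:decodability}: for every $t$ and $k$, $\linsp(\Omega_{Z,k}(t),\Omega_{R,k}(t))=\linsp(\Omega_{Z,k}(t),\Omega_{M,k})$. It holds at $t=0$ since $\Omega_{R,k}(0)=\Omega_{M,k}$, and during slots that merely reuse $\vv_\text{tx}$ it is preserved trivially ($\Omega_{Z,k}$ can only grow, $\Omega_{R,k}$ is unchanged). When a fresh $\vv_\text{tx}=\sum_{k\in T}c_k\,\vv_{t-1}(X_{k,j_k})$ is built and then overheard (over the next few slots) by various receivers, the {\sc Update} step touches only the sessions $k\in T$, and only at receivers that actually hear the packet; a short computation using Lemma~\ref{lem:non-interfering} and the fact that {\sc Packet Selection} guarantees $S(X_{l,j_l})\ni k$ for all $l,k\in T$ with $l\neq k$ shows that the invariant of a session $m\notin T$ is automatically preserved (adding the single vector $\vv_\text{tx}$ to $\Omega_{Z,m}$ keeps the equality), while the invariant of a target session $k\in T$ is preserved \emph{provided} $\vv_\text{tx}\notin\mathcal{B}_k$, where $\mathcal{B}_k\stackrel{\Delta}{=}\linsp\!\left(\Omega_{Z,k}(t-1),\ \Omega_{R,k}(t-1)\setminus\{\vv_{t-1}(X_{k,j_k})\}\right)$. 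Since $\vv_{t-1}(X_{k,j_k})\notin\mathcal{B}_k$ (the target packet is a genuine member of the remaining space, which {\sc Packet Selection} ensures), each $\mathcal{B}_k$ is a proper subspace.

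The counting is then clean. Writing each $\vv_{t-1}(X_{l,j_l})$ with $l\in T$, $l\neq k$, as $b_l+\beta_l\,\vv_{t-1}(X_{k,j_k})$ with $b_l\in\mathcal{B}_k$ --- possible because, by the invariant and Lemma~\ref{lem:non-interfering}, $\vv_{t-1}(X_{l,j_l})\in\linsp(\Omega_{Z,k},\Omega_{M,k})=\linsp(\mathcal{B}_k,\vv_{t-1}(X_{k,j_k}))$ --- one sees that $\vv_\text{tx}\in\mathcal{B}_k$ iff $c_k+\sum_{l\in T,\,l\neq k}\beta_l c_l=0$, a single linear equation on $(c_l)_{l\in T}$ whose coefficient of $c_k$ is $1$, hence a hyperplane of $\GF(q_0)^{|T|}$ with exactly $q_0^{|T|-1}$ points. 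There are $|T|\le K$ such hyperplanes (one per target receiver), so their union has at most $|T|\,q_0^{|T|-1}<q_0^{|T|}$ points whenever $q_0>K\ge|T|$; thus a valid $(c_k)_{k\in T}$ exists and is found by brute-force search over $\GF(q_0)^{|T|}$ (and the chosen $c_k$ are automatically nonzero, since $c_k=0$ for some target $k$ forces $\vv_\text{tx}\in\mathcal{B}_k$). Running this at every ${\mathsf{f}}_{\text{change}}=1$ step gives a deterministic PE scheme on $\GF(q_0)$ that keeps the invariant throughout; since the phase structure of Sections~\ref{subsec:detailed3}, \ref{subsec:1-to-M-ach}, and~\ref{subsec:2spec-class} forces $S_n(X_{k,j})\ni k$ for all packets once the rate vector is in the claimed region, we get $\Omega_{R,k}(n)=\{0\}$, and Lemma~\ref{lem:simple-dec} yields decodability at every $d_k$. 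This proves Proposition~\ref{prop:finite-field}.

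The main obstacle --- the part compressed into ``a short computation'' above --- is the precise verification that ``$\vv_\text{tx}\notin\mathcal{B}_k$ for every target $k$'' is exactly the right condition: one must check it suffices to propagate the Lemma~\ref{lem:decodability} invariant through both cases of {\sc Update} (whether or not $d_k$ itself hears $\vv_\text{tx}$) and through the simultaneous carpooling of several sessions in one physical slot, while ruling out the low-dimensional degenerate configurations. This is the PE-scheme counterpart of the correctness argument for deterministic multicast network codes, and it is where essentially all the work lies; the exact threshold $q_0>K$ (rather than $q_0\ge K$) is forced by the union-of-$|T|$-hyperplanes count with $|T|$ possibly equal to $K$.
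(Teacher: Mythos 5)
Your proposal follows essentially the same route as the paper's proof in Appendix~\ref{app:finite-field}: leave the PE scheme intact, observe that Lemma~\ref{lem:non-interfering} is non-probabilistic, and derandomize only Line~\ref{line:vv-construct} by propagating the Lemma~\ref{lem:decodability} invariant inductively, choosing $(c_l)_{l\in T}$ to avoid at most $|T|\leq K$ hyperplanes of $\GF(q_0)^{|T|}$, which is possible exactly when $q_0>K$. The one inaccuracy is your claim that {\sc Packet Selection} guarantees $\vv_{t-1}(X_{k,j_k})\notin\mathcal{B}_k$ for every target $k$: it only guarantees $(S(X_{k,j_k})\cup\{k\})\supseteq T$, and when $\vv_{t-1}(X_{k,j_k})\in\mathcal{B}_k$ your ``bad set'' for that $k$ is the whole coefficient space rather than a hyperplane (and your ``$c_k$ is automatically nonzero'' remark rests on the same premise). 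The paper closes this by imposing the avoidance constraint only on the subset $B_t$ of targets whose vector lies outside $\linsp(\Omega_{Z,k}(t-1),\Omega_R')$ and verifying separately (its Cases 3.1.1 and 3.2.1) that the invariant is preserved automatically for the remaining targets regardless of the coefficients; with that patch your argument is complete.
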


The proof of Proposition~\ref{prop:finite-field} is relegated to Appendix~\ref{app:finite-field}.


{\em Remark 1:} In practice, the most commonly used finite field is $\GF(2^8)$. Proposition~\ref{prop:finite-field} guarantees that $\GF(2^8)$ is sufficient for coding over $K\leq 255$ sessions together.

{\em Remark 2:} On the other hand, the construction of good mixing coefficients $\{c_k:\forall k\in T\}$ in Proposition~\ref{prop:finite-field} is computationally intensive. The randomized PE scheme has substantial complexity advantage over the deterministic PE scheme.


\subsection{The Asymptotic Sum-Rate Capacity of Large $M$ Values}
We first define the sum-rate capacity as follows:
\begin{definition}
The sum-rate capacity $R_\text{sum}^*$ is defined as
\begin{align}
R_\text{sum}^*=\sup\left\{\sum_{k=1}^KR_k:(R_1,\cdots, R_K)\text{ is achievable}\right\}.\nonumber
\end{align}
\end{definition}

Proposition~\ref{prop:cap-osf} quickly implies the following corollary.

\begin{corollary}\label{cor:sum-rate}Consider any spatially independent 1-to-$K$ broadcast PECs with marginal success probabilities $0< p_1\leq p_2\leq \cdots\leq p_K<1$. With COF, the sum-rate capacity satisfies
\begin{align}
\frac{\sum_{k=1}^K\frac{1}{1-p_k}}{\sum_{k=1}^K\frac{1}{(1-p_k)(1-\prod_{l=1}^k(1-p_l))}}\leq R_\text{sum}^*\leq 1.\nonumber
\end{align}
If we further enforce perfect fairness, i.e., $R_1=R_2=\cdots=R_K$, then the corresponding sum-rate capacity $R_\text{sum,perf.fair}^*$ becomes
\begin{align}
R_\text{sum,perf.fair}^*=\frac{K}{\sum_{k=1}^K\frac{1}{(1-\prod_{l=1}^k(1-p_l))}}.\nonumber
\end{align}
\end{corollary}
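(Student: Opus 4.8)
The plan is to derive Corollary~\ref{cor:sum-rate} directly from Proposition~\ref{prop:cap-osf} by optimizing the linear program implicit in \eqref{eq:osf-cap} over the one-sidedly fair region $\Lambda_\text{osf}$. First I would recall that Proposition~\ref{prop:cap-osf} tells us that within $\Lambda_\text{osf}$, a rate vector is achievable if and only if $\sum_{k=1}^K R_k / (1-\prod_{l=1}^k(1-p_l)) \leq 1$. The key observation for the one-sided-fairness-constrained bound is that the constraint $R_i(1-p_i) \geq R_j(1-p_j)$ for all $i<j$ (with $p_1 \leq \cdots \leq p_K$) forces the ratios $R_k(1-p_k)$ to be non-increasing; the extreme point of this cone that simultaneously maximizes $\sum_k R_k$ subject to the capacity hyperplane is achieved when all these quantities are equal, i.e., $R_k(1-p_k) = c$ for a common constant $c$, so that $R_k = c/(1-p_k)$.

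Substituting $R_k = c/(1-p_k)$ into the capacity constraint \eqref{eq:osf-cap} with equality gives $c \sum_{k=1}^K \frac{1}{(1-p_k)(1-\prod_{l=1}^k(1-p_l))} = 1$, which pins down $c$, and then $\sum_{k=1}^K R_k = c\sum_{k=1}^K \frac{1}{1-p_k}$ yields exactly the claimed lower bound on $R_\text{sum}^*$. The upper bound $R_\text{sum}^* \leq 1$ is immediate: each individual term $R_k/(1-\prod_{l=1}^k(1-p_l))$ in \eqref{eq:osf-cap} is at least $R_k$ since $1-\prod_{l=1}^k(1-p_l) \leq 1$, so $\sum_k R_k \leq \sum_k R_k/(1-\prod_{l=1}^k(1-p_l)) \leq 1$; alternatively this follows from the $\pi$-permutation outer bound in Proposition~\ref{prop:outer} since every $p_{\cup S^\pi_j} \leq 1$. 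I should note that this lower bound argument only shows that a particular one-sidedly fair point is achievable, hence it bounds $R_\text{sum}^*$ from below; it does not claim the supremum over \emph{all} achievable rate vectors is attained inside $\Lambda_\text{osf}$.

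For the perfectly fair case, I would set $R_1 = \cdots = R_K = R$ and plug into \eqref{eq:osf-cap}, which is legitimate because the perfectly fair vector lies in $\Lambda_\text{osf}$ (as noted in the paper right after the definition of one-sided fairness). This gives $R\sum_{k=1}^K \frac{1}{1-\prod_{l=1}^k(1-p_l)} \leq 1$, so the largest feasible $R$ is $1/\sum_{k=1}^K \frac{1}{1-\prod_{l=1}^k(1-p_l)}$, and multiplying by $K$ gives $R_\text{sum,perf.fair}^* = K / \sum_{k=1}^K \frac{1}{1-\prod_{l=1}^k(1-p_l)}$. Here the equality (rather than just an inequality) is justified because Proposition~\ref{prop:cap-osf} is an exact if-and-only-if characterization on $\Lambda_\text{osf}$, so the perfectly fair achievable region is \emph{exactly} cut out by this single inequality.

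The only mildly delicate step is the extremal argument in the first part: showing that maximizing $\sum_k R_k$ over the polyhedron $\{R \in \Lambda_\text{osf} : \sum_k R_k/(1-\prod_{l=1}^k(1-p_l)) = 1\}$ is achieved at the point where all $R_k(1-p_k)$ coincide. I expect this to be the main (though still routine) obstacle: one argues that if some inequality $R_i(1-p_i) > R_j(1-p_j)$ were strict, a small perturbation increasing $R_j$ and decreasing $R_i$ along the capacity hyperplane while respecting the remaining fairness constraints increases $\sum_k R_k$ whenever the coefficient $1/(1-\prod_{l=1}^i(1-p_l))$ of $R_i$ exceeds that of $R_j$ — which it does since the products $\prod_{l=1}^k(1-p_l)$ are strictly decreasing in $k$. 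Since the corollary only asserts a lower bound, it actually suffices to exhibit the single candidate point $R_k = c/(1-p_k)$, verify it lies in $\Lambda_\text{osf}$ (here $R_i(1-p_i)=c=R_j(1-p_j)$ for all $i,j$, so the constraints hold with equality), and check it satisfies \eqref{eq:osf-cap}; the optimality discussion can be kept brief or omitted entirely.
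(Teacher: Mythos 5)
Your proof is correct and follows essentially the same route as the paper: exhibit the one-sidedly fair point $R_k=c/(1-p_k)$, on which the constraint \eqref{eq:osf-cap} of Proposition~\ref{prop:cap-osf} is made tight, and observe that $(R,\dots,R)\in\Lambda_{\text{osf}}$ for the perfectly fair case; the extremal/perturbation discussion is superfluous, as you yourself note. The only point to tighten is the upper bound: $R_\text{sum}^*$ is a supremum over \emph{all} achievable vectors, so you should justify $\sum_k R_k\leq 1$ via the permutation outer bound of Proposition~\ref{prop:outer} (or, as the paper does, via the trivial fact that at most one packet is delivered per time slot) rather than via \eqref{eq:osf-cap}, which as stated in Proposition~\ref{prop:cap-osf} is a characterization only on $\Lambda_{\text{osf}}$.
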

\begin{proof}
Since the sum-rate capacity $nR_\text{sum}^*$ is no larger than the total available time slots $n$, we have the upper bound $R_\text{sum}^*\leq 1$. Since the  rate vector $\left(\frac{R}{1-p_1},\frac{R}{1-p_2},\cdots, \frac{R}{1-p_K}\right)$ is one-sidedly fair, Proposition~\ref{prop:cap-osf} leads to the lower bound of $R_\text{sum}^*$.
Since a perfectly fair rate vector $(R,R,\cdots, R)$ is also one-sidedly fair, Proposition~\ref{prop:cap-osf} gives the exact value of $R_\text{sum,perf.fair}^*$.
\end{proof}

Corollary~\ref{cor:sum-rate} implies the following. Consider any fixed $p>0$. Consider a symmetric, spatially independent 1-to-$K$ broadcast PEC with marginal success probability $p_1=p_2=\cdots=p_K=p$. When $K$ is sufficiently large, both the sum-rate capacities $R_\text{sum}^*$ and $R_\text{sum,perf.fair}^*$ approach one. That is, for sufficiently large $K$, network coding completely removes all the channel uncertainty by taking advantage of the spatial diversity among different destinations $d_i$. Therefore, each $(s,d_k)$ session can sustain rate  $\frac{1-\epsilon}{K}$ for some $\epsilon>0$ where $\epsilon\rightarrow 0$ when $K\rightarrow\infty$.
 Note that when compared to the MIMO capacity gain, the setting in this paper is more conservative in a sense that it assumes that the channel gains change independently from time slot to time slot (instead of block fading)  while no coordination is allowed among destinations.

This relationship was first observed and proven in \cite{LarssonJohansson06} by identifying a lower bound of $R_\text{sum,perf.fair}^*$
for symmetric, spatially independent PECs. Compared to the results in \cite{LarssonJohansson06},  Corollary~\ref{cor:sum-rate} characterizes the exact value of $R_\text{sum,perf.fair}^*$ and provides a tighter lower bound on  $R_\text{sum}^*$
for non-symmetric spatially independent PECs. The $R_\text{sum,perf.fair}^*$ will later be evaluated numerically in Section~\ref{subsec:sim} for non-symmetric spatially independent PECs.  

\subsection{Numerical Evaluation\label{subsec:sim}}

\begin{figure}
\centering
\includegraphics[width=7cm]{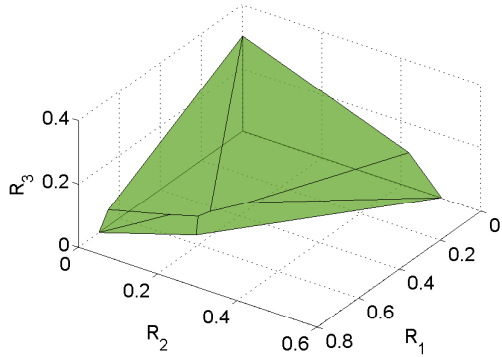}
\caption{The 3-D capacity region of a 1-to-3 spatially independent broadcast PEC with marginal success probabilities $p_1=0.7$, $p_2=0.5$, and $p_3=0.3$.\label{fig:3-by-3}}
\end{figure}

Fig.~\ref{fig:3-by-3} illustrates the 3-dimensional capacity region of $(R_1,R_2,R_3)$ of a spatially independent, 1-to-3 broadcast PEC with COF. The corresponding marginal probabilities are $p_1=0.7$, $p_2=0.5$, and $p_3=0.3$. The six facets in Fig.~\ref{fig:3-by-3} correspond to the six different permutations used in Proposition~\ref{prop:outer}.

For general 1-to-$K$ PECs with $K\geq 4$, we can use the outer and inner bounds in Propositions~\ref{prop:outer} and~\ref{prop:ach2} to bracket the actual capacity region. Since there is no tightness guarantee for $K\geq 4$ except for the two special classes of channels in Section~\ref{subsec:special}, we use computer to  numerically evaluate the tightness of the outer and inner bound pairs. To that end, for any fixed $K$ value,
we consider spatially independent 1-to-$K$ broadcast PEC with the marginal success probabilities $p_k$ chosen randomly from $(0,1)$. To capture the $K$-dimensional capacity {region}, we first choose a search direction $\vec{v}=(v_1,\cdots,v_K)$ uniformly randomly from a $K$-dimensional unit ball. With the chosen values of $p_{k}$ and $\vec{v}$, we use a linear programming (LP) solver to find the largest $t_{\text{outer}}$ such that $(R_1,\cdots, R_K)=(v_1\cdot t_{\text{outer}},\cdots,v_K\cdot t_{\text{outer}})$ satisfies the capacity outer bound in
Proposition~\ref{prop:outer}.

To evaluate the capacity inner bound, we need to choose a cardinality-compatible total ordering. For any set $S\subseteq [K]$, the corresponding incidence vector ${\bf 1}_S$ is a $K$-dimensional binary vector with the $i$-th coordinate being one if and only if $i\in S$. We can also view ${\bf 1}_S$ as a binary number, where the first coordinate is the most significant bit and the $K$-th coordinate is the least significant bit. For example, for $K=4$, $S=\{1,2,4\}$ has ${\bf 1}_S=(1,1,0,1)=13$. For two sets $S_1\neq S_2$, we say $S_1\prec S_2$ if and only if either (i) $|S_1|=|S_2|$ and ${\bf 1}_{S_1}<{\bf 1}_{S_2}$, or (ii) $|S_1|<|S_2|$. Based on this cardinality-compatible total ordering, we again use the LP solver to find the
largest $t_{\text{inner}}$ such that
 $(R_1,\cdots, R_K)=(v_1\cdot t_{\text{inner}},\cdots,v_K\cdot t_{\text{inner}})$ satisfies the capacity inner bound in
Proposition~\ref{prop:ach2}. The deficiency is then defined as ${\mathsf{defi}}\stackrel{\Delta}{=}\frac{t_{\text{outer}}-t_{\text{inner}}}{t_{\text{outer}}}$. We then repeat the above experiment for $10^4$ times for $K=4$, 5, and 6, respectively.

Note that although there is no tightness guarantee  for $K\geq 4$ except in the one-sidedly fair rate region, all our numerical experiments (totally $3\times 10^4$) have ${\mathsf{defi}}\leq 0.1\%$. Actually, in our experiments with $K\leq 6$, we have not found any instance of the input parameters $(p_1,\cdots, p_K)$ and $\vec{v}$, for which ${\mathsf{defi}}$ is greater than the numerical precision of the LP solver. This shows that Propositions~\ref{prop:outer} and~\ref{prop:ach2} indeed describe the capacity region from the practical perspective.

\begin{figure}
\centering
\includegraphics[width=7cm]{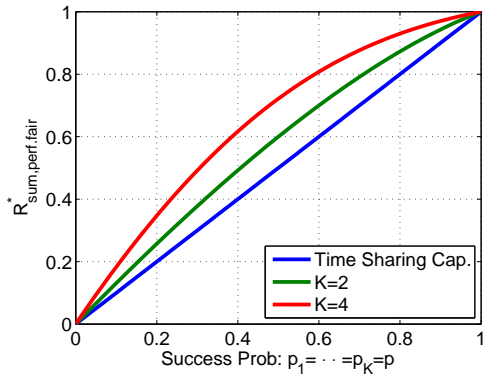}
\caption{The sum-rate capacity $R^*_{\text{sum,perf.fair}}$ in a perfectly fair system versus the marginal success probability $p$ of a symmetric, spatially independent 1-to-$K$ broadcast PEC, $K=2$ and $4$.  \label{fig:sym124}}
\end{figure}

\begin{figure}
\centering
\includegraphics[width=7cm]{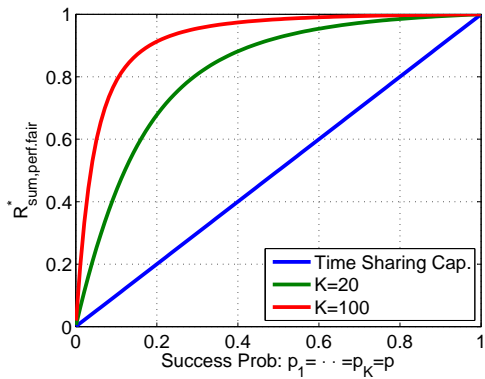}
\caption{The sum-rate capacity $R^*_{\text{sum,perf.fair}}$ in a perfectly fair system versus the marginal success probability $p$ of a symmetric, spatially independent 1-to-$K$ broadcast PEC, $K=20$ and $100$.\label{fig:sym120100}}
\end{figure}

To illustrate the broadcast network coding gain, we compare the sum-rate capacity versus the sum rate achievable by time sharing. Figs.~\ref{fig:sym124} and~\ref{fig:sym120100} consider symmetric, spatially independent PECs with marginal success probabilities $p_1=\cdots=p_K=p$. We plot the sum rate capacity $R_{\text{sum,perf.fair}}^*$ versus $p$ for a perfectly fair system. The baseline is the largest sum rate that can be achieved by time sharing for a perfectly fair system. As seen in Figs.~\ref{fig:sym124} and~\ref{fig:sym120100}, the network coding gains are substantial even when we only have $K=4$ destinations.   We also note that $R_{\text{sum,perf.fair}}^*$ approaches one for all $p\in(0,1]$ as predicted by Corollary~\ref{cor:sum-rate}.

 We are also interested in the sum rate capacity under asymmetric channel profiles (also known as heterogeneous channel profiles). Consider asymmetric, spatially independent PECs. For each $p$ value, we let the channel gains $p_1$ to $p_K$ be equally spaced between $(p,1)$, i.e., $p_k=p+(k-1)\frac{1-p}{K-1}$. We then plot the sum rate capacities for different $p$ values. Fig.~\ref{fig:asym6} describes the case for $K=6$. The sum rate capacities are depicted by solid curves, which is obtained by solving the linear inequalities in the outer and inner bounds of Propositions~\ref{prop:outer} and~\ref{prop:ach2}. For all the parameter values used to plot Fig.~\ref{fig:asym6}, the outer and inner bounds meet and we thus have the exact sum rate capacities for the case of $K=6$. The best achievable rate of time sharing are depicted by dashed curves in Fig.~\ref{fig:asym6}. We consider both a perfectly fair system $(R,R,\cdots, R)$ or a proportionally fair system $(p_1R, p_2R,\cdots, p_KR)$ for which the rate of the $(s,d_k)$ session is proportional to the marginal success probability $p_k$ (the optimal rate when all other sessions are silent). To highlight the impact of channel heterogeneity, we also redraw the curves of perfectly symmetric PECs with $p_1=\cdots=p_K=p$.

 \begin{figure}
 \centering
\includegraphics[width=7cm]{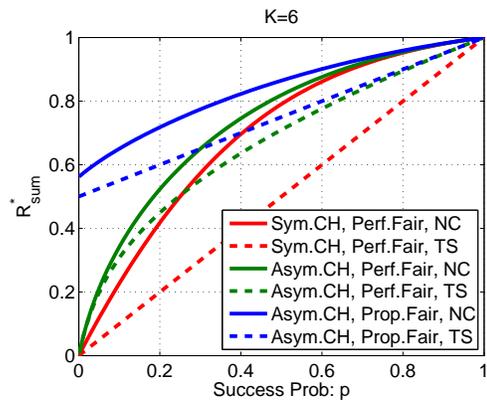}
\caption{The sum-rate capacities for a 6-destination heterogenous channel profiles with the success probabilities $p_1$ to $p_6$ evenly spaced between $(p,1)$. \label{fig:asym6}}
\end{figure}

As seen in Fig.~\ref{fig:asym6}, for perfectly fair systems, the sum-rate capacity gain does not increase much when moving from symmetric PECs $p_1=\cdots=p_K=p$ to
the heterogeneous channel profile with $p_1$ to $p_K$ evenly spaced between $(p,1)$.
The reason is due to that the worst user $d_1$ (with the smallest $p_1$) dominates the system performance in a perfectly fair system. When we allow proportional fairness, network coding again provides substantial improvement for all $p$ values. However, the gain is not as large as the case of symmetric channels. For example, when $p_1$ to $p_K$ are evenly spaced between $(0,1)$. The sum rate capacity of a proportionally fair system is 0.56 ($p=0$). However, if all $p_1$ to $p_K$ are concentrated on their mean $0.5$, then the sum rate capacity of the symmetric channel  ($p=0.5$) is 0.79. The results show that for practical implementation, it is better to group together all the sessions of similar marginal success rates and perform intersession network coding within the same group.

\begin{figure}
 \centering
\includegraphics[width=7cm]{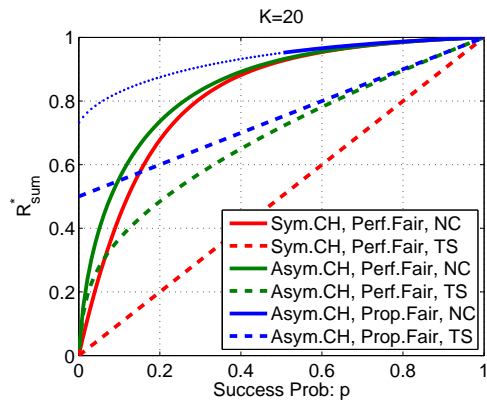}
\caption{The sum-rate capacities for a 20-destination heterogenous channel profiles with the success probabilities $p_1$ to $p_{20}$ evenly spaced between $(p,1)$. \label{fig:asym20}}
\end{figure}

We also repeat the same experiment of Fig.~\ref{fig:asym6} but for the case $K=20$ in Fig.~\ref{fig:asym20}. In this case of a moderate-sized $K=20$, the sum-rate capacity of a perfectly fair system is characterized by Proposition~\ref{prop:cap-osf}. On the other hand, the sum-rate capacity of a proportionally fair system are characterized by Proposition~\ref{prop:cap-osf} only when all $p_1$ to $p_K$ are in the range of $[0.5, 1]$ (see the discussion of one-sidedly fair systems in Section~\ref{subsec:2spec-class}). Since the evaluations of both the outer and inner bounds have prohibitively high complexity for the case $K=20$, we use the capacity formula of Proposition~\ref{prop:cap-osf} as a substitute\footnote{When all $p_1$ to $p_K$ are in $[0.5,1]$, the formula in Proposition~\ref{prop:cap-osf} describes the capacity. When some $p_1$ to $p_K$ is outside $[0.5,1]$, the formula in Proposition~\ref{prop:cap-osf} describes an outer bound of the capacity.} of the sum-rate capacity for  $p<0.5$, which is illustrated in Fig.~\ref{fig:asym20} by the fine dotted extension of  the solid curve for the region of $p\in[0.5,1]$. Again, the more sessions ($K=20$) to be encoded together, the higher the network coding gain over the best time sharing rate.

\section{Conclusion\label{sec:conclusion}}

The recent development of practical network coding schemes \cite{ChouWuJain03} has brought attentions back to the study of packet erasure channels (PECs), which is a generalization of the classic binary erasure channels. Since per-packet feedback (such as ARQ) is widely used in today's network protocols, it is thus of critical importance to study PECs with channel output feedback (COF). This work have focused on deriving the capacity of general 1-to-$K$ broadcast PECs with COF, which was previously known only for the case $K=2$.

In this work, we have proposed a new class of intersession network coding schemes, termed the packet evolution (PE) schemes, for the broadcast PECs. Based on the PE schemes, we have
derived the capacity region for general 1-to-3 broadcast PECs, and a pair of capacity outer and inner bounds for general 1-to-$K$ broadcast PECs, both of which can be easily evaluated by any linear programming solver for the cases $K\leq 6$. It has also been proven that the outer and inner bounds meet for two classes of 1-to-$K$ broadcast PECs: the symmetric broadcast PECs, and the spatially independent broadcast PECs with the one-sided fairness rate constraints. Extensive numerical experiments have shown that the outer and inner bounds meet for almost all broadcast PECs encountered in practical scenarios. Therefore, we can effectively use the outer/inner bounds as the substitute for the capacity region in practical applications.
 The capacity  results in this paper also show that for large $K$ values, the noise of the broadcast PECs can be effectively removed by exploiting the inherent spatial diversity of the system, even without any coordination between the destinations.

For practical implementation, the COF usually arrives in batches. That is, instead of instant per-packet COF, we usually have periodic, per-batch COF. The PE scheme can be modified to incorporate periodic COF as well. The corresponding discussion and some precursory empirical implementation of the revised PE scheme can be found in \cite{KoutsonikolasWangHu10a}.

\section*{Acknowledgment}

This work was supported in parts by NSF grants CCF-0845968 and CNS-0905331. The author would also like to thank for the insightful suggestions of Profs.\ Anant Sahai and David Tse.

\appendices

\section{A Proof Of Lemma~\ref{lem:non-interfering}\label{app:lem-non-interfering}}

\begin{proof} We prove Lemma~\ref{lem:non-interfering} by induction. First consider the end of the 0-th time slot (before any transmission).   Since $S_0(X_{k,j})=\emptyset$ for all $X_{k,j}$ and the only $d_i$ satisfying $i\in ( S_0(X_{k,j})\cup\{k\})$ is $d_k$, we only need to check whether $\vv_0(X_{k,j})$ is in the linear space $\linsp(\Omega_{Z,k}(0),\Omega_{M,k})$.
Note that in the end of time 0,  $\vv_0(X_{k,j})$ is the elementary vector $\delta_{k,j}\in \Omega_{M,k}$.
Lemma~\ref{lem:non-interfering} thus holds in the end of time 0.

Suppose Lemma~\ref{lem:non-interfering} is satisfied in the end of time $(t-1)$. Consider the end of time $t$.  We use $T$ to denote the subset chosen in the beginning of time $t$ and use $\{X_{k,j_k}:\forall k\in T\}$ to denote the corresponding target packets. Consider the following cases:

{\bf Case 1:} Consider those $X_{k,j_k}$ such that $S_{t}(X_{k,j_k})=S_{t-1}(X_{k,j_k})$. We first note that if Line~\ref{line:S-update} of the {\sc Update} is executed, then $S_{t}(X_{k,j_k})\neq S_{t-1}(X_{k,j_k})$. Therefore, for those $X_{k,j_k}$ such that $S_{t}(X_{k,j_k})=S_{t-1}(X_{k,j_k})$, we must have that Lines~\ref{line:S-update} and~\ref{line:v-update} of the {\sc Update} are not executed, which implies that
$\vv_{t}(X_{k,j_k})=\vv_{t-1}(X_{k,j_k})$.

By definition, $\Omega_{Z,i}(t-1)\subseteq \Omega_{Z,i}(t)$ for all $i\in[K]$ and $t\in[n]$. By the induction assumption, we thus have that  for all $d_i$ with $i\in (S_{t}(X_{k,j_k})\cup\{k\})=(S_{t-1}(X_{k,j_k})\cup\{k\})$,
\begin{align}
&\vv_{t}(X_{k,j_k})=\vv_{t-1}(X_{k,j_k})\nonumber\\
&\in \linsp(\Omega_{Z,i}(t-1),\Omega_{M,i})\subseteq \linsp(\Omega_{Z,i}(t),\Omega_{M,i}).\nonumber
\end{align}
Vector $\vv_{t}(X_{k,j_k})$ is thus non-interfering from the perspectives of all $d_i$, $i\in (S_{t}(X_{k,j_k})\cup\{k\})$.

{\bf Case 2:} Consider those $X_{k',j'}$ that are not a target packet. Since those packets do not participate in time $t$ and their $S(X_{k',j'})$ and $\vv(X_{k',j'})$ do not change from time $(t-1)$ to time $t$. The same arguments of Case~1 hold verbatim for this case.

{\bf Case 3:} Consider those target packets $X_{k,j_k}$ such that $S_{t}(X_{k,j_k})\neq S_{t-1}(X_{k,j_k})$. For those target packets $X_{k,j_k}$ with $S_{t}(X_{k,j_k})\neq S_{t-1}(X_{k,j_k})$, we must have $S_{t}(X_{k,j_k})=(T\cap S_{t-1}(X_{k,j_k}))\cup S_\text{rx}$ and  $\vv_t(X_k,j_k)=\vv_\text{tx}$ by Lines~\ref{line:S-update} and~\ref{line:v-update} of the {\sc Update}, respectively.
 Consider any $d_i$ such that $i\in (S_{t}(X_{k,j_k})\cup\{k\})$. We have two subcases: {\bf Case~3.1:} $i\in S_{\text{rx}}$. 
 Since  all such $d_i$ must explicitly receive the new $\vv_t(X_{k,j_k})=\vv_\text{tx}$ in the end of time $t$, we must have \begin{align}
\vv_t(X_{k,j_k})&\in\linsp(\vv_\text{tx})=\linsp(Z_i(t))\nonumber\\
&\subseteq \Omega_{Z,i}(t)\subseteq \linsp(\Omega_{Z,i}(t),\Omega_{M,i}).\nonumber
\end{align}
Such $\vv_t(X_{k,j_k})$ is thus non-interfering from $d_i$'s perspective.
{\bf Case~3.2:} $i\in (S_{t}(X_{k,j_k})\cup\{k\})\backslash S_\text{rx}$. We first notice that
\begin{align}
S_{t}(X_{k,j_k})\cup\{k\}&=(T\cap S_{t-1}(X_{k,j_k}))\cup S_\text{rx}\cup\{k\}\nonumber\\
&=((T\cup \{k\})\cap (S_{t-1}(X_{k,j_k})\cup\{k\}))\cup S_\text{rx}\nonumber\\
&=(T\cap (S_{t-1}(X_{k,j_k})\cup\{k\}))\cup S_\text{rx}\label{eq:TT1}\\
&=T\cup S_\text{rx},\label{eq:TT2}
\end{align}
where \eqref{eq:TT1} follows from that $k\in T$ since $X_{k,j_k}$ is a target packet. \eqref{eq:TT2} follows from that $(S_{t-1}(X_{k,j_k})\cup \{k\})\supseteq T$ by Line~\ref{line:targeting-set} of the main structure of the PE scheme. From \eqref{eq:TT2}, the $i$ value in this case must satisfy
\begin{align}i\in (S_{t}(X_{k,j_k})\cup\{k\})\backslash S_\text{rx}=(T\cup S_{\text{rx}})\backslash S_\text{rx}=T\backslash S_\text{rx}.\label{eq:irx}
\end{align}
Also by Line~\ref{line:targeting-set} of the main structure of the PE scheme, for all $i$ satisfy \eqref{eq:irx} we must have $i\in (T\backslash S_\text{rx})\subseteq T\subseteq (S_{t-1}(X_{l,j_l})\cup \{l\})$ for all $l\in T$. By induction, the $\vv_{t-1}(X_{l,j_l})$ vectors used to generate the new $\vv_\text{tx}$ (totally $|T|$ of them) must all be non-interfering from $d_i$'s perspective. Therefore
\begin{align}
\forall l\in T,~ \vv_{t-1}(X_{l,j_l})&\in \linsp(\Omega_{Z,i}(t-1),\Omega_{M,i})\\
&=\linsp(\Omega_{Z,i}(t),\Omega_{M,i}),\nonumber
\end{align}
where the last equality follows from that $d_i$, $i\in T\backslash S_\text{rx}$, does not receive any packet in time $t$. Since $\vv_\text{tx}$ is a linear combination of $\vv_{t-1}(X_{l,j_l})$ for all $l\in T$, we thus have
\begin{align}
\vv_t(X_{k,j_k})=\vv_\text{tx}\in \linsp(\Omega_{Z,i}(t),\Omega_{M,i}).\nonumber
\end{align}
Based on the above reasoning, $\vv_t(X_{k,j_k})$ is non-interfering for all $d_i$ with $i\in(S_{t}(X_{k,j_k})\cup\{k\})\backslash S_\text{rx}$.

The proof is completed by induction on the time index $t$.

\end{proof}

\section{A Proof Of Lemma~\ref{lem:decodability}\label{app:lem-decodability}}

\begin{thmproof}{Proof of Lemma~\ref{lem:decodability}:} We prove this lemma by induction on time $t$. In the end of time $t=0$, since
\begin{align}
\Omega_{R,k}(0)&=\linsp(\vv_0(X_{k,j}):\forall j\in [nR_k], k\notin S_0(X_{k,j})=\emptyset)\nonumber\\
&=\linsp(\delta_{k,j}:\forall k\in[K],j\in [nR_k])=\Omega_{M,k},\nonumber
\end{align}
We thus have
\begin{align}\prop\left(\linsp(\Omega_{Z,k}(0),\Omega_{R,k}(0))
=\linsp(\Omega_{Z,k}(0),\Omega_{M,k})\right)=1.\nonumber
\end{align}
Lemma~\ref{lem:decodability} is satisfied.

Consider the end of time $t>0$. By induction, the following event is of close-to-one probability:
\begin{align}
&\linsp(\Omega_{Z,k}(t-1),\Omega_{R,k}(t-1))\nonumber\\
&=\linsp(\Omega_{Z,k}(t-1),\Omega_{M,k}).\label{eq:cond-prev}
\end{align}
The following proofs are conditioned on the event that \eqref{eq:cond-prev} is satisfied.

 We use $T$ to denote the subset chosen in the beginning of time $t$ and use $\{X_{k,j_k}\}$ to denote the corresponding target packets. Consider the following cases:

{\bf Case 1:} Consider those $k\in T$ such that the corresponding target packet $X_{k,j_k}$ either has $S_{t}(X_{k,j_k})=S_{t-1}(X_{k,j_k})$ or has $k\in S_{t-1}(X_{k,j_k})$. For the former subcase $S_{t}(X_{k,j_k})=S_{t-1}(X_{k,j_k})$, by Line~\ref{line:S-update} of the {\sc Update}, we must have $\vv_{t}(X_{k,j_k})=\vv_{t-1}(X_{k,j_k})$. Since $X_{k,j_k}$ is the only packet among $\{X_{k,j}:\forall j\in[nR_k]\}$ that participate in time $t$, for which the corresponding $\vv(X_{k,j})$ coding vector may change, we must have
$\vv_t(X_{k,j})=\vv_{t-1}(X_{k,j})$ for all $j\in[nR_k]$. We then have
 \begin{align}
\Omega_{R,k}(t)&=\linsp(\vv_{t}(X_{k,j}):\forall j\in[nR_k], k\notin S_t(X_{k,j})\nonumber\\
&=\linsp(\vv_{t-1}(X_{k,j}):\forall j\in[nR_k], k\notin S_{t-1}(X_{k,j}))\nonumber\\
&=\Omega_{R,k}(t-1).\label{eq:dec-pf-rm}
\end{align}
We note that for the latter subcase $k\in S_{t-1}(X_{k,j_k})$, we must have $T\subseteq (S_{t-1}(X_{k,j_k})\cup\{k\})=S_{t-1}(X_{k,j_k})$ by Line~\ref{line:targeting-set} of the main PE scheme. Therefore
Line~\ref{line:S-update} of the {\sc Update} implies that $k\in S_{t}(X_{k,j_k})$ as well. Since the remaining space $\Omega_{R,k}$ only counts the vectors $\vv(X_{k,j})$ with $k\notin S(X_{k,j})$, \eqref{eq:dec-pf-rm} holds for the latter subcase as well.
For both subcases, let $\w_k(t)$ denote the corresponding coding vector of $Z_k(t)$, which may or may not be an erasure. We then have
\begin{align}&\linsp(\Omega_{Z,k}(t), \Omega_{R,k}(t))\nonumber\\
&=\linsp(\w_k(t),\Omega_{Z,k}(t-1), \Omega_{R,k}(t))\nonumber\\
&=\linsp(\w_k(t),\Omega_{Z,k}(t-1), \Omega_{R,k}(t-1))\nonumber\\
&=\linsp(\w_k(t),\Omega_{Z,k}(t-1), \Omega_{M,k})\label{eq:lem-dec-ind1}\\
&=\linsp(\Omega_{Z,k}(t), \Omega_{M,k}),\nonumber
\end{align}
where \eqref{eq:lem-dec-ind1} is obtained by the induction condition \eqref{eq:cond-prev}. Lemma~\ref{lem:decodability} thus holds for the $k$ values satisfying Case~1.

{\bf Case 2:} Consider those $d_l$ with $l\notin T$. Since no $X_{l,j}$ packets  participate in time $t$ and their $S(X_{l,j})$ and $\vv(X_{l,j})$ do not change in time $t$. The same arguments of Case~1 thus hold verbatim for this case.

{\bf Case 3:} Consider those $k\in T$ such that the corresponding target packet $X_{k,j_k}$ has $S_{t}(X_{k,j_k})\neq S_{t-1}(X_{k,j_k})$ and $k\notin S_{t-1}(X_{k,j_k})$.  Define $\Omega'_R$ as
\begin{align}
\Omega_R'\stackrel{\Delta}{=}\linsp\{\vv_{t-1}(X_{k,j}):\forall j\in [nR_k]\backslash j_k, k\notin S_{t-1}(X_{k,j})\}.\label{eq:omega-R-prime}
\end{align}
Note that the conditions of Case~3 and \eqref{eq:omega-R-prime} jointly imply that $\Omega_{R,k}(t-1)=\linsp(\vv_{t-1}(X_{k,j_k}),\Omega_R')$. We have two subcases
{\bf Case 3.1:} $k\notin S_{t}(X_{k,j_k})$ and {\bf Case 3.2:} $k\in S_{t}(X_{k,j_k})$.


{\bf Case 3.1:} $k\notin S_{t}(X_{k,j_k})$. By Line~\ref{line:S-update}
of the {\sc Update}, we have $k\notin S_\text{rx}$, i.e., $d_k$ receives an erasure in time $t$. Therefore $\Omega_{Z,k}(t)=\Omega_{Z,k}(t-1)$.
We will first show that $\linsp\left(\Omega_{Z,k}(t),\Omega_{R,k}(t)\right)\subseteq \linsp(\Omega_{Z,k}(t),\Omega_{M,k})$.

Since the target $d_k$ satisfies $k\in T\subseteq (S_{t-1}(X_{l,j_l})\cup\{l\})$, for all $l\in T$, by Lemma~\ref{lem:non-interfering}, all those $\vv_{t-1}(X_{l,j_l})$ are non-interfering from $d_k$'s perspective. That is,
\begin{align}
\forall l\in T,~ \vv_{t-1}(X_{l,j_l})&\in \linsp(\Omega_{Z,k}(t-1),\Omega_{M,k})\nonumber\\
&=\linsp(\Omega_{Z,k}(t),\Omega_{M,k}).\label{eq:vL}
\end{align}
As a result, we have $\vv_{t}(X_{k,j_k})=\vv_{\text{tx}}\in \linsp(\Omega_{Z,k}(t),\Omega_{M,k})$ since $\vv_\text{tx}$ is a linear combination of all $\vv_{t-1}(X_{l,j_l})$ for all $l\in T$.
Therefore, we have
\begin{align}
&\linsp\left(\Omega_{Z,k}(t),\Omega_{R,k}(t)\right)\nonumber\\
&=\linsp\left(\Omega_{Z,k}(t),\vv_{t}(X_{k,j_k}),\Omega_{R}'\right)\nonumber\\
&\subseteq \linsp(\Omega_{Z,t}(t), \Omega_{Z,t}(t),\Omega_{M,k},\Omega_R')\nonumber\\ &=\linsp(\Omega_{Z,k}(t),\Omega_{M,k},\Omega_R'). \label{eq:v-vs-t-4}
\end{align}
Since we condition on the event that \eqref{eq:cond-prev} holds, we have
\begin{align}
\linsp(\Omega_{Z,k}(t),\Omega_R')
&\subseteq \linsp\left(\Omega_{Z,k}(t),\vv_{t-1}(X_{k,j_k}),\Omega_{R}'\right)\nonumber\\
&=\linsp\left(\Omega_{Z,k}(t),\Omega_{R,k}(t-1)\right)\nonumber\\
&=\linsp\left(\Omega_{Z,k}(t-1),\Omega_{R,k}(t-1)\right)\nonumber\\
&=\linsp\left(\Omega_{Z,k}(t-1),\Omega_{M,k}\right)\nonumber\\
&=\linsp(\Omega_{Z,k}(t),\Omega_{M,k}).\label{eq:v-vs-t-3}
\end{align}
Joint \eqref{eq:v-vs-t-4} and \eqref{eq:v-vs-t-3} show that $\linsp\left(\Omega_{Z,k}(t),\Omega_{R,k}(t)\right)\subseteq \linsp(\Omega_{Z,k}(t),\Omega_{M,k})$.

To prove Lemma~\ref{lem:decodability} for Case 3.1, it remains to show that the event $\linsp\left(\Omega_{Z,k}(t),\Omega_{R,k}(t)\right)\supseteq \linsp(\Omega_{Z,k}(t),\Omega_{M,k})$ is of close-to-one probability, conditioning on \eqref{eq:cond-prev} being true. We consider two subcases: depending on whether the following equation is satisfied.
\begin{align}
\vv_{t-1}(X_{k,j_k})&\in\linsp\left(\Omega_{Z,k}(t-1), \Omega_R'\right)\nonumber\\
&=\linsp\left(\Omega_{Z,k}(t), \Omega_R'\right).\label{eq:v-vs-t-1}
\end{align}
Case 3.1.1: If \eqref{eq:v-vs-t-1} is satisfied, then we have
\begin{align}
&\linsp\left(\Omega_{Z,k}(t),\Omega_{R,k}(t)\right)\nonumber\\
&=\linsp\left(\Omega_{Z,k}(t),\vv_{t}(X_{k,j_k}),\Omega_{R}'\right)\nonumber\\
&\supseteq\linsp\left(\Omega_{Z,k}(t),\Omega_{R}'\right)\nonumber\\
&=\linsp\left(\Omega_{Z,k}(t),\vv_{t-1}(X_{k,j_k}),\Omega_{R}'\right)\label{eq:v-vs-t-2}\\
&=\linsp\left(\Omega_{Z,k}(t),\Omega_{R,k}(t-1)\right)\nonumber\\
&=\linsp\left(\Omega_{Z,k}(t-1),\Omega_{R,k}(t-1)\right)\nonumber\\
&=\linsp(\Omega_{Z,k}(t-1),\Omega_{M,k})\label{eq:v-vs-t-5}\\
&=\linsp(\Omega_{Z,k}(t),\Omega_{M,k}),\label{eq:v-vs-t-7}
\end{align}
where \eqref{eq:v-vs-t-2} follows from \eqref{eq:v-vs-t-1}, and \eqref{eq:v-vs-t-5} follows from the induction condition \eqref{eq:cond-prev}.

Case 3.1.2:
\eqref{eq:v-vs-t-1} is not satisfied.
By the equality between \eqref{eq:v-vs-t-2} and \eqref{eq:v-vs-t-7}, we have
\begin{align}
\linsp\left(\Omega_{Z,k}(t),\vv_{t-1}(X_{k,j_k}),\Omega_{R}'\right)
=\linsp(\Omega_{Z,k}(t),\Omega_{M,k}).\label{eq:v-vs-t-8}
\end{align}
Recall that $\vv_t(X_{k,j_k})=\vv_\text{tx}$ is a linear combination of $\vv_{t-1}(X_{l,j_l})$ satisfying in \eqref{eq:vL}. By \eqref{eq:v-vs-t-8} and the assumption that \eqref{eq:v-vs-t-1} is not satisfied,
 we thus have that each $\vv_{t-1}(X_{l,j_l})$ can be written as a unique linear combination of $\alpha \vv_{t-1}(X_{k,j_k})+\w$ where $\alpha$ is a $\GF(q)$ coefficient and $\w$ is a vector satisfying $\w\in \linsp\left(\Omega_{Z,k}(t),\Omega_{R}'\right)$. By the same reasoning, we can  rewrite $\vv_t(X_{k,j_k})$ as
\begin{align}
\vv_t(X_{k,j_k})&=c_{k} \vv_{t-1}(X_{k,j_k})+\sum_{\forall l\in T\backslash k}c_l\vv_{t-1}(X_{l,j_l})\nonumber\\
&=c_{k} \vv_{t-1}(X_{k,j_k})+(\alpha \vv_{t-1}(X_{k,j_k})+\w)\nonumber\\
&=(c_{k}+\alpha) \vv_{t-1}(X_{k,j_k})+\w.\label{eq:alpha-1}
\end{align}
where $\alpha$ is a $\GF(q)$ coefficient, $\w$ is a vector satisfying $\w\in \linsp\left(\Omega_{Z,k}(t),\Omega_{R}'\right)$, and the values of $\alpha$ and $\w$ depend on the random coefficients $c_l$ for all $l\neq k$. As a result, we have
\begin{align}
&\linsp\left(\Omega_{Z,k}(t),\Omega_{R,k}(t)\right)\nonumber\\
&=\linsp\left(\Omega_{Z,k}(t),\vv_{t}(X_{k,j_k}),\Omega_{R}'\right)\nonumber\\
&=\linsp\left(\Omega_{Z,k}(t),\left((c_{k}+\alpha) \vv_{t-1}(X_{k,j_k})+\w\right),\Omega_{R}'\right).\nonumber
\end{align}
Since \eqref{eq:v-vs-t-1} is not satisfied and $\w\in \linsp\left(\Omega_{Z,k}(t),\Omega_{R}'\right)$, we have
\begin{align}
&\linsp\left(\Omega_{Z,k}(t),\left((c_{k}+\alpha) \vv_{t-1}(X_{k,j_k})+\w\right),\Omega_{R}'\right)\nonumber\\
&=\linsp\left(\Omega_{Z,k}(t),\vv_{t-1}(X_{k,j_k}),\Omega_{R}'\right)\nonumber\\
&=\linsp(\Omega_{Z,k}(t),\Omega_{M,k})\label{eq:v-vs-t-6}
\end{align}
if and only if $(c_k+\alpha)\neq 0$. Since $c_k$ is uniformly distributed in $\GF(q)$ and the random variables $c_k$ and $\alpha$ are independent, the event that \eqref{eq:v-vs-t-6} is true has the conditional probability $\frac{q-1}{q}$, conditioning on \eqref{eq:cond-prev} being true. For sufficiently large $q$ values, the conditional probability approaches one. 

{\bf Case 3.2:} $k\in S_{t}(X_{k,j_k})$. Recall  that for Case~3, we consider those $k$ such that $k\notin S_{t-1}(X_{k,j_k})$.  By Line~\ref{line:S-update}
of the {\sc Update}, we have $k\in S_\text{rx}$, i.e., $d_k$ receives the transmitted packet perfectly in time $t$. Therefore, in the end of time $t$, $\Omega_{R,k}(t)=\Omega_R'$, which was first defined in \eqref{eq:omega-R-prime}.

 We consider two subcases: depending on whether the following equation is satisfied.
\begin{align}
\vv_{t-1}(X_{k,j_k})&\in\linsp\left(\Omega_{Z,k}(t-1), \Omega_R'\right).\label{eq:v-vs-t-1-22}
\end{align}

Case 3.2.1: If \eqref{eq:v-vs-t-1-22} is satisfied, then we have
\begin{align}
&\linsp\left(\Omega_{Z,k}(t),\Omega_{R,k}(t)\right)=\linsp\left(\Omega_{Z,k}(t),\Omega_{R}'\right)\nonumber\\
&=\linsp\left(\vv_t(X_{k,j_k}),\Omega_{Z,k}(t-1),\Omega_{R}'\right)\nonumber\\
&=\linsp\left(\vv_t(X_{k,j_k}),\Omega_{Z,k}(t-1),\vv_{t-1}(X_{k,j_k}),\Omega_{R}'\right)\label{eq:v-vs-t-2-22}\\
&=\linsp\left(\vv_t(X_{k,j_k}),\Omega_{Z,k}(t-1),\Omega_{R,k}(t-1)\right)\nonumber\\
&=\linsp(\vv_t(X_{k,j_k}),\Omega_{Z,k}(t-1),\Omega_{M,k})
\label{eq:v-vs-t-5-22}\\
&=\linsp(\Omega_{Z,k}(t),\Omega_{M,k}),\nonumber
\end{align}
where \eqref{eq:v-vs-t-2-22} follows from \eqref{eq:v-vs-t-1-22}, and \eqref{eq:v-vs-t-5-22} follows from the induction assumption \eqref{eq:cond-prev}.

Case 3.2.2:
\eqref{eq:v-vs-t-1-22} is not satisfied. By the induction assumption \eqref{eq:cond-prev}, we have
\begin{align}
&\linsp\left(\Omega_{Z,k}(t-1),\vv_{t-1}(X_{k,j_k}),\Omega_{R}'\right)\nonumber\\
&=\linsp(\Omega_{Z,k}(t-1),\Omega_{M,k}).\label{eq:v-vs-t-9}
\end{align}

Since the target $d_k$ satisfies $k\in T\subseteq (S_{t-1}(X_{l,j_l})\cup\{l\})$, for all $l\in T$, by Lemma~\ref{lem:non-interfering}, all those $\vv_{t-1}(X_{l,j_l})$ are non-interfering from $d_k$'s perspective. That is,
\begin{align}
\forall l\in T,~ \vv_{t-1}(X_{l,j_l})&\in \linsp(\Omega_{Z,k}(t-1),\Omega_{M,k}).\label{eq:v-vs-t-10}
\end{align}
By \eqref{eq:v-vs-t-9}, \eqref{eq:v-vs-t-10}, and the assumption that \eqref{eq:v-vs-t-1-22} is not satisfied,
 each $\vv_{t-1}(X_{l,j_l})$ can thus be written as a unique linear combination of $\alpha \vv_{t-1}(X_{k,j_k})+\w$ where $\alpha$ is a $\GF(q)$ coefficient and $\w$ is a vector satisfying $\w\in \linsp\left(\Omega_{Z,k}(t-1),\Omega_{R}'\right)$. Since $\vv_t(X_{k,j_k})=\vv_\text{tx}$ is a linear combination of $\vv_{t-1}(X_{l,j_l})$,
by the same reasoning, we can  rewrite $\vv_t(X_{k,j_k})$ as
\begin{align}
\vv_t(X_{k,j_k})&=c_{k} \vv_{t-1}(X_{k,j_k})+\sum_{\forall l\in T\backslash k}c_l\vv_{t-1}(X_{l,j_l})\nonumber\\
&=c_{k} \vv_{t-1}(X_{k,j_k})+(\alpha \vv_{t-1}(X_{k,j_k})+\w)\nonumber\\
&=(c_{k}+\alpha) \vv_{t-1}(X_{k,j_k})+\w.\label{eq:alpha-2}
\end{align}
where $\alpha$ is a $\GF(q)$ coefficient, $\w$ is a vector satisfying $\w\in \linsp\left(\Omega_{Z,k}(t-1),\Omega_{R}'\right)$, and the values of $\alpha$ and $\w$ depend on the random coefficients $c_l$ for all $l\neq k$. As a result, we have
\begin{align}
&\linsp\left(\Omega_{Z,k}(t),\Omega_{R,k}(t)\right)=\linsp\left(\Omega_{Z,k}(t),\Omega_{R}'\right)\nonumber\\
&=\linsp\left(\vv_t(X_{k,j_k}),\Omega_{Z,k}(t-1),\Omega_{R}'\right)\label{eq:v-vs-t-7-22}\\
&=\linsp\left(\vv_t(X_{k,j_k}),\Omega_{Z,k}(t-1),\vv_{t-1}(X_{k,j_k}),\Omega_{R}'\right)\label{eq:v-vs-t-6-22}\\
&=\linsp\left(\vv_t(X_{k,j_k}),\Omega_{Z,k}(t-1),\Omega_{R,k}(t-1)\right)\nonumber\\
&=\linsp\left(\vv_t(X_{k,j_k}),\Omega_{Z,k}(t-1),\Omega_{M,k}\right)\nonumber\\
&=\linsp\left(\Omega_{Z,k}(t),\Omega_{M,k}\right),\nonumber
\end{align}
where the equality from \eqref{eq:v-vs-t-7-22} to \eqref{eq:v-vs-t-6-22} is true
if and only if the $(c_k+\alpha)$ in \eqref{eq:alpha-2} is not zero, since \eqref{eq:v-vs-t-1-22} is not satisfied and $\w\in \linsp\left(\Omega_{Z,k}(t-1),\Omega_{R}'\right)$.

Since $c_k$ is uniformly distributed in $\GF(q)$ and the random variables $c_k$ and $\alpha$ are independent, the event that $\linsp\left(\Omega_{Z,k}(t),\Omega_{R,k}(t)\right)=\linsp\left(\Omega_{Z,k}(t),\Omega_{R,M}\right)$ has the conditional probability $\frac{q-1}{q}$, conditioning on \eqref{eq:cond-prev} being true. For sufficiently large $q$ values, the conditional probability approaches one. 

{\bf Combining all cases:}
Let $\mathcal{A}_t$ denote the event that $\linsp(\Omega_{Z,k}(t), \Omega_{R,k}(t))=\linsp(\Omega_{Z,k}(t), \Omega_{M,k})$ and let $T$ denote the target set chosen in time $t$. Since for Cases 3.1.2 and 3.2.2 the conditional probability of $\mathcal{A}_t$ given $\mathcal{A}_{t-1}$ is lower bounded by $\frac{q-1}{q}$ and for all other cases the conditional probability is one, the discussion of Cases~1 to 3.2 thus proves the following inequalities:
\begin{align}
&\prop\left(\mathcal{A}_t|\mathcal{A}_{t-1},T\right)\geq \left(1-\frac{1}{q}\right)^{|T|}\nonumber.
\end{align}
Since for any $T\subseteq[K]$ we must have $|T|\leq K$, we then have
\begin{align}
\prop\left(\mathcal{A}_t|\mathcal{A}_{t-1}\right)\geq \left(1-\frac{1}{q}\right)^{K}.\nonumber
\end{align}

By concatenating the conditional probabilities, we thus have
\begin{align}
&\prop\left(\linsp(\Omega_{Z,k}(t), \Omega_{R,k}(t))=\linsp(\Omega_{Z,k}(t), \Omega_{M,k})\right)\nonumber\\
&\geq\left(1-\frac{1}{q}\right)^{tK}\geq \left(1-\frac{1}{q}\right)^{nK}.\label{eq:dec-final}
\end{align}
As a result, for any fixed $K$ and $n$ values, we can choose a sufficiently large finite field $\GF(q)$ such that \eqref{eq:dec-final} approaches one. Lemma~\ref{lem:decodability} thus holds for all $k\in[K]$ and $t\in[n]$.

\end{thmproof}

\section{A Proof Of Proposition~\ref{prop:finite-field}\label{app:finite-field}}

\begin{thmproof}{Proof of Proposition~\ref{prop:finite-field}:} To prove this proposition, we will show that for any $q_0>K$, the source $s$ can always {\em compute} the mixing coefficients $\{c_k:\forall k\in T\}$ in Line~\ref{line:vv-construct} of the PE scheme, such that the key properties in Lemmas~\ref{lem:non-interfering} and~\ref{lem:decodability} hold with probability one. Then for any PE scheme, we can use the computed mixing coefficients $\{c_k:\forall k\in T\}$ instead of the randomly chosen ones, while attaining the same desired throughput performance.



We first notice that the proof of Lemma~\ref{lem:non-interfering} does not involve any probabilistic arguments. Therefore, Lemma~\ref{lem:non-interfering} holds for any choices of the mixing coefficients with probability one.

We use induction to prove that when using carefully computed  mixing coefficients $\{c_k:\forall k\in T\}$, Lemma~\ref{lem:decodability} holds with probability one. We use the same notation of $S_t(X_{k,j})$, $\vv_t(X_{k,j})$, $\Omega_{R,k}(t)$, $\Omega_{Z,k}(t)$, $\Omega_{M,k}$, $\Omega_R'$ as defined in Lemma~\ref{lem:decodability} and its proof.\footnote{We note that $\Omega_R'$ in \eqref{eq:omega-R-prime} actually depends on the value of $k$ and the time index  $(t-1)$. }

 In the end of time $t=0$, since
\begin{align}
\Omega_{R,k}(0)&=\linsp(\vv_0(X_{k,j}):\forall j\in [nR_k], k\notin S_0(X_{k,j})=\emptyset)\nonumber\\
&=\linsp(\delta_{k,j}:\forall k\in[K],j\in [nR_k])=\Omega_{M,k},\nonumber
\end{align}
we have
\begin{align}\prop\left(\linsp(\Omega_{Z,k}(0),\Omega_{R,k}(0))
=\linsp(\Omega_{Z,k}(0),\Omega_{M,k})\right)=1.\nonumber
\end{align}
Lemma~\ref{lem:decodability} holds with probability one for any finite field $\GF(q_0)$.

Assume that in the end of time $(t-1)$, Lemma~\ref{lem:decodability} holds with probability one.
Suppose $T$ is chosen in the beginning of time $t$. Define $B_t$ as the set of $k$ values satisfying:
\begin{align}
B_t\stackrel{\Delta}{=}\{\forall k\in T:& k\notin S_{t-1}(X_{k,j_k}) \text{ and}\nonumber\\
&\vv_{t-1}(X_{k,j_k})\notin \linsp(\Omega_{Z,k}(t-1),\Omega_{R'})  \}.\nonumber
\end{align}
Note that this $B_t$ can be computed in the beginning of time $t$. Once $B_t$ is computed, we would like to choose the mixing coefficients $\{c_l:\forall l\in T\}$ such that the following equation is satisfied.
\begin{align}
\forall k\in B_t,\quad \vv_\text{tx}&=\sum_{\forall l\in T}c_l\vv_{t-1}(X_{l,j_l})\nonumber\\
&=c_{k} \vv_{t-1}(X_{k,j_k})+\sum_{\forall l\in T\backslash k}c_l\vv_{t-1}(X_{l,j_l})\nonumber\\
&\notin  \linsp(\Omega_{Z,k}(t-1),\Omega_{R'}).\label{eq:ck-all-k}
\end{align}
Note that for any $k\in B_t$, we have $\vv_{t-1}(X_{k,j_k})\notin \linsp(\Omega_{Z,k}(t-1),\Omega_{R'})$. Therefore if we choose the coefficients  $\{c_l:\forall l\in T\}$ uniformly randomly, the probability that
\begin{align}
&c_{k} \vv_{t-1}(X_{k,j_k})+\sum_{\forall l\in T\backslash k}c_l\vv_{t-1}(X_{l,j_l})\nonumber\\
&\in  \linsp(\Omega_{Z,k}(t-1),\Omega_{R'})\label{eq:ck-give-k}
\end{align}
is at most $\frac{1}{q_0}$. The probability that there is at least one $k\in T$ satisfying \eqref{eq:ck-give-k} has probability at most $\frac{|B_t|}{q_0}\leq \frac{K}{q_0}$. For any $q_0>K$, we thus have a non-zero probability $\geq (1-\frac{K}{q_0})$ such that the uniformly random choice of $\{c_l:\forall l\in T\}$ will satisfy \eqref{eq:ck-all-k}. Therefore, there must exist at least one $\{c_l:\forall l\in T\}$ satisfying \eqref{eq:ck-all-k}.
 In the beginning of time $t$, we arbitrarily choose any such mixing coefficients  $\{c_l:\forall l\in T\}$ that satisfy \eqref{eq:ck-all-k}.

The remaining task is to show that the above construction of $\{c_k:\forall k\in T\}$ guarantees that Lemma~\ref{lem:decodability} holds in the end of time $t$ with probability one, regardless the channel realization of time $t$.

For those $k\notin T$, such $k$ falls into Case~2 of the proof of Lemma~\ref{lem:decodability}. Since Case~2 holds with probability one, Lemma~\ref{lem:decodability} is true for those $k\notin T$ with probability one.
For those $k\in T$ and $k\in S_{t-1}(X_{k,j_k})$, then such $k$ falls into Case 1 of the proof of Lemma~\ref{lem:decodability}. Since Case~1 holds with probability one, Lemma~\ref{lem:decodability} is true for those $k\in T$ and $k\in S_{t-1}(X_{k,j_k})$ with probability one.

For those $k$ satisfying: $k\in T$, $
k\notin S_{t-1}(X_{k,j_k})$, and $
\vv_{t-1}(X_{k,j_k})\in \linsp(\Omega_{Z,k}(t-1),\Omega_{R'})$, such $k$ must fall into Case~1, Case~3.1.1, or Case~3.2.1, depending on whether $S_{t}(X_{k,j_k})=S_{t-1}(X_{k,j_k})$ and whether $k\in S_{t}(X_{k,j_k})$, respectively. Since Cases~1, 3.1.1, and 3.2.1 hold with probability one, Lemma~\ref{lem:decodability} is true for those $k$ with probability one.

The remaining $k$'s to consider are those $k\in B_t$.  If the random channel realization leads to $S_{t}(X_{k,j_k})=S_{t-1}(X_{k,j_k})$, then by Case~1 of the proof of Lemma~\ref{lem:decodability}, we must have Lemma~\ref{lem:decodability} holds with conditional probability one. If the random channel realization leads to $S_{t}(X_{k,j_k})\neq S_{t-1}(X_{k,j_k})$ and $k\notin S_t(X_{k,j_k})$, then we are in Case~3.1.2. Since for those $k\in B_t$ we have chosen the mixing coefficients $\{c_l:\forall l\in T\}$ satisfying \eqref{eq:ck-all-k}, following the same arguments as in \eqref{eq:alpha-1} we must be able to rewrite $\vv_t(X_{k,j_k})$ as follows.
\begin{align}
\vv_t(X_{k,j_k})&=\vv_\text{tx}\nonumber\\
&=(c_{k}+\alpha) \vv_{t-1}(X_{k,j_k})+\w\nonumber
\end{align}
where $(c_k+\alpha)$ is a non-zero $\GF(q)$ coefficient, and $\w$ is a vector satisfying \begin{align}\w\in \linsp\left(\Omega_{Z,k}(t-1),\Omega_{R}'\right)=\linsp\left(\Omega_{Z,k}(t),\Omega_{R}'\right).\nonumber\end{align} Following the same proof of Case~3.1.2 of Lemma~\ref{lem:decodability}, we must have Lemma~\ref{lem:decodability} holds with conditional probability one.
If the random channel realization leads to $S_{t}(X_{k,j_k})\neq S_{t-1}(X_{k,j_k})$ and $k\in S_t(X_{k,j_k})$, then we are in Case~3.2.2. Since for those $k\in B_t$ we have chosen the mixing coefficients $\{c_l:\forall l\in T\}$ satisfying \eqref{eq:ck-all-k},
following the same arguments as in \eqref{eq:alpha-2} we must be able to rewrite $\vv_t(X_{k,j_k})$ as follows.
\begin{align}
\vv_t(X_{k,j_k})&=\vv_\text{tx}\nonumber\\
&=(c_{k}+\alpha) \vv_{t-1}(X_{k,j_k})+\w'\nonumber
\end{align}
where $(c_k+\alpha)$ is a non-zero $\GF(q)$ coefficient, and $\w'$ is a vector satisfying $\w'\in \linsp\left(\Omega_{Z,k}(t-1),\Omega_{R}'\right)$. Following the same proof of Case~3.2.2 of Lemma~\ref{lem:decodability}, we must have Lemma~\ref{lem:decodability} holds with conditional probability one. Since regardless of the random channel realization, Lemma~\ref{lem:decodability} holds with probability one, we have thus shown that one can always construct the desired mixing coefficients $\{c_l:\forall l\in T\}$ provided the finite field  $\GF(q_0)$ satisfying $q_0>K$. By induction on $t$, the proof is complete.

\end{thmproof}

\section{A Key Lemma For The Proof Of Proposition~\ref{prop:cap-osf}\label{app:osf-lemma}}

Consider an arbitrary spatially independent 1-to-$K$ broadcast PEC with marginal success probabilities $0<p_1\leq p_2\leq\cdots \leq p_K$. For any $S\subseteq [K]$ and $S\neq [K]$, define
\begin{align}
L_{S}\stackrel{\Delta}{=}
\sum_{i=K-|S|}^{K}\left(\sum_{\scriptsize\begin{array}{c}\forall S_1: |S_1|=i\\ ([K]\backslash S)\subseteq S_1\subseteq [K]\end{array}}\frac{\left(-1\right)^{i-(K-|S|)}}{p_{\cup S_1}}\right).\nonumber
\end{align}

We then have the following lemma:
\begin{lemma}\label{lem:osf-main} Suppose the 1-to-$K$ broadcast PEC is spatially independent with marginal success probabilities $0<p_1\leq \cdots\leq p_K$. Consider any one-sidedly fair rate vector $(R_1,\cdots, R_K)\in\Lambda_\text{osf}$, and any non-empty subset $T\subseteq [K]$. For any $k_1, k_2\in T$ with $k_1< k_2$, we have
\begin{align}
R_{k_1}\cdot L_{T\backslash k_1}\geq R_{k_2}\cdot L_{T\backslash k_2}.\nonumber
\end{align}
\end{lemma}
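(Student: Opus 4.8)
The plan is to first trade the signed inclusion--exclusion sum defining $L_S$ for a manifestly non-negative convergent series, and then to deduce the lemma by an elementary term-by-term comparison that uses the one-sided fairness hypothesis only once.

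\emph{Step 1: a series formula for $L_S$.} Write $q_l\stackrel{\Delta}{=}1-p_l$, so that $p_{\cup V}=1-\prod_{l\in V}q_l$ for a spatially independent PEC. In the sum defining $L_S$ I would change variables from $S_1$ to its complement $U'=[K]\setminus S_1\subseteq S$, which rewrites the definition as $L_S=\sum_{U'\subseteq S}(-1)^{|S|-|U'|}\bigl(1-\prod_{l\in[K]\setminus U'}q_l\bigr)^{-1}$. Because $0<p_l$ forces $\prod_{l\in[K]\setminus U'}q_l<1$, each factor expands as a geometric series $\sum_{m\ge0}\bigl(\prod_{l\in[K]\setminus U'}q_l\bigr)^{m}$; the outer sum over $U'$ is finite, so the interchange of sums is harmless. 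Collecting the coefficient of each $m$ and using $\sum_{U\subseteq S}(-1)^{|U|}\prod_{l\in U}q_l^{\,m}=\prod_{l\in S}(1-q_l^{\,m})$, I arrive at the identity, valid for any non-empty $S\subsetneq[K]$,
\begin{align}
L_S=\sum_{m=1}^{\infty}\Bigl(\prod_{l\notin S}q_l^{\,m}\Bigr)\prod_{l\in S}\bigl(1-q_l^{\,m}\bigr),\nonumber
\end{align}
the $m=0$ term vanishing since $S\neq\emptyset$. Every summand is non-negative (as $0\le q_l\le1$), so in particular $L_S\ge0$; this also matches the probabilistic reading of $R_kL_S$ as the normalized number of ``logical time slots'' a session-$k$ packet with $k\notin S$ spends with overhearing set exactly $S$, as in Section~\ref{subsec:high-level-cap3}, and it supplies the non-negativity of the $w_{k;S\to S}$ needed in the surrounding proof of Proposition~\ref{prop:cap-osf}.

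\emph{Step 2: reduction via one-sided fairness.} Put $W\stackrel{\Delta}{=}T\setminus\{k_1,k_2\}$, so $T\setminus k_1=W\cup\{k_2\}$ and $T\setminus k_2=W\cup\{k_1\}$. Since $k_1<k_2$ we have $p_{k_1}\le p_{k_2}$, hence $q_{k_1}\ge q_{k_2}\ge0$, while one-sided fairness gives $R_{k_1}q_{k_1}\ge R_{k_2}q_{k_2}\ge0$. If $p_{k_2}=1$ then $L_{T\setminus k_2}=0$ (the factor $q_{k_2}^{\,m}$ appears in every term of its series) and the lemma is immediate from $L_{T\setminus k_1}\ge0$; otherwise $p_{k_1}\le p_{k_2}<1$, so $q_{k_1},q_{k_2}>0$, and it suffices to prove the channel-only inequality $q_{k_2}L_{W\cup\{k_2\}}\ge q_{k_1}L_{W\cup\{k_1\}}$: multiplying it by $R_{k_1}q_{k_1}\ge R_{k_2}q_{k_2}$ and cancelling the positive factor $q_{k_1}q_{k_2}$ yields $R_{k_1}L_{T\setminus k_1}\ge R_{k_2}L_{T\setminus k_2}$.

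\emph{Step 3: term-by-term comparison.} Substituting the Step-1 formula into both sides and factoring out the common non-negative quantity $C_m\stackrel{\Delta}{=}\bigl(\prod_{l\notin W\cup\{k_1,k_2\}}q_l^{\,m}\bigr)\prod_{l\in W}(1-q_l^{\,m})$, the inequality $q_{k_2}L_{W\cup\{k_2\}}\ge q_{k_1}L_{W\cup\{k_1\}}$ reduces to checking, for every $m\ge1$,
\begin{align}
q_{k_2}\bigl(1-q_{k_2}^{\,m}\bigr)q_{k_1}^{\,m}\ \ge\ q_{k_1}\bigl(1-q_{k_1}^{\,m}\bigr)q_{k_2}^{\,m}.\nonumber
\end{align}
The difference of the two sides rearranges to $q_{k_1}q_{k_2}\bigl(q_{k_1}^{\,m-1}-q_{k_2}^{\,m-1}\bigr)+q_{k_1}^{\,m}q_{k_2}^{\,m}\bigl(q_{k_1}-q_{k_2}\bigr)$, a sum of two non-negative terms because $q_{k_1}\ge q_{k_2}\ge0$. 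Summing over $m$ against the non-negative weights $C_m$ gives the claim, and hence the lemma. The only genuinely non-routine step is Step~1: recognizing that the signed sum $L_S$ collapses to the non-negative series above; once that representation is in hand, Steps~2 and~3 are purely mechanical.
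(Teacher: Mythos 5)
Your proof is correct, and it reaches the conclusion by a genuinely different route than the paper, even though the three-step skeleton is the same (a positive representation of $L_S$, a channel-only comparison inequality, then one multiplication by the one-sided fairness inequality $R_{k_1}(1-p_{k_1})\geq R_{k_2}(1-p_{k_2})$). The paper realizes $L_S$ probabilistically as $\EE\{\Gamma_S\}$ for a family of independent geometric random variables $X_1,\dots,X_K$, obtaining the formula by M\"obius inversion of $\sum_{S'\subseteq S}\EE\{\Gamma_{S'}\}=1/p_{\cup([K]\backslash S)}$, and then proves the channel-only inequality $q_{k_2}L_{T\backslash k_1}\geq q_{k_1}L_{T\backslash k_2}$ by a change-of-measure argument: swapping the laws of $X_{k_1}$ and $X_{k_2}$ and bounding the likelihood ratio by $(1-p_{k_2})/(1-p_{k_1})$ on the event $\gamma_{T\backslash k_1}>0$ (which forces $x_{k_1}\geq x_{k_2}+1$). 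Your series $L_S=\sum_{m\geq1}\bigl(\prod_{l\notin S}q_l^m\bigr)\prod_{l\in S}(1-q_l^m)$ is exactly the term-by-term evaluation of $\sum_t\prop(I_t=S)$ for independent geometrics, so the two representations coincide; but you derive it algebraically (geometric-series expansion of $1/p_{\cup([K]\backslash U')}$ plus the binomial identity $\sum_{U\subseteq S}(-1)^{|U|}\prod_{l\in U}q_l^m=\prod_{l\in S}(1-q_l^m)$), and your channel-only inequality follows from the manifestly non-negative decomposition $q_{k_1}q_{k_2}(q_{k_1}^{m-1}-q_{k_2}^{m-1})+q_{k_1}^mq_{k_2}^m(q_{k_1}-q_{k_2})\geq0$ of each summand. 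What your version buys is a fully elementary, self-contained argument that makes the non-negativity of $L_S$ (hence of the $w_{k;S\to S}$) explicit and that cleanly disposes of the degenerate case $p_{k_2}=1$, which the paper explicitly omits; what the paper's version buys is the probabilistic interpretation of $L_S$ as an expected sojourn time, which ties directly into the logical-time-slot accounting of Section~\ref{subsec:high-level-cap3} and generalizes more readily beyond the independent-geometric computation.
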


\begin{proof}
Consider $K$ independent geometric random variables $X_1$ to $X_K$ with success probability $p_1$ to $p_K$. That is, the probability mass function $F_k(t)$ of any $X_k$ satisfies
\begin{align}
F_k(t)\stackrel{\Delta}{=}\prop(X_k=t)=p_k(1-p_k)^{t-1},\nonumber
\end{align}
for all strictly positive integer $t$. For the sake of simplicity, here we omit the discussion of the degenerate case in which $p_k=1$.  We say that the geometric random trial $X_k$ is finished at time $t$ if $X_k=t$. For any $S\subseteq [K]$ and $S\neq [K]$, define three random variables
\begin{align}
Y_{[K]\backslash S}&\stackrel{\Delta}{=}\min (X_i:i\in [K]\backslash S)\label{eq:big-Y}\\
W_{S}&\stackrel{\Delta}{=}\max (X_i:i\in S)\label{eq:big-W}\\
\Gamma_S&\stackrel{\Delta}{=}Y_{[K]\backslash S}-\min(Y_{[K]\backslash S},W_{S}).\label{eq:big-Gamma}
\end{align}

\noindent {\bf Intermediate Step~1:} We will first show that
\begin{align}
L_{S}=\EE\left\{\Gamma_S\right\}.\nonumber
\end{align}
To that end, for any time $t$, we mark time $t$ by a set $I_t\stackrel{\Delta}{=}\{i\in [K]: X_i<t\}$. We then have
\begin{align}
\Gamma_S=Y_{[K]\backslash S}-\min(Y_{[K]\backslash S},W_{S})=\sum_{t=1}^\infty 1_{\{I_t=S\}}.\nonumber
\end{align}
By noting that
\begin{align}
t\leq Y_{[K]\backslash S}\Longleftrightarrow I_t\subseteq S, \nonumber
\end{align}
we also have
\begin{align}
Y_{[K]\backslash S}=\sum_{t=1}^\infty 1_{\{t\leq Y_{[K]\backslash S}\}}=\sum_{t=1}^\infty 1_{\{I_t\subseteq S\}}=\sum_{\forall S': S'\subseteq S} \Gamma_{S'}.\label{eq:Y-sum}
\end{align}
Taking the expectation of \eqref{eq:Y-sum}, we then have
\begin{align}
\forall S\subsetneq [K],~\sum_{\forall S': S'\subseteq S} \EE\left\{\Gamma_{S'}\right\}=\EE\left\{Y_{[K]\backslash S}\right\}=\frac{1}{p_{\cup ([K]\backslash S)}}.\label{eq:Gamma-sum}
\end{align}
Solving the simultaneous equations \eqref{eq:Gamma-sum}, we have
\begin{align}
\EE\left\{\Gamma_{S'}\right\}&=\sum_{i=K-|S'|}^{K}\left(\sum_{\scriptsize\begin{array}{c}\forall S_1: |S_1|=i\\ ([K]\backslash S')\subseteq S_1\subseteq [K]\end{array}}\frac{\left(-1\right)^{i-(K-|S'|)}}{p_{\cup S_1}}\right)\nonumber\\
&=L_{S'},\nonumber
\end{align}
for all $S'\subseteq [K]$ and $S'\neq [K]$.

\noindent {\bf Intermediate Step~2:} We will show that
for any non-empty subset $T\subseteq [K]$ and any $k_1, k_2\in T$ with $k_1< k_2$, we have
\begin{align}
\frac{L_{T\backslash k_1}}{1- p_{k_1}} \geq \frac{L_{T\backslash k_2}}{1- p_{k_2}}.\label{eq:LLW}
\end{align}
For any realization $(X_1,\cdots, X_K)=(x_1,\cdots, x_K)$, we use $y_{[K]\backslash S}$, $w_S$, and $\gamma_S$ to denote the corresponding values of $Y_{[K]\backslash S}$, $W_S$, and $\Gamma_S$ according to \eqref{eq:big-Y}, \eqref{eq:big-W}, and \eqref{eq:big-Gamma}, respectively.
We then have
\begin{align}
\EE\left\{\Gamma_{T\backslash k_1}\right\}&=\sum_{\forall (x_1,\cdots, x_K)} \gamma_{T\backslash k_1} \prod_{k=1}^K F_k(x_k)\nonumber\\
&=\sum_{\forall (x_1,\cdots, x_K):\gamma_{T\backslash k_1}>0} \gamma_{T\backslash k_1} \prod_{k=1}^K F_k(x_k).\label{eq:m-change-1}
\end{align}
Note that the only difference between $\EE\left\{\Gamma_{T\backslash k_1}\right\}$ and $\EE\left\{\Gamma_{T\backslash k_k}\right\}$ is the underlying measures of $X_{k_1}$ and $X_{k_2}$. Therefore, by the change of measure formula, we have
\begin{align}
\EE\left\{\Gamma_{T\backslash k_2}\right\}&=\sum_{\forall (x_1,\cdots, x_K):\gamma_{T\backslash k_1}>0} \gamma_{T\backslash k_1}\cdot\nonumber\\
&\hspace{1cm} \left(\frac{F_{k_2}(x_{k_1})}{F_{k_1}(x_{k_1})}\frac{F_{k_1}(x_{k_2})}{F_{k_2}(x_{k_2})}\right)\prod_{k=1}^K F_k(x_k).\label{eq:m-change-2}
\end{align}
Note that when $\gamma_{T\backslash k_1}>0$, we must have
$y_{([K]\backslash T)\cup\{k_1\}}>w_{T\backslash k_1}$,
which in turn implies that $x_{k_1}\geq x_{k_2}+1$.
We then have
\begin{align}
\frac{F_{k_2}(x_{k_1})}{F_{k_1}(x_{k_1})}\frac{F_{k_1}(x_{k_2})}{F_{k_2}(x_{k_2})}&=\frac{p_{k_2}(1-p_{k_2})^{x_{k_1}}}{p_{k_1}(1-p_{k_1})^{x_{k_1}}}\frac{p_{k_1}(1-p_{k_1})^{x_{k_2}}}{p_{k_2}(1-p_{k_2})^{x_{k_2}}}\nonumber\\
&=\left(\frac{1-p_{k_2}}{1-p_{k_1}}\right)^{x_{k_1}-x_{k_2}}\nonumber\\
&\leq \left(\frac{1-p_{k_2}}{1-p_{k_1}}\right),\label{eq:m-change-3}
\end{align}
where the last inequality follows from $p_{k_1}\leq p_{k_2}$ and $x_{k_1}\geq x_{k_2}+1$. Combining \eqref{eq:m-change-1}, \eqref{eq:m-change-2}, and \eqref{eq:m-change-3}, we thus have
\begin{align}
\EE\{\Gamma_{T\backslash k_1}\}\left(\frac{1-p_{k_2}}{1-p_{k_1}}\right)\geq \EE\{\Gamma_{T\backslash k_2}\},\nonumber
\end{align}
which implies \eqref{eq:LLW}.

{\bf Final Step~3:} Since $(R_1,\cdots, R_K)\in \Lambda_\text{osf}$, by the definition of one-sided fairness, we have
\begin{align}
R_{k_1}(1-p_{k_1})\geq R_{k_2}(1-p_{k_2}). \label{eq:RRR}
\end{align}
Multiplying \eqref{eq:LLW} and \eqref{eq:RRR} together, the proof of Lemma~\ref{lem:osf-main} is complete.

\end{proof}

\bibliography{ntwkcoding,nc_sys,chihw}

\begin{thebibliography}{10}
\providecommand{\url}[1]{#1}
\csname url@samestyle\endcsname
\providecommand{\newblock}{\relax}
\providecommand{\bibinfo}[2]{#2}
\providecommand{\BIBentrySTDinterwordspacing}{\spaceskip=0pt\relax}
\providecommand{\BIBentryALTinterwordstretchfactor}{4}
\providecommand{\BIBentryALTinterwordspacing}{\spaceskip=\fontdimen2\font plus
\BIBentryALTinterwordstretchfactor\fontdimen3\font minus
  \fontdimen4\font\relax}
\providecommand{\BIBforeignlanguage}[2]{{%
\expandafter\ifx\csname l@#1\endcsname\relax
\typeout{** WARNING: IEEEtranS.bst: No hyphenation pattern has been}%
\typeout{** loaded for the language `#1'. Using the pattern for}%
\typeout{** the default language instead.}%
\else
\language=\csname l@#1\endcsname
\fi
#2}}
\providecommand{\BIBdecl}{\relax}
\BIBdecl

\bibitem{AvestimehrDiggaviTse07}
S.~Avestimehr, S.~Diggavi, and D.~Tse, ``A deterministic approach to wireless
  relay network,'' in \emph{Proc.\ 45th Annual Allerton Conf.\ on Comm.,
  Contr., and Computing}.\hskip 1em plus 0.5em minus 0.4em\relax Monticello,
  IL, September 2007.

\bibitem{Bergmans73}
P.~Bergmans, ``Random coding theorem for broadcast channels with degraded
  components,'' \emph{IEEE Trans.\ Inform.\ Theory}, vol.~19, pp. 197--207,
  March 1973.

\bibitem{CadambeJafar08}
V.~Cadambe and S.~Jafar, ``Interference alignment and degrees of freedom of the
  $k$-user interference channel,'' \emph{IEEE Trans.\ Inform.\ Theory},
  vol.~54, no.~8, pp. 3425--3441, August 2008.

\bibitem{ChouWuJain03}
P.~Chou, Y.~Wu, and K.~Jain, ``Practical network coding,'' in \emph{Proc.\ 41st
  Annual Allerton Conf.\ on Comm., Contr., and Computing}.\hskip 1em plus 0.5em
  minus 0.4em\relax Monticello, IL, October 2003.

\bibitem{Cover98}
T.~M. Cover, ``Comments on broadcast channels,'' \emph{IEEE Trans. Inform.
  Theory}, vol.~44, no.~6, pp. 2524--2530, Oct. 1998.

\bibitem{DanaGowaikarPalankiHassibiEffros06}
A.~Dana, R.~Gowaikar, R.~Palanki, B.~Hassibi, and M.~Effros, ``Capacity of
  wireless erasure networks,'' \emph{IEEE Trans.\ Inform.\ Theory}, vol.~52,
  no.~3, pp. 789--804, March 2006.

\bibitem{DasVishwanathJafarMarkopoulou10}
A.~Das, S.~Vishwanath, S.~Jafar, and A.~Markopoulou, ``Network coding for
  multiple unicasts: An interference alignment approach,'' in \emph{Proc.\ IEEE
  Int'l Symp.\ Inform.\ Theory}.\hskip 1em plus 0.5em minus 0.4em\relax Austin,
  Texas, USA, June 2010.

\bibitem{Gamal78}
A.~{El Gamal}, ``The feedback capacity of degraded broadcast channels,''
  \emph{IEEE Trans.\ Inform.\ Theory}, vol.~25, no.~2, pp. 379--381, March
  1978.

\bibitem{GeorgiadisTassiulas09}
L.~Georgiadis and L.~Tassiulas, ``Broadcast erasure channel with feedback ---
  capacity and algorithms,'' in \emph{Proc.\ 5th Workshop on Network Coding,
  Theory, \& Applications (NetCod)}.\hskip 1em plus 0.5em minus 0.4em\relax
  Lausanne, Switzerland, June 2009, pp. 54--61.

\bibitem{HoMedardKoetterKargerEffrosShiLeong06}
T.~Ho, M.~M\'{e}dard, R.~Koetter, D.~Karger, M.~Effros, J.~Shi, and B.~Leong,
  ``A random linear network coding approach to multicast,'' \emph{IEEE Trans.\
  Inform.\ Theory}, vol.~52, no.~10, pp. 4413--4430, October 2006.

\bibitem{KattiRahulHuKatabiMedardCrowcroft06}
S.~Katti, H.~Rahul, W.~Hu, D.~Katabi, M.~M\'{e}dard, and J.~Crowcroft, ``{XORs}
  in the air: Practical wireless network,'' in \emph{Proc.\ ACM Special
  Interest Group on Data Commun.\ (SIGCOMM)}, 2006.

\bibitem{KornerMarton77}
J.~{K\"{o}rner} and K.~Marton, ``General broadcast channels with degraded
  message sets,'' \emph{IEEE Trans.\ Inform.\ Theory}, vol.~23, no.~1, pp.
  60--64, January 1977.

\bibitem{KoutsonikolasWangHu10}
D.~Koutsonikolas, C.-C. Wang, and Y.~Hu, ``{CCACK}: Efficient network coding
  based opportunistic routing through cumulative coded acknowledgments,'' in
  \emph{Proc.\ 29th IEEE Conference on Computer Communications
  (INFOCOM)}.\hskip 1em plus 0.5em minus 0.4em\relax San Diego, USA, March
  2010, pp. 1--9.

\bibitem{KoutsonikolasWangHu10a}
------, ``{ECR:} enhanced coded retransmission for downlink access-point
  networks,'' in \emph{ACM CoNEXT}, 2010, to be submitted.

\bibitem{LarssonJohansson06}
P.~Larsson and N.~Johansson, ``Multi-user {ARQ},'' in \emph{Proc.\ IEEE
  Vehicular Technology Conference}, 2006.

\bibitem{LiYeungCai03}
S.-Y. Li, R.~Yeung, and N.~Cai, ``Linear network coding,'' \emph{IEEE Trans.\
  Inform.\ Theory}, vol.~49, no.~2, pp. 371--381, February 2003.

\bibitem{LiWangLin10}
X.~Li, C.-C. Wang, and X.~Lin, ``Throughput and delay analysis on uncoded and
  coded wireless broadcast with hard deadline constraints,'' in \emph{Proc.\
  29th IEEE Conference on Computer Communications (INFOCOM)}.\hskip 1em plus
  0.5em minus 0.4em\relax San Diego, USA, March 2010, pp. 1--5.

\bibitem{OzarowCheong84}
L.~Ozarow and S.~Leung-Yan-Cheong, ``An achievable region and outer bound for
  the {Gaussian} broadcast channel with feedback,'' \emph{IEEE Trans.\ Inform.\
  Theory}, vol.~30, no.~4, July 1984.

\bibitem{RoznerIyerMehtaQiuJafry07}
E.~Rozner, A.~Iyer, Y.~Mehta, L.~Qiu, and M.~Jafry, ``{ER}: Efficient
  retranmission scheme for wireless {LANs},'' in \emph{Proc.\ ACM
  CoNEXT}.\hskip 1em plus 0.5em minus 0.4em\relax New York, USA, December 2007.

\bibitem{SundararajanShahMedard07}
J.~Sundararajan, D.~Shah, and M.~M\'{e}dard, ``{ARQ} for network coding,'' in
  \emph{Proc.\ IEEE Int'l Symp.\ Inform.\ Theory}.\hskip 1em plus 0.5em minus
  0.4em\relax Toronto, Canada, July 2008.

\bibitem{Wang10a}
C.-C. Wang, ``On the capacity of wireless 1-hop intersession network coding ---
  a broadcast packet erasure channel approach,'' in \emph{Proc.\ IEEE Int'l
  Symp.\ Inform.\ Theory}.\hskip 1em plus 0.5em minus 0.4em\relax Austin, TX,
  USA, June 2010.

\bibitem{WangKhreishahShroff09}
C.-C. Wang, A.~Khreishah, and N.~Shroff, ``Cross-layer optimizations for
  intersession network coding on practical 2-hop relay networks,'' in
  \emph{Proc.\ 43rd Asilomar Conference on Signals, Systems and
  Computers}.\hskip 1em plus 0.5em minus 0.4em\relax Pacific Grove, CA, USA,
  November 2009, pp. 771--775.

\bibitem{WeingartenSteinbergShamai06}
H.~Weingarten, Y.~Steinberg, and S.~Shamai, ``The capacity region of the
  {Gaussian} multiple-input multiple-output broadcast channel,'' \emph{IEEE
  Trans.\ Inform.\ Theory}, vol.~52, no.~9, pp. 3936--3964, September 2006.

\bibitem{Wu07}
Y.~Wu, ``Broadcasting when receivers know some messages a priori,'' in
  \emph{Proc.\ IEEE Int'l Symp.\ Inform.\ Theory}.\hskip 1em plus 0.5em minus
  0.4em\relax Nice, France, June 2007, pp. 1141--1145.

\bibitem{XueYang08}
F.~Xue and X.~Yang, ``Network coding and packet-erasure broadcast channel,'' in
  \emph{Proc. the 5th IEEE Annual Communications Society Conference on Sensor,
  Mesh and Ad Hoc Communications and Networks Workshops (SECON)}.\hskip 1em
  plus 0.5em minus 0.4em\relax San Francisco, CA, USA, July 2008.

\end{thebibliography}
\bibliographystyle{IEEEtranS}

\end{document}